\documentclass[11pt]{article}
\usepackage[margin=1in]{geometry}
\usepackage[T1]{fontenc}
\usepackage{lmodern,microtype,amsmath,amssymb,amsthm,booktabs,array,enumitem}
\usepackage[colorlinks=true,linkcolor=blue,citecolor=blue,urlcolor=blue]{hyperref}
\setlist{itemsep=3pt,topsep=4pt}
\newcommand{\F}{\mathbb F}
\newcommand{\Z}{\mathbb Z}
\newcommand{\Q}{\mathbb Q}
\newcommand{\R}{\mathbb R}
\newcommand{\wt}{\operatorname{wt}}
\newcommand{\rank}{\operatorname{rank}}
\newcommand{\dist}{\operatorname{dist}}
\newcommand{\Enc}{\operatorname{Enc}}
\newcommand{\ind}{\operatorname{ind}}
\newcommand{\Lift}{\operatorname{Lift}}
\newcommand{\poly}{\operatorname{poly}}
\newtheorem{theorem}{Theorem}[section]
\newtheorem{lemma}[theorem]{Lemma}
\newtheorem{proposition}[theorem]{Proposition}
\newtheorem{corollary}[theorem]{Corollary}
\theoremstyle{definition}

\newtheorem{example}[theorem]{Example}
\theoremstyle{remark}
\newtheorem{remark}[theorem]{Remark}

\title{Mathematics and Efficiency of Song's CVP Hardness Reduction}
\author{Shuhong Gao\thanks{AI Use Disclosure: The main idea and framework are due to the author, and GPT-6 Astra assisted in the research and the writing of the paper.}\\
\small School of Mathematical and Statistical Sciences\\
\small Clemson University, Clemson, SC 29634-0975\\
\small Email: \href{mailto:sgao@clemson.edu}{sgao@clemson.edu}}
\date{October 28, 2026}
\begin{document}
\maketitle
\begin{abstract}
Song proves that, for every fixed $0<\epsilon<1$, the closest
codeword problem (CCP) for binary linear codes under Hamming
distance is NP-hard to approximate within $N^{1-\epsilon}$,
where $N$ is the code length, under deterministic polynomial-time
many-one reductions~\cite{song}. The code-to-lattice construction
then gives hardness within $N^{1/2-\epsilon/2}$ for the Euclidean
closest vector problem (CVP),
with the same $N$ as lattice rank.
This reaches every fixed Euclidean hardness exponent below one half.
Aharonov--Regev~\cite{ar}, with the constant refined by
Gao~\cite{gao-ar}, place GapCVP at factor $cN^{1/2}$ in
$\mathrm{NP}\cap\mathrm{coNP}$ for every fixed
$c>2/(\pi\sqrt3)\approx0.3676$. Thus NP-hardness at that scale
would imply $\mathrm{NP}=\mathrm{coNP}$.

This paper has two goals: to explain Song's theorem and its
mathematics for graduate students, and to make the reduction
more efficient.
The reduction turns satisfiability into a system of quadratic
equations and then into CCP for binary linear codes, and recovers
assignments from recurrence relations of algebraic moments.
We adapt the method to linear codes over any fixed finite field
of characteristic $p$. A common recurrence for each block column
gives a moment window $L=c_p(\kappa-1)+1$, where
$c_p=\max\{2,p-1\}$ and $\kappa$ is the rank cutoff.
Together with ordinary digit products and flexible sampling,
this reduces interpolation degrees and output lengths.
For CCP gap $N^{1-\epsilon}$, our explicit schedule gives code
length and lattice rank at most $n^{4/\epsilon^2+O(1/\epsilon)}$;
Song's prescribed binary schedule gives the bound
$n^{8/\epsilon^2+O(1/\epsilon)}$. Here $n$ counts variables in
a common quadratic system, equal to the number of original variables
plus clauses for three-literal 3SAT. These are feasible dimension
bounds, not optimality claims or running-time exponents.
The limiting hardness exponents remain Song's.
\end{abstract}
\clearpage
\begingroup\small
\setcounter{tocdepth}{2}
\tableofcontents
\endgroup
\clearpage

\section{Introduction}\label{sec:introduction}

This exposition has two goals. The first is to explain Song's
theorem and its mathematics to graduate students who
know linear algebra and polynomials. The second is to make the
reduction more efficient: keep the same logical mechanism while
reducing the interpolation degree, sample count, and output length.
We derive the resulting gaps from the full output length, not just
from the number of variables in the starting formula.

\begin{theorem}[Song~\cite{song}]\label{thm:song}
For every fixed $0<\epsilon<1$, the binary closest codeword problem
(CCP) and binary
syndrome decoding are NP-hard to approximate within $N^{1-\epsilon}$,
where $N$ is binary length. Euclidean CVP is NP-hard to
approximate within $N^{1/2-\epsilon/2}$,
where $N$ is lattice rank. Both results hold under deterministic
polynomial-time many-one reductions from 3SAT.
\end{theorem}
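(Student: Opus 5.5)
We follow the pipeline sketched in the abstract. The plan has three stages: encode 3SAT as a quadratic system, embed that system into a closest-codeword instance whose distance gap is polynomial, then transfer to CVP and syndrome decoding by standard reductions.

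\textbf{Step 1: quadratic system.} First we reduce 3SAT to a common quadratic system over $\F_2$. Each variable $x_i$ satisfies $x_i^2=x_i$. Each clause gets an auxiliary variable, so every clause constraint becomes a degree-two equation. This gives $n$ variables, with $n$ equal to the number of original variables plus the number of clauses.

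\textbf{Step 2: code construction.} Next we build a binary code in block form. For each sample point, a block column encodes the "lift" $x\mapsto (x, x\otimes x, \dots)$ of a candidate assignment through digit products. The target vector is chosen so that three things hold.
\begin{enumerate}
\item \emph{Completeness.} A satisfying assignment yields a codeword at distance about $D$ from the target, where $D$ is roughly the number of blocks.
\item \emph{Soundness.} Any codeword within $D\cdot N^{1-\epsilon}$ of the target must have low rank in most block columns, below a cutoff $\kappa$.
\item \emph{Decoding.} Low rank forces the algebraic moments of the block entries to satisfy a linear recurrence of order less than $\kappa$. Interpolating over enough sample points then recovers a genuine $\{0,1\}$ assignment that satisfies the quadratic system, which is a contradiction.
\end{enumerate}
Choosing the number of samples and the interpolation degree as $n^{O(1/\epsilon)}$, with repetition to amplify, makes the gap ratio $N^{1-\epsilon}$ once $N=n^{O(1/\epsilon^2)}$. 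Everything is deterministic because the sample points are explicit field elements.

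\textbf{Step 3: transfers.} Binary syndrome decoding follows immediately from CCP: take a parity-check matrix $H$ and the syndrome $Ht$. For CVP, use the lattice $C+2\Z^N$ with target $t$. Hamming distance $d$ corresponds to Euclidean distance at most $\sqrt d$. Conversely, a lattice vector at Euclidean distance $r<1$ per coordinate reduces mod $2$ to a codeword at Hamming distance at most $r^2$, after the usual rounding of coordinates to $\{0,\pm1\}$ relative to $t$. Taking square roots, the gap $N^{1-\epsilon}$ becomes $N^{1/2-\epsilon/2}$, and the rank is still $N$.

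\textbf{Main obstacle.} The hardest part is the soundness argument in Step 2: showing that every near-target codeword forces a low-rank moment structure from which a true assignment can be extracted. This requires the Newton/recurrence identities for moments in characteristic $2$, where the window $L=2(\kappa-1)+1$ matters. It also requires controlling cancellations between blocks. My first attempt would be a counting argument: most blocks must individually be close. Then I would apply a Hankel-rank lemma per block column, showing that a sequence of $L$ consecutive moments of a sum of fewer than $\kappa$ rank-one terms satisfies a common recurrence, so the roots determine the assignment uniquely.
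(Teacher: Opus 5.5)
Your outline has the right overall shape, but Step 2 never specifies a construction whose linear constraints can certify anything, and the soundness mechanism you sketch would not go through. The missing idea is how one table of unknowns can represent the moments $z_iz_j\xi^k$ linearly. In the paper, $x$ is encoded as a single element $\xi=\Enc(x)=\sum_i\alpha_ix_i$ of an extension $E=K(\zeta)$ with $[E:\F_p]=ac\ge n$. Frobenius then writes $\xi^k=\prod_hP_{x,h}(\zeta)^{e_h}$ for the base-$d$ digits of $k$, where each $P_{x,h}$ is linear in $x$ and has degree less than $c$.

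The table is indexed by a pair $(i,j)$, a label $\boldsymbol\lambda\in K^r$, and a sample $\sigma\in S$. The honest table puts $zz^\top$ at the label $(P_{x,h}(\sigma))_h$. The constraints require the readouts $\sum_{\boldsymbol\lambda}T_{ij,\boldsymbol\lambda,\sigma}\prod_h\lambda_h^{e_h}$ to interpolate to polynomials of degree at most $D$, whose values at $\zeta$ define the moments $\eta_{ij,k}$.

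Soundness needs low rank of the moment matrix at $\zeta$, which is not a sample point. A count showing that most blocks are individually light says nothing there. The paper gets this from the degree condition: $\wt(T)\ge\sum_{\sigma}\rank\mathcal U(\sigma)\ge\rho(|S|-D)$. The proof bounds the rank loss at the samples by the zero multiplicities of a $\rho\times\rho$ minor of degree at most $\rho D$. Hence $\rho=\rank\mathcal U(X)<\kappa$ whenever the weight is below $R_0=\kappa(|S|-D)$, and specializing at $\zeta$ cannot raise the rank.

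Your recovery step runs in the wrong direction. You apply a Hankel lemma to ``a sum of fewer than $\kappa$ rank-one terms,'' but that structure is exactly what must be proved for an arbitrary light feasible table. Low rank only yields a recurrence $\chi_j$ shared by a block column. Its root expansion can contain nilpotent (repeated-root) terms, with coefficients in a splitting field. The paper proceeds in four steps:
\begin{enumerate}
\item The Frobenius constraint $\eta_{ij,2k}=\eta_{ij,k}^2$, at indices the window $L=2\kappa-1$ keeps inside the recurrence's valid range, kills the nilpotent terms and forces the coefficients into $\F_2$.
\item Vandermonde isolation of the symmetry and assignment-encoding constraints $\eta_{0j,k+1}=\sum_i\alpha_i\eta_{ij,k}$ forces each root's coefficient matrix to be $a_\beta z_\beta z_\beta^\top$ with $\beta=\Enc(x_\beta)$.
\item The quadratic moment equations, isolated on the common root set $\mathcal E=\mathcal E_{00}$, then give $Q_\nu(z_\beta)=0$.
\item The normalization $\sum_{\boldsymbol\lambda}T_{00,\boldsymbol\lambda,\sigma}=1$ rules out an empty root set.
\end{enumerate}
Without normalization the zero table is feasible with weight zero. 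None of these constraints appears in your plan, and the assignment is read off from recurrence roots at $\zeta$, not by interpolating over samples.

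Finally, ``repetition to amplify'' is not a valid step: repeating a CCP instance scales the YES and NO distances equally, so the gap is unchanged. The paper needs no amplification. It has $R_0/R_1\ge3\kappa/(4(n+1)^2)$ with $\kappa=\Theta(d^r)$, against $N=(n+1)^2q^r|S|$ with $q,|S|=\Theta(dn/\log n)$. Taking $d\approx n^A$ gives exponent $(Ar-2)/(A(r+1)+r+3)$, which tends to $r/(r+1)$ as $A\to\infty$. It is the number $r$ of label axes that pushes the exponent toward one.

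Step 3 is fine in substance, since $\|t-v\|_2^2\ge\wt\bigl((t-v)\bmod2\bigr)$. The YES radius $\sqrt{R_1}$ must be rounded up to an integer, which the exponent surplus absorbs.
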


Song's proof~\cite[Sections 3--8]{song} combines a linear-code
construction with a rank dichotomy and algebraic recovery of
satisfying assignments. We explain this mechanism and refine its
parameters and characteristic. The following theorem records our
fixed-field extension and explicit output bound; the hardness
exponents themselves are those of Theorem~\ref{thm:song}.

\begin{theorem}[Fixed-field extension and explicit output lengths]
\label{thm:refined-main}
Fix a finite field $F$ of characteristic $p$ and $0<\epsilon<1$,
independently of the input size. Set
\[
 r=\left\lceil\frac2\epsilon\right\rceil,\qquad
 \tau=\epsilon(r+1)-1,\qquad
 A=1+\left\lfloor\frac{(1-\epsilon)(r+3)+2}{\tau}\right\rfloor,
 \qquad H=A(r+1)+r+3.
\]
For a 3SAT formula with $s$ variables and $m_3$ three-literal
clauses, put $n=s+m_3$. For all sufficiently large $n$, there
are deterministic polynomial-time reductions to CCP over $F$
with gap $N^{1-\epsilon}$ and to Euclidean CVP with gap
$N^{(1-\epsilon)/2}$, where the code length and lattice rank are
\[
 N=\Theta_{F,\epsilon}\!\left(\frac{n^H}{(\log n)^{r+1}}\right),
 \qquad H=\frac4{\epsilon^2}+O(1/\epsilon).
\]
The construction over the prime subfield uses
$L=c_p(\kappa-1)+1$, $c_p=\max\{2,p-1\}$.
The lattice is formed from that prime-field instance; extension
of the code alphabet to $F$ preserves length and minimum
feasible Hamming weight. Bounded smaller instances have fixed
outputs with the same gap promises.
\end{theorem}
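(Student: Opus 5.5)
The proof will follow Song's pipeline, with the moment recurrence adapted to characteristic $p$. The exponent bookkeeping is then made explicit so that the final output length is measured against the full construction.

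\emph{Encoding into quadratic equations.} First I will reduce a 3SAT formula with $s$ variables and $m_3$ three-literal clauses to a common system of quadratic equations over the prime field $\F_p$. The system has $n=s+m_3$ variables: one auxiliary variable per three-literal clause lets each clause be written as a degree-two constraint. Satisfying assignments correspond to $\{0,1\}$-solutions of this system.

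\emph{The code construction.} Following Song, I will build from the system a linear code over $\F_p$. Its block columns are indexed by sample points, and its entries are ordinary digit products of the variables, which plays the role of tensoring. Interpolation degree is controlled by the rank cutoff $\kappa$.

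\emph{Completeness.} A satisfying assignment yields a feasible codeword of small weight. This is a direct computation.

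\emph{Soundness.} Every feasible word of weight below the gap threshold must come from a low-rank configuration. The rank dichotomy then lets me recover an assignment from the algebraic moments of the column blocks. The key new ingredient is that each block column satisfies a common linear recurrence of order governed by $\kappa$. In characteristic $p$, the power-sum moments determine the multiset of entries only up to order $c_p(\kappa-1)+1$. Here $c_p=\max\{2,p-1\}$ reflects that Newton identities fail beyond $p-1$, while squaring already suffices when $p=2$. So a window of $L=c_p(\kappa-1)+1$ consecutive vanishing moments forces rank $<\kappa$, and hence a decodable assignment. I expect this step to be the main obstacle: the recurrence must hold uniformly for all columns, and the window must be shown sufficient in small characteristic, where power sums degenerate. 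I would first prove it over $\F_p$ by viewing the block as a sum of at most $\kappa-1$ geometric sequences with multiplicities reduced mod $p$, then bound the annihilating polynomial's degree.

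\emph{Parameter schedule and transfer.} I will fix $r=\lceil 2/\epsilon\rceil$ and sample flexibly so that the ratio between the soundness and completeness weights is $n^{\tau(A)}/\mathrm{polylog}$. The choice $A=1+\lfloor((1-\epsilon)(r+3)+2)/\tau\rfloor$ then makes this ratio exceed $N^{1-\epsilon}$, where $N=\Theta(n^H/(\log n)^{r+1})$ and $H=A(r+1)+r+3$. This is an arithmetic verification using $\tau=\epsilon(r+1)-1\geq 1$ up to rounding. Expanding gives $A\approx 2(1-\epsilon)/\epsilon^2\cdot$ and $H=4/\epsilon^2+O(1/\epsilon)$. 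For large $n$ the $\log$ factors are absorbed by the strict floor. Small $n$ is handled by a fixed table of outputs.

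To reach a general $F$, I will extend scalars from $\F_p$ to $F$. A nearest feasible word over $F$ can be projected coordinatewise via an $\F_p$-linear functional fixing $\F_p$, and this projection does not increase Hamming weight. Hence length and minimum feasible weight are preserved.

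Finally, I will apply the standard code-to-lattice (Construction A) map to the prime-field instance. This squares the distances, which converts the CCP gap $N^{1-\epsilon}$ into the CVP gap $N^{(1-\epsilon)/2}$ with the same rank $N$.
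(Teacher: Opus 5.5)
Your outline has the right overall shape: quadratics, a Song-style code, a shared block-column recurrence, projection $F\to\F_p$ for the alphabet, and Construction A. But the soundness step, which is exactly where this theorem goes beyond Song, has a genuine gap, and your explanation of the window is wrong.

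\textbf{Where the window comes from.} The bound $\rho=\rank\mathcal U(X)<\kappa$ does not come from the window at all. It comes from the weight, via $\wt(T)\ge\sum_{\sigma\in S}\rank\mathcal U(\sigma)\ge\rho(|S|-D)$, using sample ranks and determinant multiplicity. So your sentence that $L$ consecutive vanishing moments force rank $<\kappa$ reverses the logic, and Newton identities play no role. The window is needed afterwards, for two separate reasons.
\begin{itemize}
\item The source of $p-1$ is Frobenius. The first dependent column of the moment-matrix block column $B_j=(\eta_{ij,u+b})_{(i,u),b}$ (indexed by $j$, not by samples) gives one recurrence $\chi_j$ of degree $\ell_j\le\rho$ for all $n+1$ sequences in that column. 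It agrees with the data only for $k\le L+\ell_j-1$. The moments lie in $E$, and to obtain a root expansion with weights in $\F_p$ you must use Frobenius in $\Omega[Z]/(\chi_j)$ to kill its nilpotent part. That requires $\eta_{ij,pb}=\eta_{ij,b}^p$ for all $b<\ell_j$ inside that prefix, i.e.\ $p(\ell_j-1)\le L+\ell_j-2$.
\item The source of the $2$ is symmetry, and it is what fixes $c_2=2$. Symmetry compares $\eta_{ij}$ (column $j$) with $\eta_{ji}$ (column $i$), with up to $2(\kappa-1)$ roots in total. Vandermonde isolation therefore needs $2(\kappa-1)\le L-1$.
\end{itemize}
Your plan to ``view the block as a sum of at most $\kappa-1$ geometric sequences'' presupposes the output of the Frobenius step.

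\textbf{Missing recovery.} ``Hence a decodable assignment'' omits the whole recovery argument.
\begin{itemize}
\item Isolation in the symmetry and encoding relations gives $A_\beta=a_\beta z_\beta z_\beta^\top$ with $\beta=\Enc(x_\beta)$ and $a_\beta\in\F_p^\times$.
\item The nonzero $(0,0)$ entries put every active root in $\mathcal E_{00}$, so the global root set has size at most $\rho<L$. Only this lets you isolate each quadratic moment equation, which mixes all columns, to get $Q_\nu(z_\beta)=0$.
\item For odd $p$ you need the Booleanity forms to force $x_\beta\in\{0,1\}^n$.
\item You need normalization, $\sum_\beta a_\beta=\eta_{00,0}=1$, to exclude an empty root set.
\end{itemize}

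\textbf{Exponent bookkeeping.} The length must be derived from the construction, not asserted. Here $N=(n+1)^2q^r|S|$ with labels in $K^r$, $d$ the least power of $p$ with $d\ge n^A$, and $K=\F_q$ the least field with $q\ge8r(d-1)(c-1)$, where $c=\max\{2,\lceil n/\log_pq\rceil\}$. This gives $q=\Theta(dn/\log n)$ and $|S|=4D=\Theta(q)$, which is where $(\log n)^{r+1}$ comes from. With $\kappa=1+\lfloor(d^r-1)/(c_p+1)\rfloor=\Theta(n^{Ar})$, the ratio is $R_0/R_1=\kappa(|S|-D)/(|S|(n+1)^2)=\Theta(n^{Ar-2})$, with no polylogarithmic loss. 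The check is the identity $(Ar-2)-(1-\epsilon)H=A\tau-\bigl((1-\epsilon)(r+3)+2\bigr)>0$. Your expansion $A\approx2(1-\epsilon)/\epsilon^2$ is wrong: $\tau\ge1+\epsilon$, so $A=2/\epsilon+O(1)$. That is what gives $H=4/\epsilon^2+O(1/\epsilon)$; your $A$ would make $H$ of order $\epsilon^{-3}$.

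\textbf{Lattice transfer.} Construction A does not square distances for $p\ge5$. For $C=\{0\}\subset\F_5$ and target $2$, the Hamming distance is one but the squared distance is four. Use instead:
\begin{itemize}
\item $\|t-v\|_2^2\ge\wt\bigl((t-v)\bmod p\bigr)\ge R_0$ on NO instances;
\item the fact that the honest YES table is a $\{0,1\}$-vector, for the bound $\le R_1$;
\item the rounding $\lceil\sqrt{R_1}\rceil\le2\sqrt{R_1}$, absorbed by the strict exponent surplus.
\end{itemize}
An inconsistent affine system also needs the same-length replacement instance, since it supplies no target.
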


The proof is given by Theorem~\ref{thm:table-reduction}, the
lattice transfer in Section~\ref{sec:lattice}, and the schedules
and alphabet extension in Section~\ref{sec:gap}. Our changes
use a shared recurrence within each block column, ordinary digits
for the factors of the moment matrix, and sampling on any
sufficiently large subset of the coefficient field. The recovery
argument also extracts a consequence of Song's coefficient-matrix
structure: every active root occurs in the normalization sequence
$(0,0)$. This removes quadratic sparsity from soundness.

\paragraph{Scope of the comparison and sources.}
Section~\ref{sec:gap} compares explicit feasible schedules,
using Song's prescribed coefficient field and our least admissible
one. It does not optimize either construction or identify dimension
exponents with running-time exponents. Numbered references to
Song~\cite{song} refer to the 31-page manuscript bearing the
title-page date August 7, 2026. Song's result and the recent
refinements of Gao and Wan cited below are preprints; their
publication status is distinguished in the bibliography.

\paragraph{The route from logic to geometry.}
The construction follows
\[
 \begin{gathered}
 \text{3SAT}\ \longrightarrow\ \text{integer Boolean quadratic equations}\\
 \longrightarrow\ \text{linear codes over }\F_p\ \text{(CCP)}
       \ \longrightarrow\ \text{lattices (CVP)}.
 \end{gathered}
\]
The first reduction is valid over the integers and remains valid
over any field under the natural map on coefficients.

\paragraph{Three quantities to keep separate.}
For a 3SAT instance with $s$ variables, let $m_2,m_3$ count its
two- and three-literal clauses, and put $m=m_2+m_3$.
The quadratic system has $n=s+m_3$ variables, including one
auxiliary variable per three-literal clause and excluding the
homogenizing coordinate. When every clause has three literals,
$n=s+m$.
Write $K=\F_q$ and let $S\subseteq K$ be the sample set.
The encoding uses $r$ axes. The full
output length and weight thresholds will be
\begin{equation}\label{eq:three}
 \boxed{N=(n+1)^2q^r|S|,\qquad
 R_1=|S|(n+1)^2,\qquad R_0=\kappa(|S|-D).}
\end{equation}
Here $D$ is an interpolation-degree budget and $\kappa$ a rank
threshold. Completeness gives weight at most $R_1$ on a YES
instance; soundness excludes weight below $R_0$ on a NO instance.
For $|S|\ge4D$, the certified Hamming ratio is at least
$3\kappa/(4(n+1)^2)$. Its Euclidean counterpart is the square root.

\paragraph{Notation.}
Fix the prime $p$ before the input size grows.
Write $c_p=\max\{2,p-1\}$ for the window constant; $c$ without
a subscript denotes the extension degree $[E:K]$.
We reserve $t$ for target vectors, use $k$ for power and sequence
indices, $\nu$ for quadratic-equation indices, and $\ell$ for
recurrence orders. All vectors are columns, and matrices act on
the left. We use
$zz^\top$ for an outer product and $z^\top z$ for a scalar.
Rank over an integral domain means fraction-field rank; it is
unchanged by injective extension of fields, so we omit field
subscripts. We use $0^0=1$ and $\deg0=-\infty$.

\paragraph{Organization.}
Section~\ref{sec:quadratics} gives the quadratic encoding and the
intuition for the code construction. Section~\ref{sec:assignment-encoding}
sets up the fields and powers. Section~\ref{sec:table} contains
the seven conditions, the reduction theorem, and completeness.
Section~\ref{sec:rank} proves soundness through rank bounds,
recurrences, and assignment recovery. Sections~\ref{sec:lattice}
and~\ref{sec:gap} give the code-to-lattice transfer and the efficiency
comparison, and Section~\ref{sec:conclusions} discusses open problems.
All background material is collected in
Appendix~\ref{app:mathematics-background}, beginning with the
computational problems in Appendix~\ref{sec:computational} and
continuing with field theory, polynomial rank bounds, quotient
algebras, recurrences, and BCH decoding. It can be read first or
consulted as needed. Appendix~\ref{app:notation} collects notation
and the main proof dependencies.

\paragraph{A shorter reading route.}
Readers familiar with codes and lattices may start with
Sections~\ref{sec:quadratics}--\ref{sec:rank}, consulting
Appendix~\ref{sec:field-foundation} for field arithmetic,
Appendix~\ref{sec:crt-foundation} for quotient algebras, and
Lemma~\ref{lem:atoms} for the root expansion. The BCH material
in Appendix~\ref{sec:bch-background} is optional: it illustrates
the same algebra but neither decoder is an input to the reduction.

\section{3SAT and sparse quadratic equations}\label{sec:quadratics}

The signed equations below have integer coefficients. They may be
interpreted over $\Z$ or over any field $F$, with integer
coefficients mapped into $F$. Represent false by $0$ and true by
$1$. These are distinct elements in every field.
For each original variable impose
\[
 x_i^2-x_i=x_i(x_i-1)=0,
\]
which is equivalent to $x_i\in\{0,1\}$ over any field or over $\Z$, since a product is zero only if one factor is zero.
A positive literal has value $x_i$ and a negative literal has value
$1-x_i$. Write a three-literal clause as $C=(\ell_1\vee\ell_2\vee\ell_3)$,
where each $\ell_j$ denotes its literal's value. Then
\[
\begin{aligned}
 C\text{ is false}
 &\iff \ell_1=\ell_2=\ell_3=0\\
 &\iff 1-\ell_1=1-\ell_2=1-\ell_3=1\\
 &\iff (1-\ell_1)(1-\ell_2)(1-\ell_3)=1.
\end{aligned}
\]
Hence
\[
 C\text{ is true}
 \iff (1-\ell_1)(1-\ell_2)(1-\ell_3)=0.
\]
For fixed Boolean values of the original variables, this is
equivalent to the following two equations having a common solution
in an auxiliary variable $u$:
\begin{equation}\label{eq:quadratic-clause}
 u-(1-\ell_1)(1-\ell_2)=0,\qquad u(1-\ell_3)=0.
\end{equation}
Indeed, the first equation forces $u=(1-\ell_1)(1-\ell_2)$,
and substitution into the second gives the product condition.
This value of $u$ is automatically $0$ or $1$. We also include
$u^2-u=0$, so every original and introduced variable has its own
Booleanity equation.

\paragraph{A clause with two literals.}
For Boolean literal values, a clause $\ell_1\vee\ell_2$ is true
if and only if
\[
 (1-\ell_1)(1-\ell_2)=0.
\]
This is a single quadratic clause equation over the integers or
any field, and introduces no auxiliary variable.

\paragraph{A clause with $k$ literals.}
More generally, any clause $\ell_1\vee\cdots\vee\ell_k$ with
$k\ge2$ can be converted into $k-1$ quadratic clause equations
using $k-2$ new variables. The case $k=2$ is the preceding
identity. For $k\ge3$,
introduce $u_1,\ldots,u_{k-2}$ and impose
\[
 \begin{aligned}
 u_1-(1-\ell_1)(1-\ell_2)&=0,\\
 u_r-u_{r-1}(1-\ell_{r+1})&=0
       &&(2\le r\le k-2),\\
 u_{k-2}(1-\ell_k)&=0.
 \end{aligned}
\]
The first $k-2$ equations force
$u_r=\prod_{j=1}^{r+1}(1-\ell_j)$; the last equation is therefore
equivalent to $\prod_{j=1}^k(1-\ell_j)=0$, exactly the condition
that the clause is true. For Boolean literal values all the new
variables are automatically Boolean. The count $k-1$ refers to
the clause equations; Booleanity equations are counted separately.

For the 3SAT construction, now homogenize all equations by adding
one coordinate $z_0$.
For each literal $\ell_j$, define its homogeneous linear form by
\[
 L_j(z)=
 \begin{cases}
 x_i,&\text{if the literal is }x_i,\\
 z_0-x_i,&\text{if the literal is }\neg x_i.
 \end{cases}
\]
At $z_0=1$, this form equals $\ell_j$. The two clause equations become
\begin{equation}\label{eq:homogeneous-clause}
 z_0u-(z_0-L_1(z))(z_0-L_2(z))=0,
\qquad u(z_0-L_3(z))=0.
\end{equation}
A two-literal clause instead gives the single homogeneous form
\[
 (z_0-L_1(z))(z_0-L_2(z))=0.
\]
Relabeling the $n$ original and auxiliary variables as $x_1,\ldots,x_n$,
we also impose
\begin{equation}\label{eq:homogeneous-booleanity}
 x_i^2-z_0x_i=0\qquad(1\le i\le n).
\end{equation}
Every equation is now homogeneous of degree two.

The three-literal clause has a satisfying assignment if and only if its two
homogeneous clause equations, together with the Booleanity equations,
have a nonzero common solution over any field or over $\Z$. The same equivalence holds for the
whole formula when we include the appropriate equations for every clause.
To prove it, a satisfying assignment gives a solution with $z_0=1$
and each auxiliary variable set to the indicated product.
Conversely, if a solution has $z_0=0$, then
\eqref{eq:homogeneous-booleanity} gives $x_i^2=0$ for every $i$,
so the entire vector is zero. A nonzero solution therefore has
$z_0\ne0$. Working in the field (or in $\Q$ for an integer solution), dividing
all coordinates by $z_0$ preserves the homogeneous equations and gives
\[
 \left(\frac{x_i}{z_0}\right)^2-\frac{x_i}{z_0}=0,
 \qquad \frac{x_i}{z_0}\in\{0,1\}.
\]
The normalized clause equations are exactly
\eqref{eq:quadratic-clause} for three literals and the product
identity above for two literals, so the recovered Boolean assignment
satisfies every clause. This also shows that a nonzero real solution
exists if and only if a nonzero integer solution exists: the
normalized solution itself has only $0$--$1$ coordinates, with
first coordinate one. The Booleanity equations are part of this
nonzero-solution statement; the two clause equations alone can have
unwanted solutions with $z_0=0$.

More explicitly, let the 3SAT formula $\Phi$ have $s$ original
variables, $m_2$ two-literal clauses, and $m_3$ three-literal
clauses, with $m=m_2+m_3$. Only the three-literal clauses
introduce auxiliary variables, so the total number of variables
before homogenization is $n=s+m_3$. There are $n$ Booleanity
equations and $m_2+2m_3$ clause equations, hence
\[
 v=n+m_2+2m_3=s+m_2+3m_3
\]
homogeneous quadratic equations. Thus $n$ counts variables and $v$
counts equations; the homogeneous vector has $n+1$ coordinates.
In particular, $n\le s+m$ and $v\le3(s+m)$.
When $m_2=0$, these counts reduce to $n=s+m$ and
$v=s+3m\le3n$. Write the resulting system as
\begin{equation}\label{eq:forms}
 Q_\nu(z)=\sum_{i,j=0}^n a^{(\nu)}_{ij}z_i z_j=0,\qquad 1\le\nu\le v,
\end{equation}
where $a^{(\nu)}_{ij}\in\Z$ and $z=(z_0,x_1,\ldots,x_n)^\top$.

The formula $\Phi$ is satisfiable if and only if the
system~\eqref{eq:forms} has a nonzero solution over $\Z$, over $\R$,
or over any field after mapping the coefficients into that field.
In each case normalization produces the same Boolean witness.

\begin{remark}[Do not symmetrize the quadratic coefficient matrices]
The coefficients $a^{(\nu)}_{ij}$ need not form a symmetric matrix. To represent the monomial $z_1z_2$, put its coefficient at one ordered position, such as $(1,2)$. Putting a one at both $(1,2)$ and $(2,1)$ would make the two copies cancel in characteristic two. The assignment matrix $zz^\top$ below is symmetric; the coefficient matrix of a quadratic form need not be.
\end{remark}

\begin{example}[All variables and all quadratic equations for the running formula]\label{ex:quadratics}
Keep the five original bits $x_1,\ldots,x_5$ of~\eqref{eq:clause}. Introduce precisely three auxiliary bits, one per clause:
\[
 x_6=u_1,\qquad x_7=u_2,\qquad x_8=u_3.
\]
The following table lists the three factors $1-\ell_j$ for each clause and defines its auxiliary variable.
\begin{center}
\begin{tabular}{ccl}
\toprule
Clause & $(1-\ell_1,1-\ell_2,1-\ell_3)$ & Auxiliary definition\\
\midrule
$C_1$ & $(1-x_1,\ x_2,\ 1-x_3)$ & $x_6=(1-x_1)x_2$\\
$C_2$ & $(x_1,\ 1-x_4,\ 1-x_5)$ & $x_7=x_1(1-x_4)$\\
$C_3$ & $(1-x_2,\ 1-x_3,\ x_5)$ & $x_8=(1-x_2)(1-x_3)$\\
\bottomrule
\end{tabular}
\end{center}
Before homogenization, the six clause equations are
\begin{equation}\label{eq:running-affine-quadratics}
\begin{aligned}
 x_6-(1-x_1)x_2&=0,& x_6(1-x_3)&=0,\\
 x_7-x_1(1-x_4)&=0,& x_7(1-x_5)&=0,\\
 x_8-(1-x_2)(1-x_3)&=0,& x_8x_5&=0.
\end{aligned}
\end{equation}
The first equation in each row defines the auxiliary bit, and the second says that its clause is satisfied.
We also impose the eight Booleanity equations
\begin{equation}\label{eq:running-booleanity}
\begin{aligned}
 x_1^2-x_1&=0,&x_2^2-x_2&=0,&x_3^2-x_3&=0,&x_4^2-x_4&=0,\\
 x_5^2-x_5&=0,&x_6^2-x_6&=0,&x_7^2-x_7&=0,&x_8^2-x_8&=0.
\end{aligned}
\end{equation}
The first five constrain the original variables; the last three
constrain the introduced variables. Thus the full system has
fourteen quadratic equations in eight variables: $s=5$, $m_2=0$, $m_3=3$, $n=8$, and $v=14$.

Introduce the homogenizing coordinate $z_0$ and write
\[
 z=(z_0,x_1,x_2,x_3,x_4,x_5,x_6,x_7,x_8)^\top.
\]
Thus $n=8$, and the homogeneous coordinate vector has nine entries. Replacing constants and linear terms as above gives the six homogeneous clause forms:
\begin{equation}\label{eq:running-six-forms}
\begin{aligned}
 Q_1(z)&=z_0x_6-(z_0-x_1)x_2
        =z_0x_6-z_0x_2+x_1x_2,\\
 Q_2(z)&=x_6(z_0-x_3)=z_0x_6-x_3x_6,\\
 Q_3(z)&=z_0x_7-x_1(z_0-x_4)
        =z_0x_7-z_0x_1+x_1x_4,\\
 Q_4(z)&=x_7(z_0-x_5)=z_0x_7-x_5x_7,\\
 Q_5(z)&=z_0x_8-(z_0-x_2)(z_0-x_3)
        =z_0x_8-z_0^2+z_0x_2+z_0x_3-x_2x_3,\\
 Q_6(z)&=x_8x_5.
\end{aligned}
\end{equation}
The eight remaining homogeneous forms are
\begin{equation}\label{eq:running-boolean-forms}
 Q_{6+i}(z)=B_i(z)=x_i^2-z_0x_i
 \qquad(1\le i\le8).
\end{equation}
The full homogeneous system is $Q_\nu(z)=0$ for $1\le\nu\le v=14$.
The formula is satisfiable if and only if this system has a nonzero
real or integer solution. As proved above, every nonzero solution
has $z_0\ne0$ and normalizes to $z_0=1$, with all eight other
coordinates in $\{0,1\}$. In this example $z_i=x_i$ for $i\ge1$.
The later table construction encodes these normalized solutions.

For the satisfying original assignment $(1,0,1,0,1)^\top$, the auxiliary definitions give $(x_6,x_7,x_8)^\top=(0,1,0)^\top$. Consequently
\[
 x=(1,0,1,0,1,0,1,0)^\top,\qquad
 z=(1,1,0,1,0,1,0,1,0)^\top.
\]
Substitution in the six expanded clause forms gives, in order,
\[
 (0-0+0,\ 0-0,\ 1-1+0,\ 1-1,\ 0-1+0+1-0,\ 0)^\top
 =(0,0,0,0,0,0)^\top.
\]
All eight Booleanity forms also vanish, since $z_0=1$ and every
$x_i$ is zero or one. Thus all fourteen equations are satisfied.
For the false original assignment $(0,1,0,0,1)^\top$, the defining equations $Q_1=Q_3=Q_5=0$ force $(x_6,x_7,x_8)^\top=(1,0,0)^\top$. But then $Q_2=1$, so no auxiliary extension satisfies all six forms. More generally, for each of the $2^5$ original assignments the three defining equations force a unique auxiliary tuple, and that tuple satisfies the other three equations exactly when all three clauses are true.
\end{example}

The matrix $zz^\top$ for the satisfying example is
\begin{equation}\label{eq:running-matrix}
 zz^\top=
 \begin{pmatrix}
 1&1&0&1&0&1&0&1&0\\
 1&1&0&1&0&1&0&1&0\\
 0&0&0&0&0&0&0&0&0\\
 1&1&0&1&0&1&0&1&0\\
 0&0&0&0&0&0&0&0&0\\
 1&1&0&1&0&1&0&1&0\\
 0&0&0&0&0&0&0&0&0\\
 1&1&0&1&0&1&0&1&0\\
 0&0&0&0&0&0&0&0&0
 \end{pmatrix}.
\end{equation}
It has rank one and $25$ nonzero entries: the nonzero row and column indices are $\{0,1,3,5,7\}$. In general,
$\wt(zz^\top)=\wt(z)^2$ because an entry is one exactly when both corresponding coordinates of $z$ are one. Also $Q_\nu(z)=\sum_{ij}a^{(\nu)}_{ij}(zz^\top)_{ij}$. Thus quadratic information in $z$ becomes linear information in an assignment matrix. The challenge is to force a light linear-system solution to recover this rank-one form without imposing nonlinear rank-one constraints directly.

\begin{remark}[Boolean circuits as quadratic equations]
\label{rem:circuit-quadratics}
An acyclic Boolean circuit with $s$ input variables and $g$
two-input AND or OR gates can be encoded by $n=s+g$ quadratic
equations in $n$ variables over $\Z$ or $\R$.
Use one variable for each input and one new variable $z$ for
each gate output. Each gate input may be a literal: a previously
available variable $v$ or its negation, represented by $1-v$.
For the two literal values $x,y$, impose
\[
 \begin{aligned}
 z=x\wedge y&\quad\Longleftrightarrow\quad xy-z=0,\\
 z=x\vee y&\quad\Longleftrightarrow\quad x+y-xy-z=0.
 \end{aligned}
\]
Add $x_i^2-x_i=0$ for each of the $s$ input variables. There are
then $s+g=n$ equations, each of degree at most two. Once the
inputs are Boolean, the gate equations force the correct Boolean
output at every gate, by induction in a gate order respecting
the circuit edges. Thus every input assignment has a unique
extension to a solution, and every solution describes a valid
circuit evaluation; separate Booleanity equations for gate
outputs are unnecessary. To require the circuit to be satisfied,
also impose the linear equation $z_{\rm out}=1$ for its designated
output; this condition is additional to the $n$ evaluation equations.
\end{remark}

\section{Encoding assignments and powers efficiently}\label{sec:assignment-encoding}

\subsection{Injective encoding in a field extension}\label{sec:encoding-bases}

Reduce the signed integer coefficients of Section~\ref{sec:quadratics}
into $\F_p$, retaining every Booleanity equation
$x_i^2-z_0x_i=0$. The proof there shows that normalized solutions
over $\F_p$ are exactly the same Boolean witnesses as over
$\Z$ or $\R$. For odd $p$, Booleanity is not automatic and
will be essential in the recovered coefficient matrices.
The main construction assumes $n\ge\max\{4,p\}$.
Smaller instances are solved directly and mapped to fixed
outputs as described in Section~\ref{sec:code-instance}.

Choose $K=\F_q$, $q=p^a$, and a monic irreducible polynomial
$f\in K[X]$ of degree $c\ge2$, with $ac\ge n$. Put
\[
 E=K[X]/(f),\qquad \zeta=X\bmod f.
\]
If $(\beta_1,\ldots,\beta_a)$ is an ordered $\F_p$-basis of
$K$, the products $\beta_\ell\zeta^j$, in order of increasing
$j$ and then $\ell$, form an $\F_p$-basis of $E$.
Indeed, independence follows first by the $K$-basis
$1,\zeta,\ldots,\zeta^{c-1}$, then by the basis of $K$.
Choose the first $n$ product-basis elements as
$\alpha_1,\ldots,\alpha_n$ and define
\begin{equation}\label{eq:encoding}
 \Enc:\F_p^n\longrightarrow E,\qquad
 \Enc(x)=\sum_{i=1}^n\alpha_i x_i.
\end{equation}
Prime-field independence proves injectivity. Every element of
$E$ is stored using $ac$ prime-field coordinates, or
$O(ac\log p)$ bits; enumerating $E$ is unnecessary.
Appendix~\ref{sec:field-foundation} gives the field and basis details.

\subsection{High powers from low-degree polynomials}
\label{sec:powers-table}

For a normalized Boolean vector $z=(1,x^\top)^\top$, put
$\xi=\Enc(x)\in E$. For every polynomial $Q$ over $\F_p$,
\[
 Q(z)=0\quad\Longleftrightarrow\quad
 Q(z)\xi^k=0\text{ for every integer }k\ge0;
\]
the reverse implication follows by taking $k=0$. In particular,
\[
 Q_\nu(z)\xi^k
   =\sum_{i,j}a^{(\nu)}_{ij}\bigl(z_i z_j\xi^k\bigr).
\]
Replacing the quantities in parentheses by unknown moments
$\eta_{ij,k}$ makes these equations linear in the moments.
The soundness proof must establish the difficult converse:
a light solution of the resulting linear system yields a
satisfying assignment.

Storing all assignments would require exponential space, while
expanding $(\sum_i\alpha_i x_i)^k$ directly gives degree $k$.
Instead, Frobenius expresses large powers through products of
polynomials of small degree, as follows.

Choose a radix $d=p^b\ge2$ and a fixed number $r\ge1$ of axes.
Write
\[
 \alpha_i^{d^h}=\sum_{j=0}^{c-1}\gamma_{i,j,h}\zeta^j,
 \qquad \gamma_{i,j,h}\in K,
\]
and define
\begin{equation}\label{eq:axis-coefficients}
 P_{x,h}(X)=\sum_{j=0}^{c-1}
       \left(\sum_i\gamma_{i,j,h}x_i\right)X^j\in K[X]
       \qquad(0\le h<r).
\end{equation}
The identity $(u+v)^{p^k}=u^{p^k}+v^{p^k}$ and
$x_i^{p^k}=x_i$ for $x_i\in\F_p$ give
\begin{equation}\label{eq:axis}
 \deg P_{x,h}<c,\qquad
 P_{x,h}(\zeta)=\Enc(x)^{d^h}.
\end{equation}
It is the unique polynomial of degree less than $c$ with that
value: a nonzero polynomial of smaller degree cannot vanish at
$\zeta$. For a nonnegative digit vector
$\mathbf e=(e_0,\ldots,e_{r-1})^\top\in\Z_{\ge0}^r$, define
$k=\ind(\mathbf e)=\sum_{h=0}^{r-1}e_hd^h$. Then
\begin{equation}\label{eq:fold}
 \Enc(x)^k=\prod_{h=0}^{r-1}P_{x,h}(\zeta)^{e_h}.
\end{equation}
Although $d^h$ is large, $P_{x,h}$ is linear in the $x_i$ and
has degree less than $c$ in $X$.

For any nonnegative digit vector $\mathbf e$, the polynomial
$\prod_hP_{x,h}(X)^{e_h}$ therefore has degree at most
$(c-1)\sum_he_h$. Its value at $\zeta$ is the possibly very
large power $\xi^{\ind(\mathbf e)}$.

This representation suggests the table used in the reduction.
Let $S\subseteq K$ be a set of distinct sample points, whose
size will be chosen below. At each $\sigma\in S$, the vector
\[
 \boldsymbol\lambda_x(\sigma)
   =\bigl(P_{x,0}(\sigma),\ldots,P_{x,r-1}(\sigma)\bigr)\in K^r
\]
is a label for the assignment at that sample. Introduce the
common table
\[
 T=(T_{ij,\boldsymbol\lambda,\sigma}),\qquad
 T_{ij,\boldsymbol\lambda,\sigma}\in\F_p,
 \quad 0\le i,j\le n,\quad
 \boldsymbol\lambda\in K^r,\quad\sigma\in S.
\]
For each label and sample, its entries form the matrix
$T_{\boldsymbol\lambda,\sigma}
=(T_{ij,\boldsymbol\lambda,\sigma})_{0\le i,j\le n}$.
The honest table associated with $x$ is
\[
 T_{ij,\boldsymbol\lambda,\sigma}
 =\begin{cases}
    z_i z_j,&\boldsymbol\lambda=\boldsymbol\lambda_x(\sigma),\\
    0,&\boldsymbol\lambda\ne\boldsymbol\lambda_x(\sigma).
   \end{cases}
\]
Thus it places $zz^\top$ at one label per sample and zero at
every other label. In particular, its entries are in $\{0,1\}$,
even though arbitrary table entries range over $\F_p$.
For this honest table, using $0^0=1$,
\[
 \sum_{\boldsymbol\lambda\in K^r}
       T_{ij,\boldsymbol\lambda,\sigma}
          \prod_h\lambda_h^{e_h}
   =z_i z_j\prod_hP_{x,h}(\sigma)^{e_h}
   =G_{ij,\mathbf e}(\sigma),
\]
where
\[
 G_{ij,\mathbf e}(X)
   =z_i z_j\prod_{h=0}^{r-1}P_{x,h}(X)^{e_h}\in K[X],
 \qquad
 \deg G_{ij,\mathbf e}\le(c-1)\sum_{h=0}^{r-1}e_h.
\]
For an unknown table, the left side is a linear readout of its
entries. Requiring these readouts to interpolate to polynomials
of the stated small degree lets us evaluate them at $\zeta$,
where the honest values are $z_i z_j\xi^{\ind(\mathbf e)}$.
The same table thus supplies linear substitutes for all the
quadratic equations and all the required power tests. Its
coordinate set is fixed without knowing $x$; the affine conditions
on its entries are given in Section~\ref{sec:table}.

\subsection{From the rank cutoff to the moment window and digit products}
\label{sec:cutoff-digits}

Choose the rank cutoff $\kappa\ge2$ first. For a sample set
$S\subseteq K$ and an interpolation-degree budget $D<|S|$,
the soundness threshold is
\begin{equation}\label{eq:cutoff-threshold}
 R_0=\kappa(|S|-D).
\end{equation}
A table of weight less than $R_0$ will have moment-matrix rank
less than $\kappa$, by Lemma~\ref{lem:rank-weight}.
The recovery proof uses a common recurrence for all moment
sequences in a fixed block column. Their joint root support has
size at most $\kappa-1$, independently of $n$; see
Lemma~\ref{lem:block-column-atoms}. Symmetry compares two such
supports, and the Frobenius tests require moments at indices
$pb$ for recurrence-basis indices $b$. A sufficient row range
covering both requirements is
\begin{equation}\label{eq:capacity}
 c_p=\max\{2,p-1\},\qquad L=c_p(\kappa-1)+1.
\end{equation}
In particular, $L=2\kappa-1$ for $p=2$ or $p=3$.
We claim sufficiency of this window, not its optimality.
The cutoff is a design parameter, not a radix capacity.
Section~\ref{sec:gap} chooses it as a function of $n$ and the
desired gap exponent, charging the full output length.

\paragraph{Connection with BCH syndromes.}
Consecutive powers also underlie BCH codes and their decoding.
Their syndromes are weighted power sums
$s_k=\sum_j e_j\beta_j^k$. Vandermonde independence separates
the location contributions, while their annihilating polynomial
specifies a recurrence. Appendix~\ref{sec:bch-background} gives
the code definition, the BCH distance bound, and a decoding
procedure, including the connection with Berlekamp--Massey.
The additional information here is simultaneous recurrence data:
the $n+1$ sequences $(\eta_{ij,k})_k$, with $j$ fixed, share
one recurrence polynomial. Bounding its root set directly avoids
adding $n+1$ separate support bounds. Section~\ref{sec:atoms}
proves this joint bound before the isolation step of
Section~\ref{sec:synchronization}.

The decomposition $k=u+v$ is dictated by the later moment
matrix, whose rows are indexed by $0\le u<L$ and columns by
$0\le v<\kappa$. Its honest $(i,j)$ entry is
\[
 \eta_{ij,u+v}=z_i z_j\xi^{u+v}
       =(z_i\xi^u)(z_j\xi^v).
\]
This factorization gives a rank-one matrix for a single
assignment and, at samples, bounds matrix rank by table weight.
The largest required index is $L+\kappa-2$. Define the number
of exposed moments by
\begin{equation}\label{eq:moment-window}
 J_0=L+\kappa-1=(c_p+1)(\kappa-1)+1,
       \qquad 0\le k<J_0.
\end{equation}
Thus $J_0$ denotes the length of the window, not its last index.
For this prescribed window, choose $r$, $q$, and $d$ jointly
to minimize the label-space size $q^r$, while keeping $d^r$
close to $J_0$ from above. The choices must satisfy the field,
degree, and sampling constraints below; in particular, $q$ must
be large enough to accommodate the samples. The full output
length also includes the factor $|S|$, so final parameter
comparisons use $q^r|S|$, as detailed in Section~\ref{sec:gap}.

Only now choose the radix $d=p^b\ge2$ and the number $r$ of
axes so that
\begin{equation}\label{eq:digit-capacity}
 d^r\ge J_0.
\end{equation}
Every required $k$ then has a unique ordinary base-$d$ digit
column $\Lift(k)\in\{0,\ldots,d-1\}^r$. For fixed $r$,
choose the smallest power of $p$ meeting
\eqref{eq:digit-capacity}, unless another admissible choice
gives a smaller full output. In the reverse direction, fixed
$d,r$ permit the cutoff
\[
 \kappa=1+\left\lfloor\frac{d^r-1}{c_p+1}\right\rfloor,
\]
provided this is at least two. Notice that $L$ need not equal
$d^r$.

Two kinds of exponent vectors are useful: canonical digits name
the moments, and uncarried sums preserve multiplication at each
sample. Accordingly define
\begin{equation}\label{eq:exponent-set}
 \begin{aligned}
 \mathcal P_{\mathrm{can}}
   &=\{\Lift(k):0\le k<J_0\},\\
 \mathcal P_{\mathrm{prod}}
   &=\{\Lift(u)+\Lift(v):0\le u<L,\ 0\le v<\kappa\},\\
 \mathcal P&=\mathcal P_{\mathrm{can}}\cup\mathcal P_{\mathrm{prod}}.
 \end{aligned}
\end{equation}
For any $x\in\F_p^n$ and $\mathbf e,\mathbf e'\in\mathcal P$,
whenever $\ind(\mathbf e)=\ind(\mathbf e')$, we have
\[
 f(X)\mid\bigl(G_{ij,\mathbf e}(X)-G_{ij,\mathbf e'}(X)\bigr)
 \qquad(0\le i,j\le n).
\]
Indeed, by \eqref{eq:fold} both polynomials have the same value
$z_i z_j\Enc(x)^{\ind(\mathbf e)}$ at $\zeta$, whose minimal
polynomial over $K$ is $f(X)$.
Every vector in $\mathcal P$ has index in $[0,J_0-1]$.
Every such index also has a product representation: take
$u=\min\{k,L-1\}$ and $v=k-u$.
No exponent vectors outside these two sets are needed.

The ordinary digits are less than $d$, but an uncarried sum can
have an entry as large as $2d-2$. Condition 2 of
Section~\ref{sec:table} identifies representations of the same
index only after evaluation at $\zeta$. It says
\[
 f(X)\mid F_{ij,\mathbf e}(X)-F_{ij,\mathbf e'}(X),
       \qquad\ind(\mathbf e)=\ind(\mathbf e'),
\]
not that the two polynomials or their sample values coincide.
For example, if $r\ge2$, $L\ge d$, and $\kappa\ge4$, then
$u=d-1$, $v=3$ have uncarried first digit $d+2$ when $d\ge4$,
even though their sum has ordinary digits $(2,1,0,\ldots,0)$.
The corresponding label weights are $\lambda_0^{d+2}$ and
$\lambda_0^2\lambda_1$, which differ at
$(\lambda_0,\lambda_1)=(1,0)$. Keeping the uncarried vector
supplies the samplewise product identity used in
Lemma~\ref{lem:moment-evaluations}; canonical digits alone do
not supply it.

For a simple uniform degree budget use
\begin{equation}\label{eq:degree-budget}
 D=2r(d-1)(c-1).
\end{equation}
Indeed $\mathcal P\subseteq\{0,\ldots,2d-2\}^r$ and each
$P_{x,h}$ has degree less than $c$. The exponent set itself is
already restricted to the required moments and products.
Song's degree budget includes a synchronization
multiplier $n+2$~\cite[Definitions 3.5--3.6]{song}; here the
shared block-column recurrence gives a row range depending on
$p$ and $\kappa$, and ordinary factor digits suffice.

\begin{example}[Exponent eleven from a polynomial of degree four]\label{ex:axes}
Use the binary tower $K=\F_4$, $E=\F_{16}$ from
Example~\ref{ex:tower}, where $\omega^2=\omega+1$ and
$\zeta^2=\zeta+\omega$. With basis
$(1,\omega,\zeta,\omega\zeta)$ and $x=(1,0,1,0)^\top$,
the encoding is $\xi=1+\zeta$. For $d=2,r=3$, its representing
polynomials are
\[
 P_0(X)=1+X,\qquad P_1(X)=1+\omega+X,\qquad P_2(X)=X,
\]
since $\xi^2=1+\omega+\zeta$ and $\xi^4=\zeta$.
The nonordinary digit vector $(1,1,2)^\top$ lies in the enclosing box
$\{0,1,2\}^3$ and has index $1+2+2\cdot4=11$. Its product is
\[
 P_0P_1P_2^2=X^4+\omega X^3+(1+\omega)X^2.
\]
Using $\zeta^3=(1+\omega)\zeta+\omega$ and
$\zeta^4=\zeta+1$, direct evaluation gives
\[
 \zeta^4+\omega\zeta^3+(1+\omega)\zeta^2
       =(1+\omega)(1+\zeta)=\zeta^{-1}.
\]
This agrees with $\xi^{11}=(\xi^4)^{-1}=\zeta^{-1}$, since
$\xi^{15}=1$. The example illustrates the encoding identity
for general nonnegative digit vectors; its tiny field is not claimed to meet the later
sample-size budget for a full reduction instance.
\end{example}

\section{From quadratic equations to a linear code: construction and completeness}
\label{sec:table}

\subsection{Parameters and interpolation}

\paragraph{Parameter hypotheses and sample readouts.}
The construction uses
\begin{equation}\label{eq:budgets}
 \begin{gathered}
 n\ge\max\{4,p\},\quad q=|K|=p^a,\quad c\ge2,\quad ac\ge n,
 \quad d=p^b,\quad \kappa\ge2,\\
 c_p=\max\{2,p-1\},\qquad L=c_p(\kappa-1)+1,\\
 J_0=L+\kappa-1=(c_p+1)(\kappa-1)+1\le d^r,\\
 D=2r(d-1)(c-1),\qquad S\subseteq K,\qquad |S|\ge4D,\\
 N=(n+1)^2q^r|S|.
 \end{gathered}
\end{equation}
Here $\mathcal P$ is the exponent set in \eqref{eq:exponent-set},
$S$ is any explicitly listed set of distinct field elements, and
$T$ is the common table from Section~\ref{sec:powers-table}.
Its $N$ entries over $\F_p$ are the unknowns, and its Hamming
weight counts nonzero entries. Only samples are restricted to $S$;
the label alphabet remains $K^r$.

For a digit vector $\mathbf e$, define
\[
 w_{\boldsymbol\lambda}(\mathbf e)
    =\prod_{h=0}^{r-1}\lambda_h^{e_h},\qquad 0^0=1.
\]
For integer indices $0\le u<L$ abbreviate
$w_{\boldsymbol\lambda}(u)=w_{\boldsymbol\lambda}(\Lift(u))$.
Vector weights multiply under addition of digit vectors, but
need not be equal just because their integer indices agree.
Define sample readouts
\begin{equation}\label{eq:weighted-values}
 f_{ij,\mathbf e}(\sigma)
       =\sum_{\boldsymbol\lambda\in K^r}
             T_{ij,\boldsymbol\lambda,\sigma}
                    w_{\boldsymbol\lambda}(\mathbf e)
       \qquad(\mathbf e\in\mathcal P).
\end{equation}

\paragraph{Interpolation on an arbitrary sample set.}

Let $g_S(X)=\prod_{\tau\in S}(X-\tau)$. Distinctness of the
samples gives $g_S'(\sigma)=\prod_{\tau\ne\sigma}(\sigma-\tau)\ne0$.
The explicit interpolation polynomials are
\begin{equation}\label{eq:sample-lagrange}
 \ell^S_\sigma(X)
  =\prod_{\tau\in S\setminus\{\sigma\}}
                      \frac{X-\tau}{\sigma-\tau}
  =\frac{g_S(X)}{(X-\sigma)g_S'(\sigma)}.
\end{equation}
They are one at their own sample and zero at all other samples.
Consequently the unique polynomial of degree less than $|S|$
with the readout values is
\[
 F_{ij,\mathbf e}(X)
       =\sum_{\sigma\in S}f_{ij,\mathbf e}(\sigma)\ell^S_\sigma(X).
\]
Existence follows from the formula; uniqueness follows because
a nonzero polynomial of degree less than $|S|$ cannot have $|S|$
distinct roots. When $S=K$, this specializes to
$\ell^K_\sigma(X)=1-(X-\sigma)^{q-1}$.
For a proper subset one must use \eqref{eq:sample-lagrange} instead.

\subsection{The seven families of affine conditions}

The integer coefficients $a^{(\nu)}_{ij}$ of $Q_\nu$ are naturally
viewed as elements of $\F_p$ when used in field equations.
Impose the following on the common table.
\begin{enumerate}[leftmargin=*]
\item \textbf{Degree.} For every $(i,j)$ and $\mathbf e\in\mathcal P$,
require $\deg F_{ij,\mathbf e}(X)\le D$.
\item \textbf{Consistency of exponents.} If
$\ind(\mathbf e)=\ind(\mathbf e')$, require
$F_{ij,\mathbf e}(\zeta)=F_{ij,\mathbf e'}(\zeta)$.
This defines moments
\[
 \eta_{ij,k}=F_{ij,\mathbf e}(\zeta),\qquad
       \ind(\mathbf e)=k,\quad 0\le k<J_0.
\]
For each fixed pair $(i,j)$, we call
$(\eta_{ij,k})_{0\le k<J_0}$ the \emph{moment sequence associated
with $(i,j)$}; only the power index $k$ varies.
\item \textbf{Frobenius.} Require
$\eta_{ij,pk}=\eta_{ij,k}^{\,p}$ whenever $pk<J_0$.
\item \textbf{Assignment encoding.} For $0\le j\le n$ and
$0\le k<J_0-1$, require
\begin{equation}\label{eq:encoding-moment}
 \eta_{0j,k+1}=\sum_{i=1}^n\alpha_i\eta_{ij,k}.
\end{equation}
\item \textbf{The quadratic equations.} For every source form $Q_\nu$,
including every Booleanity form, and $0\le k<J_0$, require
\begin{equation}\label{eq:quadratic-moment}
 \sum_{i,j=0}^na^{(\nu)}_{ij}\eta_{ij,k}=0.
\end{equation}
\item \textbf{Symmetry.} Require
$T_{ij,\boldsymbol\lambda,\sigma}=T_{ji,\boldsymbol\lambda,\sigma}$.
\item \textbf{Normalization.} Require
$\sum_{\boldsymbol\lambda}T_{00,\boldsymbol\lambda,\sigma}=1$
for every $\sigma\in S$.
\end{enumerate}
All quadratics constrain the same moments. They are not replaced
by one sum of quadratic equations, in which violations could cancel.

\paragraph{Remark on Condition 4: multiplication by the assignment encoding.}
For an honest table associated with a satisfying assignment $x$,
put $z=(1,x^\top)^\top$ and $\xi=\Enc(x)$. Its moments are
$\eta_{ij,k}=z_i z_j\xi^k$. Since $z_0=1$ and
$\sum_{i=1}^n\alpha_i z_i=\xi$, the right side of Condition 4 is
\[
 \sum_{i=1}^n\alpha_i\eta_{ij,k}
   =z_j\xi^k\sum_{i=1}^n\alpha_i z_i
   =z_j\xi^{k+1}=\eta_{0j,k+1}.
\]
Thus shifting the moment index by one represents multiplication
by $\xi$. The table equations express this relation linearly,
using the fixed coefficients $\alpha_i$.

\paragraph{Remark on Condition 5: the quadratic equations in moment form.}
For the same honest moments,
\[
 \sum_{i,j=0}^n a^{(\nu)}_{ij}\eta_{ij,k}
   =\xi^k\sum_{i,j=0}^n a^{(\nu)}_{ij}z_i z_j
   =\xi^k Q_\nu(z).
\]
Each satisfying assignment therefore obeys Condition 5 for every
available $k$. Requiring all these moment equations is also
essential to soundness: the consecutive powers allow us to isolate
each root contribution and recover $Q_\nu(z)=0$ for its assignment.

\paragraph{Remark on Condition 7: normalization after summing over labels.}
At each sample $\sigma$, the honest table has one nonzero block,
located at the assignment-dependent label
$\boldsymbol\lambda_x(\sigma)$, and the $(0,0)$ entry of that
block is $z_0^2=1$. Summing over all labels therefore gives one,
independently of the location of this block. This imposes a fixed
affine normalization without knowing the assignment. For every
feasible table it gives
\[
 f_{00,\mathbf0}(\sigma)
   =\sum_{\boldsymbol\lambda}T_{00,\boldsymbol\lambda,\sigma}=1,
 \qquad F_{00,\mathbf0}(X)=1,
 \qquad \eta_{00,0}=1.
\]
The middle equality follows by uniqueness of interpolation on
$S$. This normalization will ensure that the root expansion used
in the recovery argument has a nonzero contribution.

\begin{lemma}[The seven families are affine over the prime field]
\label{lem:affine}
Expanding the field equations in fixed $\F_p$-bases gives an
explicit affine system $H\theta=b$ over $\F_p$, with exactly $N$
table coordinates. No product of unknown coordinates is required.
\end{lemma}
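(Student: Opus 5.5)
The plan is to go through the seven families one at a time. For each, I show that every condition, after expansion in fixed $\F_p$-bases, is a finite set of $\F_p$-affine equations in the $N$ table entries $T_{ij,\boldsymbol\lambda,\sigma}\in\F_p$. The table entries are the only unknowns, and the coordinate set $\{(i,j,\boldsymbol\lambda,\sigma)\}$ has size $(n+1)^2q^r|S|=N$ by construction.

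The basic observation is that every derived quantity is $\F_p$-linear in $T$. The readout $f_{ij,\mathbf e}(\sigma)$ in \eqref{eq:weighted-values} is a $K$-linear combination of the entries with fixed coefficients $w_{\boldsymbol\lambda}(\mathbf e)\in K$. Since $T$ takes values in $\F_p\subseteq K$, writing each coefficient in the fixed $\F_p$-basis $\beta_1,\ldots,\beta_a$ of $K$ turns each readout into $a$ coordinate functions that are $\F_p$-linear forms in $T$. The interpolant $F_{ij,\mathbf e}=\sum_\sigma f_{ij,\mathbf e}(\sigma)\ell^S_\sigma$ has fixed coefficients $\ell^S_\sigma\in K[X]$, so each of its $X$-coefficients is again $\F_p$-linear in $T$. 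Evaluating at $\zeta$ and expanding in the product basis $\beta_\ell\zeta^j$ of $E$ shows that each moment $\eta_{ij,k}$ has $ac$ prime-field coordinates that are $\F_p$-linear in $T$. Well-definedness of $\eta_{ij,k}$ is not needed here: Condition 2 is imposed as the equations themselves, and Conditions 3--5 may refer to any fixed choice of representative $\mathbf e$ with $\ind(\mathbf e)=k$.

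With this in hand, most conditions are immediate.
\begin{enumerate}[leftmargin=*]
\item Condition 1 says that the coefficients of $X^{D+1},\ldots,X^{|S|-1}$ in $F_{ij,\mathbf e}$ vanish. These are linear equations over $K$, hence $a$ equations each over $\F_p$.
\item Condition 2 is a difference of two linear forms with values in $E$, giving $ac$ equations per instance.
\item Condition 4 is linear because the $\alpha_i\in E$ are fixed constants. Multiplication by a fixed element of $E$ is an $\F_p$-linear map on the $ac$ coordinates, given by a fixed matrix.
\item Condition 5 is linear because the $a^{(\nu)}_{ij}$ are fixed integers reduced into $\F_p$.
\item Conditions 6 and 7 are linear or affine directly in the entries. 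Normalization supplies the only nonzero right-hand side, and that right-hand side lies in $\F_p$ after reading the $K$-valued sum coordinatewise.
\end{enumerate}

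The one step that needs care, and which I expect to be the main obstacle, is Condition 3. The constraint $\eta_{ij,pk}=\eta_{ij,k}^{\,p}$ looks nonlinear, since it raises an unknown to the $p$-th power. The point is that on $E$, a field of characteristic $p$, the Frobenius map $y\mapsto y^p$ is additive and fixes $\F_p$, so it is $\F_p$-linear. Concretely, for $y=\sum_{\ell,j}y_{\ell j}\beta_\ell\zeta^j$ with $y_{\ell j}\in\F_p$, we have $y^p=\sum y_{\ell j}(\beta_\ell\zeta^j)^p$. The precomputed constants $(\beta_\ell\zeta^j)^p$, expanded in the same basis, give a fixed $ac\times ac$ matrix over $\F_p$. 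Composing this matrix with the linear map $T\mapsto\eta_{ij,k}$ makes Condition 3 linear as well, so no product of unknown coordinates appears.

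It remains to stack all of these equations into $H\theta=b$. I will also check explicitness: the number of equations is polynomial in $N$ and $|\mathcal P|\le J_0^2$, and every coefficient (weights, Lagrange coefficients, basis expansions, the Frobenius matrix) is computable by polynomial-size field arithmetic in $K$ and $E$. This computation avoids enumerating $E$, as noted in Section~\ref{sec:encoding-bases}.
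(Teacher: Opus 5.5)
Your proposal is correct and follows the paper's proof. You check linearity of readouts, interpolation and evaluation, treat the degree condition as vanishing of the top coefficients, use $\F_p$-linearity of multiplication by fixed field elements, and expand $K$- and $E$-valued equations into $a$ and $ac$ scalar equations. For Frobenius, you use $y_{\ell j}^p=y_{\ell j}$ on prime-field coordinates in a fixed basis, which is equivalent to the paper's identity $\eta^p=\sum_\nu c_\nu^p\theta_\nu$ on table coordinates.
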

\begin{proof}
Readouts, interpolation, and evaluation are linear. The degree
condition sets the coefficients of $X^{D+1},\ldots,X^{|S|-1}$
to zero. Multiplication by a fixed field element is linear over
$\F_p$. Finally, if $\eta=\sum_\nu c_\nu\theta_\nu$, then
\[
 \eta^p=\sum_\nu c_\nu^p\theta_\nu^p
       =\sum_\nu c_\nu^p\theta_\nu
       \qquad(\theta_\nu\in\F_p).
\]
Thus the Frobenius conditions are linear too. Normalization has
right side one. A $K$-valued equation expands into $a$ scalar
equations; an $E$-valued equation expands into $ac$.
\end{proof}

For a consistent system compute $t_0$ with $Ht_0=b$ and a column
basis for $C=\ker H$. Equation~\eqref{eq:syndrome-code} identifies
its closest codeword distance with minimum feasible table weight.
Thus the affine description already specifies the code reduction.
Section~\ref{sec:lattice} supplies the feasible NO replacement and
the integer lattice output.

For a concrete view of the linear map, fix any representative
$\mathbf e(k)\in\mathcal P$ of $k$. Direct substitution gives
\[
 \eta_{ij,k}=\sum_{\sigma\in S}
            \sum_{\boldsymbol\lambda\in K^r}
      \ell^S_\sigma(\zeta)w_{\boldsymbol\lambda}(\mathbf e(k))
                       T_{ij,\boldsymbol\lambda,\sigma}.
\]
Every displayed coefficient is fixed before solving the system.
The prime-field coordinate vector of all moments is therefore
$\mathbf m=\Pi\theta$, with $(n+1)^2J_0ac$ rows in $\Pi$.
This count is not a rank assertion and does not say that the map
preserves Hamming weight.

\subsection{The reduction theorem and completeness}\label{sec:completeness}

\begin{theorem}[Quadratics to a prime-field CCP instance]
\label{thm:table-reduction}
Under \eqref{eq:budgets}, the seven families construct an affine
system whose length and thresholds are \eqref{eq:three}.
A satisfying formula gives a feasible table in $\{0,1\}^N$
of weight at most $R_1$. Every feasible table over $\F_p$ of
weight less than $R_0$ implies a satisfying formula.
For fixed $p,r$ and polynomial explicit parameters, the system
and the corresponding feasible code instance are constructed in
deterministic polynomial time, using the field-construction input
specified in Section~\ref{sec:running-time}.
\end{theorem}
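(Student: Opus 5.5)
The proof has three parts: construction and length (essentially Lemma~\ref{lem:affine}), completeness, and soundness. The theorem also asserts polynomial-time constructibility, which I will defer to Section~\ref{sec:running-time}.

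\textbf{Construction and completeness.} The number of table coordinates is $(n+1)^2q^r|S|=N$ by definition, and Lemma~\ref{lem:affine} shows that the seven families are $\F_p$-affine. For completeness, take a satisfying assignment $x$, put $z=(1,x^\top)^\top$, and use the honest table of Section~\ref{sec:powers-table}. It has entries in $\{0,1\}$ and at most $(n+1)^2$ nonzero entries per sample, so its weight is at most $R_1$.

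I would then check the seven conditions in order.
\begin{enumerate}[leftmargin=*]
\item The readouts $f_{ij,\mathbf e}(\sigma)$ equal $G_{ij,\mathbf e}(\sigma)$, and $\deg G_{ij,\mathbf e}\le(c-1)\sum_he_h\le D<|S|$. By uniqueness of interpolation, $F_{ij,\mathbf e}=G_{ij,\mathbf e}$, which gives Condition 1.
\item Condition 2 follows from the divisibility $f\mid G_{ij,\mathbf e}-G_{ij,\mathbf e'}$ established in Section~\ref{sec:cutoff-digits}. It yields the moments $\eta_{ij,k}=z_iz_j\xi^k$.
\item Since $z_iz_j\in\F_p$, Frobenius holds.
\item Conditions 4 and 5 are the two remarks above.
\item Symmetry holds because $zz^\top$ is symmetric.
\item Normalization holds because $z_0=1$.
\end{enumerate}

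\textbf{Soundness.} Let $T$ be feasible with $\wt(T)<R_0$. I would form the block moment matrix $M=(\eta_{ij,u+v})$, with rows indexed by $(i,u)$, $0\le u<L$, and columns by $(j,v)$, $0\le v<\kappa$.

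\emph{Step 1: low rank.} Using the samplewise product identity for uncarried sums (Lemma~\ref{lem:moment-evaluations}), each sample contributes a matrix whose rank is at most the number of labels with a nonzero block. Interpolation then shows that $M$ is a combination of these sample matrices after evaluating at $\zeta$. The degree bound lets one discard $D$ samples, and Lemma~\ref{lem:rank-weight} gives $\rank M<\kappa$ whenever the weight is below $\kappa(|S|-D)$.

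\emph{Step 2: recurrence and roots.} Rank deficiency in the $\kappa$ column shifts gives, for each block column $j$, a common linear recurrence of order less than $\kappa$. Its sequences are valid on the window of length $L$. Lemma~\ref{lem:block-column-atoms} then gives a root expansion $\eta_{ij,k}=\sum_\beta c_{ij,\beta}\beta^k$ with at most $\kappa-1$ roots (Lemma~\ref{lem:atoms}).

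\emph{Step 3: recovering an assignment.} Isolating each root by Vandermonde independence over the window, the linear Conditions 4 and 5 pass to the coefficient matrices $C_\beta=(c_{ij,\beta})$:
\[
\beta\, c_{0j,\beta}=\sum_i\alpha_ic_{ij,\beta},\qquad \sum_{i,j} a^{(\nu)}_{ij}c_{ij,\beta}=0.
\]
Normalization $\eta_{00,0}=1$ forces some root $\beta$ with $c_{00,\beta}\ne0$; the paper's observation is that every active root appears in the $(0,0)$ sequence. Symmetry compares the root supports of different block columns, and this is where $L\ge 2(\kappa-1)+1$ is used.

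I then need to show that the coefficient matrix at such a root has the form $c\,zz^\top$ with $z$ Boolean. The Frobenius conditions on indices $pb$, which require $L\ge (p-1)(\kappa-1)+1$, permute the roots by $\beta\mapsto\beta^p$ and force the coefficients to be compatible with $\F_p$-values. The Booleanity forms, now satisfied by the coefficient matrix, should then force it to be rank one with a Boolean normalized vector. Dividing by $c_{00,\beta}$ yields $z$ with $z_0=1$ satisfying every $Q_\nu$, hence a satisfying assignment. Condition 4 confirms consistency, namely $\beta=\Enc(x)$.

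\textbf{Main obstacle.} I expect the hardest step to be this last extraction: showing that the Frobenius and Booleanity conditions on the coefficients of a single isolated root force a genuine rank-one Boolean matrix rather than a mixture. My first attempt would be to pass from $C_\beta$ to the matrix $c_{0j,\beta}$ row, and to use Condition 4 together with injectivity of $\Enc$ to pin down $z_j=c_{0j,\beta}/c_{00,\beta}\in\F_p$.
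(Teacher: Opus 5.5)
Your construction, completeness, and Steps 1--2 follow the paper. Step~3 has a genuine gap at exactly the point you flagged, and the mechanism you propose there cannot close it.

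The quadratic forms, Booleanity included, reach an isolated coefficient matrix only through the linear conditions $\sum_{i,j}a^{(\nu)}_{ij}C_\beta(i,j)=0$. Linear conditions cannot force rank one. For instance, $zz^\top+z'z'^\top$ for two distinct satisfying assignments satisfies every one of them and has rank two. Frobenius is also not an extra ingredient at this stage: it does not permute roots. Its whole role is inside Lemma~\ref{lem:atoms}, where it kills the nilpotent part and puts every coefficient $C_\beta(i,j)$ into $\F_p$.

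The missing idea is that the root $\beta$ itself fixes the assignment, column by column. Isolating Condition~4 within column $j$ (at most $\rho\le L-1$ roots) gives $\beta\,C_\beta(0,j)=\sum_{i\ge1}\alpha_iC_\beta(i,j)$ with all entries in $\F_p$.
\begin{itemize}
\item If $C_\beta(0,j)=0$, then $\F_p$-independence of the $\alpha_i$ forces the whole column to vanish.
\item Otherwise, dividing by $C_\beta(0,j)$ writes $\beta=\Enc(x^{(j)})$. Injectivity of $\Enc$ then makes $x^{(j)}=x$ the same for every nonzero column.
\end{itemize}
Hence $C_\beta=zr^\top$ with $z=(1,x^\top)^\top$. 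Symmetry then gives $r=r_0z$ with $r_0\ne0$, so $C_\beta=a_\beta zz^\top$ and $a_\beta=C_\beta(0,0)\neq0$. No quadratic equation is used. Your ``first attempt'' points this way, but it needs the $\F_p$-valued coefficients, the column-wise use of injectivity, and symmetry to turn $zr^\top$ into a multiple of $zz^\top$.

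The order of the isolations is also backwards in your plan. Condition~5 mixes all $(n+1)^2$ moment sequences. Their root union is a priori bounded only by $(n+1)\rho$, which can exceed the window $L=c_p(\kappa-1)+1$, since $L$ does not grow with $n$. So you cannot pass Condition~5 to the coefficient matrices by Vandermonde isolation at the outset. What permits it is $\mathcal E=\mathcal E_{00}$ with $|\mathcal E|\le\rho<L$. That equality is itself deduced from the rank-one form, because $C_\beta(0,0)=a_\beta\neq0$ for every active root. Since you want the rank-one form to come from Booleanity, the argument is circular as written.

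The correct sequence is:
\begin{enumerate}
\item Use symmetry and encoding on their local supports, of at most $2\rho$ and $\rho$ roots.
\item Obtain $C_\beta=a_\beta z_\beta z_\beta^\top$.
\item Deduce $\mathcal E=\mathcal E_{00}$.
\item Isolate each $Q_\nu$ to get $Q_\nu(z_\beta)=0$; the Booleanity forms then make $x_\beta$ a $0$--$1$ vector.
\item Use normalization, $1=\eta_{00,0}=\sum_\beta a_\beta$, to conclude that $\mathcal E$ is nonempty.
\end{enumerate}
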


We prove completeness here and soundness in the next section.
Suppose $x$ satisfies the source, and put
$z=(1,x^\top)^\top$ and $\xi=\Enc(x)$. For $\sigma\in S$,
define the feature label
\[
 \boldsymbol\lambda_x(\sigma)
     =(P_{x,0}(\sigma),\ldots,P_{x,r-1}(\sigma))^\top.
\]
Set $T_{\boldsymbol\lambda_x(\sigma),\sigma}=zz^\top$ and
$T_{\boldsymbol\lambda,\sigma}=0_{(n+1)\times(n+1)}$
for all other labels. Then
\[
 f_{ij,\mathbf e}(\sigma)
    =z_i z_j\prod_hP_{x,h}(\sigma)^{e_h}.
\]
The product polynomial has degree at most $D<|S|$. Uniqueness
of interpolation identifies it with $F_{ij,\mathbf e}(X)$.
Evaluation at $\zeta$ gives
\begin{equation}\label{eq:honest-moments}
 \eta_{ij,k}=z_i z_j\xi^k.
\end{equation}
Degree and exponent consistency follow. Frobenius holds because
$z_i z_j\in\F_p$; the encoding identity follows from
$\xi=\sum_i\alpha_i x_i$; the quadratic identity is
$\xi^kQ_\nu(z)=0$. Symmetry and normalization use the symmetric
matrix $zz^\top$ and $z_0=1$. Since $z\in\{0,1\}^{n+1}$,
\begin{equation}\label{eq:yes-weight}
 \wt(T)=|S|\wt(z)^2\le |S|(n+1)^2=R_1.
\end{equation}
This witness is used to prove completeness, not to construct the
linear system: its unknown satisfying assignment is never enumerated.

\section{Soundness: rank, recurrences, and assignment recovery}\label{sec:rank}

The honest moments satisfy
$\eta_{ij,u+v}=(z_i\xi^u)(z_j\xi^v)$, so their moment matrix
has rank one at $\zeta$. An arbitrary feasible table need not have
that form. We first relate its weight to polynomial matrix rank.
In the low-rank branch we then obtain short recurrences, remove
nilpotent contributions using Frobenius, and pass each constraint
to the coefficient of each distinct root. These steps establish
that every sufficiently light feasible table yields a satisfying
assignment.

\subsection{The polynomial moment matrix and rank bounds}

Define $\mathcal U(X)\in K[X]^{(n+1)L\times(n+1)\kappa}$ by
\begin{equation}\label{eq:moment-matrix}
 \mathcal U(X)_{(i,u),(j,\ell)}
       =F_{ij,\Lift(u)+\Lift(\ell)}(X),\qquad
       0\le u<L,\quad0\le\ell<\kappa.
\end{equation}
The digit sum lies in $\mathcal P$, so every matrix entry is
defined. The next lemma describes the evaluations at $\zeta$ and
at the samples, and relates sample ranks to table weight.

\begin{lemma}[Evaluations of the moment matrix and sample ranks]
\label{lem:moment-evaluations}
For any table $T$ satisfying the exponent-consistency condition,
\begin{equation}\label{eq:hankel-specialization}
 \mathcal U(\zeta)_{(i,u),(j,\ell)}=\eta_{ij,u+\ell}.
\end{equation}
For every table $T$, whether or not it satisfies that condition,
\begin{equation}\label{eq:sample-rank}
 \sum_{\sigma\in S}\rank\mathcal U(\sigma)
    \le\sum_{\sigma\in S}\sum_{\boldsymbol\lambda\in K^r}
                              \rank T_{\boldsymbol\lambda,\sigma}
    \le\wt(T),
\end{equation}
and $\rank\mathcal U(\zeta)\le\rank\mathcal U(X)$.
\end{lemma}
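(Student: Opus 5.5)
The plan has three parts, one for each claim of the lemma.

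\emph{Evaluation at $\zeta$.} This is a direct unwinding of the definitions. The entry $\mathcal U(\zeta)_{(i,u),(j,\ell)}$ equals $F_{ij,\mathbf e}(\zeta)$ with $\mathbf e=\Lift(u)+\Lift(\ell)\in\mathcal P_{\mathrm{prod}}$, and $\ind(\mathbf e)=u+\ell<J_0$. Condition 2 says that $F_{ij,\mathbf e}(\zeta)$ depends only on $\ind(\mathbf e)$, and it defines $\eta_{ij,u+\ell}$ as this common value. So \eqref{eq:hankel-specialization} holds, and nothing beyond exponent consistency is needed.

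\emph{Sample ranks.} Fix $\sigma\in S$. Since $\ell^S_\tau(\sigma)=\delta_{\tau\sigma}$, we get $F_{ij,\mathbf e}(\sigma)=f_{ij,\mathbf e}(\sigma)$. The key identity is that vector weights are multiplicative: $w_{\boldsymbol\lambda}(\Lift(u)+\Lift(\ell))=w_{\boldsymbol\lambda}(u)\,w_{\boldsymbol\lambda}(\ell)$. This holds exactly because the uncarried sum is kept, which is the reason $\mathcal P_{\mathrm{prod}}$ was introduced; with $0^0=1$ it is still valid when some $\lambda_h=0$. Hence
\[
 \mathcal U(\sigma)=\sum_{\boldsymbol\lambda\in K^r} (\mathbf a_{\boldsymbol\lambda}\otimes I_{n+1})\,T_{\boldsymbol\lambda,\sigma}\,(\mathbf b_{\boldsymbol\lambda}\otimes I_{n+1})^\top,
\]
where $\mathbf a_{\boldsymbol\lambda}=(w_{\boldsymbol\lambda}(u))_{0\le u<L}$ and $\mathbf b_{\boldsymbol\lambda}=(w_{\boldsymbol\lambda}(\ell))_{0\le\ell<\kappa}$. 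The Kronecker factors should be ordered to match the $(i,u)$ and $(j,\ell)$ indexing, and this bookkeeping is the only place where care is needed. Each summand has rank at most $\rank T_{\boldsymbol\lambda,\sigma}$, and rank is subadditive, which gives the first inequality. The second inequality holds because the rank of a matrix is at most its number of nonzero entries, and the blocks $T_{\boldsymbol\lambda,\sigma}$ partition the coordinates of $T$. No feasibility condition is used in this part.

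\emph{Rank at $\zeta$.} Rank over $K[X]$ means rank over $K(X)$. A nonzero $\rho\times\rho$ minor of $\mathcal U(\zeta)$ lies in $E$, and it equals the corresponding minor of $\mathcal U(X)$ evaluated at $\zeta$, via the evaluation homomorphism $K[X]\to E$. So that polynomial minor is nonzero, and therefore $\rank\mathcal U(\zeta)\le\rank\mathcal U(X)$. The same argument applies at each $\sigma$.

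I expect the main obstacle to be stating the factorization at the samples cleanly. Everything else is immediate from the definitions.
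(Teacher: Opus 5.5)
Your proposal is correct and matches the paper's proof step for step: additivity of $\ind$ with Condition 2 at $\zeta$, multiplicativity of $w_{\boldsymbol\lambda}$ on uncarried digit sums at the samples, rank subadditivity together with $\rank B\le\wt(B)$, and vanishing of polynomial minors under specialization. The only cosmetic difference is that you bound each summand through the product form $(I_{n+1}\otimes\mathbf a_{\boldsymbol\lambda})\,T_{\boldsymbol\lambda,\sigma}\,(I_{n+1}\otimes\mathbf b_{\boldsymbol\lambda})^\top=T_{\boldsymbol\lambda,\sigma}\otimes(\mathbf a_{\boldsymbol\lambda}\mathbf b_{\boldsymbol\lambda}^\top)$, whereas the paper splits $T_{\boldsymbol\lambda,\sigma}$ into rank-one terms. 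That ordering is the one matching the $(i,u)$ indexing; your $\mathbf a_{\boldsymbol\lambda}\otimes I_{n+1}$ version differs from it by a permutation, which does not affect rank.
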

\begin{proof}
The index map is additive on digit vectors. Thus
\[
 \ind\bigl(\Lift(u)+\Lift(\ell)\bigr)=u+\ell,
 \qquad u+\ell\le L+\kappa-2=J_0-1.
\]
Condition 2 therefore identifies the evaluation of each entry in
\eqref{eq:moment-matrix} with the indicated moment, proving
\eqref{eq:hankel-specialization}.

At a sample $\sigma$, interpolation gives
$F_{ij,\mathbf e}(\sigma)=f_{ij,\mathbf e}(\sigma)$. Also,
$w_{\boldsymbol\lambda}(\mathbf e+\mathbf e')
=w_{\boldsymbol\lambda}(\mathbf e)
 w_{\boldsymbol\lambda}(\mathbf e')$. Substitution into
\eqref{eq:weighted-values} yields
\[
 \mathcal U(\sigma)_{(i,u),(j,\ell)}
       =\sum_{\boldsymbol\lambda}
           T_{ij,\boldsymbol\lambda,\sigma}
            w_{\boldsymbol\lambda}(u)w_{\boldsymbol\lambda}(\ell).
\]
Define columns
$\mathbf v_{\boldsymbol\lambda}
 =(w_{\boldsymbol\lambda}(u))_{0\le u<L}\in K^L$ and
$\mathbf w_{\boldsymbol\lambda}
 =(w_{\boldsymbol\lambda}(\ell))_{0\le\ell<\kappa}\in K^\kappa$.
In block form the preceding identity is
\[
 \mathcal U(\sigma)
   =\sum_{\boldsymbol\lambda\in K^r}
      T_{\boldsymbol\lambda,\sigma}\otimes
       (\mathbf v_{\boldsymbol\lambda}
        \mathbf w_{\boldsymbol\lambda}^{\top}),
\]
where $A\otimes B$ is the block matrix with $(i,j)$ block
$A_{ij}B$. To bound one summand, write a rank-$\tau$ table block as
$T_{\boldsymbol\lambda,\sigma}
=\sum_{h=1}^\tau a_h b_h^\top$. Its contribution is then
\[
 \sum_{h=1}^\tau
 (a_h\otimes\mathbf v_{\boldsymbol\lambda})
 (b_h\otimes\mathbf w_{\boldsymbol\lambda})^\top,
\]
a sum of $\tau$ matrices of rank at most one. Rank subadditivity
therefore gives
$\rank\mathcal U(\sigma)
\le\sum_{\boldsymbol\lambda}\rank T_{\boldsymbol\lambda,\sigma}$.
For any matrix $B$, writing $B$ as the sum of its single-entry
matrices gives $\rank B\le\wt(B)$. Summing over the disjoint
table blocks and samples proves \eqref{eq:sample-rank}.

Finally, every minor larger than $\rank\mathcal U(X)$ vanishes
as a polynomial. It remains zero after evaluation at $\zeta$,
so specialization cannot increase rank.
\end{proof}

The degree condition converts the preceding sample-rank bound
into a weight bound for the polynomial matrix rank, without
first choosing a threshold.

\begin{lemma}[Polynomial rank forces table weight]
\label{lem:rank-weight}
Let $\rho=\rank\mathcal U(X)$. If $T$ satisfies Condition 1,
that is, $\deg F_{ij,\mathbf e}(X)\le D$ for every
$0\le i,j\le n$ and $\mathbf e\in\mathcal P$, then
\begin{equation}\label{eq:rank-weight}
 \wt(T)\ge\sum_{\sigma\in S}\rank\mathcal U(\sigma)
       \ge\rho(|S|-D).
\end{equation}
In particular, $\rho\ge\kappa$ implies
$\wt(T)\ge\kappa(|S|-D)=R_0$.
\end{lemma}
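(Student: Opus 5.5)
The first inequality in \eqref{eq:rank-weight} is the already established bound \eqref{eq:sample-rank} of Lemma~\ref{lem:moment-evaluations}, so only the second inequality needs an argument. The plan is a standard ``rank drops only at roots of a minor'' count. Choose a $\rho\times\rho$ minor $M(X)$ of $\mathcal U(X)$ that is nonzero as a polynomial; one exists by the definition of fraction-field rank. Each entry of $\mathcal U(X)$ is some $F_{ij,\mathbf e}(X)$ with $\mathbf e\in\mathcal P$, so Condition~1 gives entry degrees at most $D$. Hence $\deg M\le\rho D$.

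Next, split the samples into good and bad ones. Call $\sigma\in S$ \emph{bad} if $M(\sigma)=0$. At a good sample the same minor is a nonzero $\rho\times\rho$ minor of $\mathcal U(\sigma)$, so $\rank\mathcal U(\sigma)\ge\rho$. Bounding the number of bad samples by $\deg M\le\rho D$ only gives $\sum_\sigma\rank\mathcal U(\sigma)\ge\rho(|S|-\rho D)$, which is weaker than the claim. So the per-minor count must be refined.

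The refinement, which I expect to be the main obstacle, is to show that the rank deficiency $\rho-\rank\mathcal U(\sigma)$ summed over $\sigma\in S$ is at most $\rho D$, not that the number of bad samples is at most $D$. I would use the local Smith normal form over the DVR $K[X]_{(X-\sigma)}$. Write $\mathcal U(X)=P_\sigma\,\mathrm{diag}((X-\sigma)^{e_1},\dots,(X-\sigma)^{e_\rho},0,\dots)\,Q_\sigma$ with $P_\sigma,Q_\sigma$ invertible locally. Then $\rank\mathcal U(\sigma)=\#\{h:e_h=0\}$, so
\[
 \rho-\rank\mathcal U(\sigma)\le\sum_h e_h=v_\sigma(\Delta_\rho),
\]
where $\Delta_\rho=\gcd$ of all $\rho\times\rho$ minors. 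Summing over the distinct samples gives $\sum_{\sigma}(\rho-\rank\mathcal U(\sigma))\le\deg\Delta_\rho\le\deg M\le\rho D$. Here $\Delta_\rho\ne0$ because it divides the nonzero $M$, and distinct linear factors contribute disjointly to the degree.

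Rearranging gives $\sum_\sigma\rank\mathcal U(\sigma)\ge\rho|S|-\rho D=\rho(|S|-D)$, which is \eqref{eq:rank-weight}. For the final clause, note that if $\rho\ge\kappa$ then $\rho(|S|-D)\ge\kappa(|S|-D)=R_0$, since $|S|\ge4D>D$ makes the factor positive. As a fallback to avoid Smith forms, I would take a maximal nonsingular $\rho\times\rho$ submatrix $B(X)$ and apply the elementary fact that $\rank B(\sigma)\ge\rho-v_\sigma(\det B)$. This follows from the adjugate, or from column operations over $K[X]$ clearing a factor of $X-\sigma$ for each dimension of the kernel of $B(\sigma)$. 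Summing gives the same bound.
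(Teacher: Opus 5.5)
Your proof is correct, and your fallback is exactly the paper's argument. The paper quotes \eqref{eq:sample-rank} for the first inequality and then applies Proposition~\ref{prop:sample-rank-foundation}. That proposition works as follows:
\begin{itemize}
\item it fixes one $\rho\times\rho$ submatrix $C(X)$ with $\Delta=\det C\ne0$;
\item it invokes Lemma~\ref{lem:multiplicity}: row-reduce the constant matrix $C(\sigma)$ by an invertible $P$, so that $\delta_\sigma=\rho-\rank C(\sigma)$ rows of $PC(X)$ are divisible by $X-\sigma$, whence $(X-\sigma)^{\delta_\sigma}\mid\Delta$;
\item it sums over the distinct samples to get $\sum_\sigma\delta_\sigma\le\deg\Delta\le\rho D$, and finishes with $\rank\mathcal U(\sigma)\ge\rank C(\sigma)$.
\end{itemize}

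Your primary route, the local Smith form over $K[X]_{(X-\sigma)}$, is a genuinely different tool and is also valid. It bounds the deficiency of the whole rectangular matrix directly by $v_\sigma(\Delta_\rho)$, where $\Delta_\rho$ is the gcd of all $\rho$-minors. This is slightly sharper than the valuation of a single chosen minor. The cost is the structure theorem over a DVR and the invariance of determinantal divisors under invertible local transformations. The sharpening is not needed here, since both $\deg\Delta_\rho$ and $\deg\det C$ are at most $\rho D$. The paper's one-minor argument uses nothing beyond row reduction over an arbitrary field, which is why it was chosen.

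The case $\rho=0$, which the paper dispatches separately, is covered in your argument by the empty minor, equal to $1$.
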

\begin{proof}
The first inequality is \eqref{eq:sample-rank}. For the second,
apply Proposition~\ref{prop:sample-rank-foundation} over $K$ to
$\mathcal U(X)$: its rank over $K(X)$ is $\rho$, and Condition 1
bounds the degree of each entry by $D$.
The final assertion uses $|S|\ge4D$, so $|S|-D>0$.
\end{proof}

The bound charges the rank lost at each sample to the multiplicity
of a zero of one determinant. Thus a sample count of order $D$
suffices to retain a constant proportion of the total rank.
The underlying determinant-multiplicity statement is proved
over an arbitrary field in Appendix~\ref{sec:rank-deficiency}.

\subsection{A shared block-column recurrence on a finite window}
\label{sec:atoms}

Assume $\wt(T)<R_0$. By \eqref{eq:rank-weight}, $\rho<\kappa$.
All $(\rho+1)$-minors vanish identically, so specialization gives
$\rank\mathcal U(\zeta)\le\rho$. For a fixed $j$, retain all
rows and the $\kappa$ columns indexed by $(j,b)$:
\begin{equation}\label{eq:block-column-matrix}
 B_j=\bigl(\eta_{ij,u+b}\bigr)_{(i,u),b},\qquad
 0\le i\le n,\quad 0\le u<L,\quad 0\le b<\kappa.
\end{equation}
This submatrix has rank at most $\rho$. Its column dependence
therefore supplies a recurrence simultaneously for all $n+1$
sequences in that column, which is stronger than choosing a
separate recurrence for each scalar block.

\begin{lemma}[Common recurrence and joint root support]
\label{lem:block-column-atoms}
Let $T$ be a feasible table under \eqref{eq:budgets}, with
$\wt(T)<R_0$, and put $\rho=\rank\mathcal U(X)$.
For each $j$ there is a monic polynomial $\chi_j\in E[Z]$ of degree
$0\le\ell_j\le\rho$ and root sets $\mathcal E_{ij}$ such that,
in a common splitting field $\Omega/E$,
\begin{equation}\label{eq:entry-atoms}
 \eta_{ij,k}=\sum_{\beta\in\mathcal E_{ij}}a_{ij,\beta}\beta^k,
 \qquad a_{ij,\beta}\in\F_p^\times,\qquad 0\le k<L.
\end{equation}
All roots in $\mathcal E_{ij}$ are roots of $\chi_j$. In particular,
\begin{equation}\label{eq:column-root-bound}
 \mathcal E_j=\bigcup_{i=0}^n\mathcal E_{ij},\qquad
 |\mathcal E_j|\le\ell_j\le\rho<\kappa.
\end{equation}
If $\ell_j=0$, then $\chi_j=1$ and all these sets are empty.
\end{lemma}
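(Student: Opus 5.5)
The plan is to extract one linear recurrence from the column dependence of the block-column matrix $B_j$ of \eqref{eq:block-column-matrix}. We then pass each of the $n+1$ sequences in that column to the root expansion of Lemma~\ref{lem:atoms}, using Frobenius to remove multiplicities and the nilpotent part.

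\emph{Step 1: the recurrence.} By Lemma~\ref{lem:rank-weight} and $\wt(T)<R_0$ we have $\rho<\kappa$. Since specialization does not increase rank, $\rank B_j\le\rank\mathcal U(\zeta)\le\rho$. Hence the $\kappa$ columns $(j,b)$, $0\le b<\kappa$, of $B_j$ are linearly dependent over $E$. Let $\ell_j$ be the least index whose column lies in the $E$-span of the columns $0,\ldots,\ell_j-1$. The first $\ell_j$ columns are then independent, so $\ell_j\le\rho$. If the zeroth column is itself zero, then $\ell_j=0$, $\chi_j=1$, and every $\eta_{ij,k}$ with $k<L$ vanishes; all sets are empty and we are done. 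Otherwise write column $\ell_j=\sum_{b<\ell_j}c_b\cdot(\text{column }b)$ and put $\chi_j(Z)=Z^{\ell_j}-\sum_b c_bZ^b\in E[Z]$. Reading this off row $(i,u)$ gives, for every $i$ simultaneously,
\[
 \eta_{ij,u+\ell_j}=\sum_{b<\ell_j}c_b\,\eta_{ij,u+b},\qquad 0\le u<L .
\]
So each sequence $(\eta_{ij,k})_{0\le k<L+\ell_j}$ satisfies the single recurrence with characteristic polynomial $\chi_j$ on a window of length at least $L$.

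\emph{Step 2: the Frobenius range.} Every index we need Frobenius for lies in $J_0$. The recurrence basis indices are $b<\ell_j\le\kappa-1$. Since $p\le c_p+1$, we get $pb\le(c_p+1)(\kappa-1)=J_0-1$, so Condition~3 gives $\eta_{ij,pb}=\eta_{ij,b}^p$ for all these $b$. We also check that all indices $pb$ actually used lie in the range where the recurrence has been established. This is exactly why $L=c_p(\kappa-1)+1$: it makes $L+\ell_j-1\ge pb$ in the relevant cases. For $p=2$ the extra margin $c_p=2$ is what lets symmetry compare supports later.

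\emph{Step 3: the root expansion, which is the main obstacle.} Represent each sequence as $\eta_{ij,k}=\varphi_i(Z^k\bmod\chi_j)$ for an $E$-linear functional $\varphi_i$ on $A_j=E[Z]/(\chi_j)$, and split $A_j$ by the Chinese remainder theorem over the splitting field $\Omega$. A priori the sequence is a sum of terms $g_\beta(k)\beta^k$, with polynomial coefficients for repeated roots and transient terms for the root $0$. The Frobenius identities on the recurrence basis must force three things: each $g_\beta$ is constant, the nilpotent and zero-root contributions vanish, and each constant satisfies $a^p=a$, i.e.\ lies in $\F_p$. This is the delicate point, because the coefficients $c_b$ lie in $E$ rather than $\F_p$, so Frobenius does not commute naively with the recurrence. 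I would invoke Lemma~\ref{lem:atoms} for exactly this conclusion, applied on the window $0\le k<L$ with order $\ell_j$. The inputs are the recurrence from Step~1 and the Frobenius relations from Step~2.

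\emph{Step 4: conclusion.} The expansion yields \eqref{eq:entry-atoms}, where $\mathcal E_{ij}$ is the set of roots receiving a nonzero coefficient; hence $\mathcal E_{ij}\subseteq\{\beta:\chi_j(\beta)=0\}$. Because $\chi_j$ is common to all $i$, the union $\mathcal E_j$ is contained in the root set of one polynomial of degree $\ell_j$. This gives $|\mathcal E_j|\le\ell_j\le\rho<\kappa$, which is \eqref{eq:column-root-bound}.
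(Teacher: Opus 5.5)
Your proposal is correct and follows the paper's proof essentially step for step. The first dependent column of $B_j$ gives the shared $\chi_j$, and the recurrence extension (your functional $\varphi_i$ on $E[Z]/(\chi_j)$) agrees with the data through index $L+\ell_j-1\le J_0-1$. The inequality $p(\ell_j-1)=(p-1)(\ell_j-1)+\ell_j-1\le L+\ell_j-2$ then transfers Condition~3 to that extension; it holds for every $b<\ell_j$, not just in ``relevant cases''. Finally, Lemma~\ref{lem:atoms} applied with the common $\chi_j$ gives the expansion and the joint bound $|\mathcal E_j|\le\ell_j$.

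One point to tighten: Lemma~\ref{lem:atoms} is stated for infinite sequences. Apply it to these recurrence extensions and then restrict to $k<L$, rather than applying it ``on the window $0\le k<L$''.
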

\begin{proof}
Choose the first column of $B_j$ dependent on its predecessors,
and let its index be $\ell_j$. Since $\rank B_j\le\rho<\kappa$,
this column exists and $0\le\ell_j\le\rho$. If $\ell_j=0$,
the first column is zero, so $\eta_{ij,k}=0$ for every $i$ and
$k<L$; use the zero infinite extensions and $\chi_j=1$.

Suppose $\ell_j\ge1$. The preceding columns are independent,
and there are unique $\gamma_{j,0},\ldots,\gamma_{j,\ell_j-1}\in E$
such that
\begin{equation}\label{eq:finite-recurrence}
 \eta_{ij,u+\ell_j}=\sum_{b<\ell_j}\gamma_{j,b}\eta_{ij,u+b}
 \qquad(0\le i\le n,\ 0\le u<L).
\end{equation}
Set
\[
 \chi_j(Z)=Z^{\ell_j}-\sum_{b<\ell_j}\gamma_{j,b}Z^b.
\]
For each $i$, initialize an infinite sequence with its first
$\ell_j$ moments and extend using $\chi_j$. Induction in
\eqref{eq:finite-recurrence} proves agreement with the supplied
data through index $L+\ell_j-1$, with no assertion beyond that
prefix. This is the same finite-window argument as in
Lemma~\ref{lem:finite-hankel}, now with the coefficients shared
by all $i$.

For $0\le b<\ell_j$, both $b$ and $pb$ lie in that agreement
range. Indeed, $c_p\ge p-1$ and $\ell_j\le\kappa-1$ give
\[
 \begin{aligned}
 p(\ell_j-1)
   &=(p-1)(\ell_j-1)+\ell_j-1\\
   &\le c_p(\kappa-1)+\ell_j-1=L+\ell_j-2,\\
 L+\ell_j-1&\le L+\kappa-2=J_0-1.
 \end{aligned}
\]
Consequently Condition 3 transfers the required Frobenius
identities to each recurrence extension. Choose a common splitting
field $\Omega$ for the finitely many $\chi_j$. Lemma~\ref{lem:atoms}
applies with the same $\chi_j$ for every sequence in column $j$:
it need not be the minimal recurrence of any individual sequence.
It yields \eqref{eq:entry-atoms}, with all active roots among
the distinct roots of $\chi_j$. Their union therefore has size
at most $\deg\chi_j=\ell_j$, proving \eqref{eq:column-root-bound}.
\end{proof}

The quotient-algebra proof of Lemma~\ref{lem:atoms} uses Frobenius
to kill nilpotent contributions and force the surviving weights
into $\F_p$. The nonzero weights equal one only for $p=2$.
The symbol $\beta$ denotes these recurrence roots, which are
initially in $\Omega$; Section~\ref{sec:assignment-recovery}
will prove that they lie in $E$ and equal encodings of assignments.

\subsection{Synchronization of symmetry and assignment encoding}
\label{sec:synchronization}

Put $\mathcal E=\bigcup_{i,j}\mathcal E_{ij}$ and let $A_\beta(i,j)$ be the
coefficient $a_{ij,\beta}$, or zero if that root is absent in $\mathcal E_{ij}$.
Thus $A_\beta\in\F_p^{(n+1)\times(n+1)}$ is nonzero and
\[
 \eta_{ij,k}=\sum_{\beta\in\mathcal E}A_\beta(i,j)\beta^k
       \qquad(0\le k<L).
\]
The joint bound \eqref{eq:column-root-bound} applies within each
column; it does not yet bound the global union by $\rho$.
We first synchronize symmetry and assignment encoding using
their local supports. The resulting matrix structure will then
give the global bound.

For a set $B$ of $s$ distinct roots, the identities
$\sum_{\beta\in B}c_\beta\beta^k=0$ for $0\le k<s$ force
$c_\beta=0$. Indeed the degree-less-than-$s$ Lagrange polynomial
at a root $\beta$ is a linear combination of those powers, and
applying that combination isolates $c_\beta$.
This is the Vandermonde or interpolation isolation rule. The same
independence proves the BCH distance bound in
Proposition~\ref{prop:bch-bound}.

Symmetry compares the moment sequences associated with $(i,j)$
and $(j,i)$, so it involves at most $2\rho$ roots. Assignment
encoding combines $(0,j),(1,j),\ldots,(n,j)$ in one column,
whose joint support has size at most $\rho$. Since $c_p\ge2$,
the two required bounds are
\begin{equation}\label{eq:local-window}
 2\rho\le2(\kappa-1)\le L-1,\qquad
 |\mathcal E_j|\le\rho\le\kappa-1\le L-1.
\end{equation}
\paragraph{Symmetry.}
Fix $i,j$ and let
\[
 \mathcal E_{ij}^{\mathrm{sym}}
    =\mathcal E_{ij}\cup\mathcal E_{ji},\qquad
 s=|\mathcal E_{ij}^{\mathrm{sym}}|.
\]
Condition 6 makes the two table readouts equal, so their
interpolants and moments are equal as well:
$\eta_{ij,k}=\eta_{ji,k}$ for $0\le k<L$. Substituting
\eqref{eq:entry-atoms} and inserting zero coefficients for absent
roots gives
\[
 \sum_{\beta\in\mathcal E_{ij}^{\mathrm{sym}}}
       \bigl(A_\beta(i,j)-A_\beta(j,i)\bigr)\beta^k=0
       \qquad(0\le k<L).
\]
Here $s\le2\rho\le L-1$ by
\eqref{eq:local-window}. If $s>0$, the first $s$ equations,
with $k=0,\ldots,s-1$, form a square Vandermonde system over
$\Omega$. Its columns correspond to distinct roots, so it is
invertible and forces every coefficient difference to be zero.
If $s=0$, both moment expansions have no roots and all the
coefficient differences are already zero. For a root outside
$\mathcal E_{ij}^{\mathrm{sym}}$, both matrix entries are zero
by definition. Since $i,j$ were arbitrary, we obtain
\begin{equation}\label{eq:atom-symmetry}
 A_\beta=A_\beta^\top\qquad(\beta\in\mathcal E).
\end{equation}

\paragraph{Assignment encoding.}
Fix $j$ and put
\[
 \mathcal E_j^{\mathrm{enc}}
    =\bigcup_{i=0}^n\mathcal E_{ij},\qquad
 s_j=|\mathcal E_j^{\mathrm{enc}}|
       \le\ell_j\le\rho\le L-1.
\]
For $0\le k<L-1$, both $k$ and $k+1$ lie in the range
$0,\ldots,L-1$ where the root expansions are valid.
Condition 4, namely \eqref{eq:encoding-moment}, therefore gives
\[
 \begin{aligned}
 0&=\eta_{0j,k+1}-\sum_{i=1}^n\alpha_i\eta_{ij,k}\\
  &=\sum_{\beta\in\mathcal E_j^{\mathrm{enc}}}
       \left(\beta A_\beta(0,j)
                -\sum_{i=1}^n\alpha_i A_\beta(i,j)\right)\beta^k
       \qquad(0\le k<L-1).
 \end{aligned}
\]
The factor $\beta$ in the coefficient comes from
$\beta^{k+1}=\beta\,\beta^k$. If $s_j>0$, take
$k=0,\ldots,s_j-1$. These indices are available because
$s_j-1\le L-2$. The same Vandermonde argument forces each
coefficient in parentheses to vanish. If $s_j=0$ the assertion
is immediate; for roots outside $\mathcal E_j^{\mathrm{enc}}$
all the matrix entries involved are zero. Thus, for every root
in $\mathcal E$, we have
\begin{equation}\label{eq:atom-encoding}
 \beta A_\beta(0,j)=\sum_{i=1}^n\alpha_i A_\beta(i,j)
       \qquad(0\le j\le n).
\end{equation}
These arguments isolate roots within each individual relation;
they do not require a bound on the entire union $\mathcal E$.

\subsection{The common root set and assignment recovery}
\label{sec:assignment-recovery}

Fix a nonzero $A=A_\beta$. If $A(0,j)=0$, then
\eqref{eq:atom-encoding} and $\F_p$-independence of the $\alpha_i$
force every entry in column $j$ to vanish. Some column is nonzero,
and its first entry $r_j=A(0,j)$ is therefore nonzero. Divide
that column by $r_j$ to obtain a vector $(1,x^\top)^\top$ with
$x\in\F_p^n$ and
\[
 \beta=\sum_i\alpha_i x_i.
\]
Injectivity of $\Enc$ makes the normalized vector the same for
every nonzero column. Thus $A=z r^\top$, with $z=(1,x^\top)^\top$
and $r$ its first row written as a column. Symmetry gives
$r_j=A(j,0)=z_jr_0$. If $r_0=0$ the whole matrix is zero,
a contradiction. Consequently
\begin{equation}\label{eq:atom-assignment}
 A_\beta=a_\beta zz^\top,\qquad
 a_\beta=A_\beta(0,0)\in\F_p^\times,\qquad
 \beta=\Enc(x)\in E.
\end{equation}
The vectors $x$ and $z$ were constructed for the fixed root
$\beta$ and may depend on it. To make this dependence explicit,
write $x_\beta=x$ and $z_\beta=z=(1,x_\beta^\top)^\top$.
Thus \eqref{eq:atom-assignment} gives
$A_\beta=a_\beta z_\beta z_\beta^\top$ for every $\beta\in\mathcal E$.
Each active root $\beta$ therefore lies in $E$ and encodes the
vector $x_\beta\in\F_p^n$, although initially it was only known
to lie in $\Omega$. Distinct roots give distinct vectors by
injectivity of $\Enc$. Booleanity and clause satisfaction will
be proved below.

\begin{lemma}[The normalization entry contains every active root]
\label{lem:common-roots}
For the coefficient matrices above,
\[
 \mathcal E=\mathcal E_{00},\qquad |\mathcal E|\le\rho<\kappa.
\]
This conclusion uses symmetry and assignment encoding, without
using any quadratic moment equation.
\end{lemma}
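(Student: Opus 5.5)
The plan is to read the conclusion off the rank-one form \eqref{eq:atom-assignment} and then apply the column bound \eqref{eq:column-root-bound} to the column $j=0$. Only symmetry \eqref{eq:atom-symmetry} and assignment encoding \eqref{eq:atom-encoding} entered the derivation of \eqref{eq:atom-assignment}, so no quadratic moment equation will be used.

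\textbf{Step 1: $\mathcal E_{00}\subseteq\mathcal E$.} This inclusion holds by definition of $\mathcal E=\bigcup_{i,j}\mathcal E_{ij}$.

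\textbf{Step 2: $\mathcal E\subseteq\mathcal E_{00}$.} Take any $\beta\in\mathcal E$, so that $A_\beta\ne0$. By \eqref{eq:atom-assignment} we have $A_\beta=a_\beta z_\beta z_\beta^\top$ with $a_\beta\in\F_p^\times$ and $(z_\beta)_0=1$. Hence $A_\beta(0,0)=a_\beta\ne0$. By definition, $A_\beta(0,0)$ is the coefficient $a_{00,\beta}$ when $\beta\in\mathcal E_{00}$, and is zero otherwise. A nonzero value therefore forces $\beta\in\mathcal E_{00}$.

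One point needs care here: $\beta$ must really appear in $\mathcal E_{00}$, and not merely be a root whose coefficient was defined to be zero. This is exactly what the convention defining $A_\beta$ guarantees. I would also double-check that \eqref{eq:atom-assignment} was derived for every root in $\mathcal E$, not only those in a single column. That holds because \eqref{eq:atom-symmetry} and \eqref{eq:atom-encoding} were established for all $\beta\in\mathcal E$, with zero entries for absent roots.

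\textbf{Step 3: the size bound.} Since $\mathcal E_{00}\subseteq\mathcal E_0=\bigcup_i\mathcal E_{i0}$, Lemma~\ref{lem:block-column-atoms} with $j=0$ gives
\[
|\mathcal E|=|\mathcal E_{00}|\le|\mathcal E_0|\le\ell_0\le\rho<\kappa.
\]
The last inequality $\rho<\kappa$ comes from the standing hypothesis $\wt(T)<R_0$ via Lemma~\ref{lem:rank-weight}.

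The only real obstacle is the bookkeeping in Step 2. The definition of $A_\beta(i,j)$ as zero for absent roots, combined with the $(0,0)$ entry being $a_\beta\neq 0$, is what turns the algebraic conclusion into a membership statement. No Vandermonde argument over the global union is needed, which matches the remark after \eqref{eq:atom-encoding}.
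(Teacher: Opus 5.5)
Your proof is correct and matches the paper's argument: the nonzero $(0,0)$ entry $a_\beta$ from \eqref{eq:atom-assignment} forces $\beta\in\mathcal E_{00}$ by the zero-for-absent-roots convention, the reverse inclusion holds by definition, and the size bound comes from the support bound of Lemma~\ref{lem:block-column-atoms}. Your route to that bound, through $\mathcal E_{00}\subseteq\mathcal E_0$ and \eqref{eq:column-root-bound} with $j=0$, is in fact the precise justification for the step the paper attributes to \eqref{eq:entry-atoms}.
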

\begin{proof}
For each $\beta\in\mathcal E$, the matrix $A_\beta$ is nonzero by definition.
Equation~\eqref{eq:atom-assignment} gives
$A_\beta(0,0)=a_\beta\ne0$, so this root occurs in the expansion
of $\eta_{00,k}$. By the definition of $A_\beta(0,0)$, this means
$\beta\in\mathcal E_{00}$. The reverse inclusion follows from the
definition of $\mathcal E$ as the union of the root sets for all
pairs $(i,j)$.
Finally \eqref{eq:entry-atoms} gives $|\mathcal E_{00}|\le\rho$.
\end{proof}

We can now impose each quadratic equation on this common root
set. For any $Q_\nu$, Condition 5 gives
\[
 0=\sum_{i,j=0}^n a^{(\nu)}_{ij}\eta_{ij,k}
   =\sum_{\beta\in\mathcal E}
       \left(\sum_{i,j=0}^n a^{(\nu)}_{ij}A_\beta(i,j)\right)\beta^k
       \qquad(0\le k<L).
\]
Since $|\mathcal E|\le\rho<L$, the isolation rule applies to this
root set regardless of how many coefficients of $Q_\nu$ are
nonzero. Hence
\begin{equation}\label{eq:atom-quadratic}
 \sum_{i,j=0}^n a^{(\nu)}_{ij}A_\beta(i,j)=0
       \qquad(\beta\in\mathcal E,\ 1\le\nu\le v).
\end{equation}
For the vector $z_\beta$ associated with a root $\beta$, substitute
$A_\beta(i,j)=a_\beta z_{\beta,i}z_{\beta,j}$ from
\eqref{eq:atom-assignment} into \eqref{eq:atom-quadratic}. This gives
\[
 0=a_\beta\sum_{i,j=0}^n a^{(\nu)}_{ij}z_{\beta,i}z_{\beta,j}
   =a_\beta Q_\nu(z_\beta).
\]
Since $a_\beta\ne0$, we obtain $Q_\nu(z_\beta)=0$ over $\F_p$.
In particular the Booleanity forms give
$x_{\beta,i}(x_{\beta,i}-1)=0$, so $x_{\beta,i}\in\{0,1\}$.
The clause-equation argument of Section~\ref{sec:quadratics}
then recovers a satisfying assignment to the original formula,
over the integers as well as over $\F_p$.

Finally normalization makes $f_{00,\mathbf0}(\sigma)=1$ on $S$.
Its canonical interpolant is the constant one. Therefore
\[
 1=\eta_{00,0}=\sum_{\beta\in\mathcal E}A_\beta(0,0)
             =\sum_{\beta\in\mathcal E}a_\beta
             \qquad\text{in }\F_p.
\]
Hence at least one coefficient $a_\beta$ is nonzero in $\F_p$,
so there is at least one nonzero matrix
$A_\beta=a_\beta z_\beta z_\beta^\top$. The corresponding vector
$x_\beta$ therefore gives a satisfying Boolean assignment by the
preceding argument. Although coefficients can cancel in $\F_p$,
normalization rules out an empty root set. In characteristic two,
every nonzero $a_\beta$ equals one, so the number of roots is odd;
only nonemptiness is needed for the recovery argument.

\begin{remark}[Quadratic sparsity is unnecessary for recovery]
The bound of five nonzero coefficients in the clause forms is
a property of the 3SAT encoding, not a hypothesis of the recovery
argument. Once Lemma~\ref{lem:common-roots} has been established,
the same moment window isolates every quadratic equation over
$\F_p$, including dense ones. The number of its coefficients
affects the description of the affine checks, but does not enter
the rank threshold or soundness bound. Song's proof already
obtains the rank-one coefficient matrices and their nonzero
normalization entries~\cite[Lemma 6.2]{song}, but invokes sparsity
to isolate the quadratic equations. Lemma~\ref{lem:common-roots}
extracts a consequence of that matrix structure and uses it to
remove the sparsity dependence.
\end{remark}

\begin{theorem}[Soundness]\label{thm:soundness}
Under \eqref{eq:budgets}, every feasible table of weight less than
$R_0=\kappa(|S|-D)$ yields a satisfying assignment. Thus an
unsatisfiable source has no feasible table below $R_0$.
\end{theorem}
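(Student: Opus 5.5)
The plan is to assemble the chain of results already established in this section. Let $T$ be a feasible table, that is, one satisfying Conditions 1--7, with $\wt(T)<R_0$. The first step is to put $\rho=\rank\mathcal U(X)$ and apply Lemma~\ref{lem:rank-weight}. This is legitimate because Condition 1 holds and $|S|\ge4D$ gives $|S|-D>0$. If $\rho\ge\kappa$ we would have $\wt(T)\ge R_0$, a contradiction, so $\rho<\kappa$. Lemma~\ref{lem:moment-evaluations} then gives $\rank\mathcal U(\zeta)\le\rho$, and this rank bound transfers to each block column $B_j$.

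Next I would invoke Lemma~\ref{lem:block-column-atoms}. For each $j$ it supplies a common recurrence $\chi_j$ of degree $\ell_j\le\rho$, together with root expansions \eqref{eq:entry-atoms} whose coefficients lie in $\F_p^\times$ on the window $0\le k<L$. Collecting the roots gives coefficient matrices $A_\beta$, one for each $\beta\in\mathcal E$. The synchronization of Section~\ref{sec:synchronization} then yields \eqref{eq:atom-symmetry} and \eqref{eq:atom-encoding}. It uses Vandermonde isolation, which is valid under the window bounds \eqref{eq:local-window}; those bounds in turn depend on $c_p\ge2$ and on $L=c_p(\kappa-1)+1$. From these identities and the injectivity of $\Enc$, the argument of Section~\ref{sec:assignment-recovery} gives the factorization $A_\beta=a_\beta z_\beta z_\beta^\top$ with $a_\beta\ne0$. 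Lemma~\ref{lem:common-roots} then bounds $|\mathcal E|\le\rho<L$.

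With this bound, Condition 5 isolates every quadratic form root by root, as in \eqref{eq:atom-quadratic}. Hence $Q_\nu(z_\beta)=0$ for every $\nu$ and every active $\beta$. Nonemptiness of $\mathcal E$ comes from Condition 7: it gives $1=\eta_{00,0}=\sum_\beta a_\beta$, so $\mathcal E\neq\emptyset$. The case $\ell_j=0$ for all $j$ would force $\eta_{00,0}=0$, which this identity rules out. Take any $\beta\in\mathcal E$. The Booleanity forms make $x_\beta\in\{0,1\}^n$, and the normalization argument of Section~\ref{sec:quadratics} then turns the clause forms into a satisfying assignment of the formula. The second sentence of the theorem is the contrapositive.

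I expect no genuine obstacle, since every ingredient is already proved. The only care needed is bookkeeping: each Vandermonde isolation must use a root set whose size is strictly below the number of available indices. Concretely, this means $2\rho\le L-1$ for symmetry, $s_j\le L-1$ for encoding (using indices $k<L-1$), and $|\mathcal E|<L$ for the quadratics. I would also confirm that the Frobenius indices $pb$ used inside Lemma~\ref{lem:block-column-atoms} stay below $J_0$, which was already verified there.
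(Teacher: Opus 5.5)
Your proposal is correct and follows essentially the same route as the paper, whose proof assembles exactly this chain. The rank--weight lemma forces $\rho<\kappa$; the shared block-column recurrences with Frobenius give prime-field root expansions; symmetry and encoding isolation give $A_\beta=a_\beta z_\beta z_\beta^\top$; the common-root lemma lets each quadratic be isolated; and normalization guarantees a nonempty root set.

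Your window bookkeeping matches \eqref{eq:local-window}, with one small wording point. In the encoding step the root set may have size equal to the $L-1$ available indices, not strictly smaller. Your concrete bound $s_j\le L-1$ already handles this case correctly.
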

\begin{proof}
The high-rank estimate forces $\rho<\kappa$. The shared block-column
recurrences, Frobenius expansion, and local isolation proved above
give the nonzero matrices \eqref{eq:atom-assignment}. Each gives
a satisfying Boolean assignment, and normalization ensures one
exists. This completes the soundness part of
Theorem~\ref{thm:table-reduction}.
\end{proof}
This implication applies to every light feasible table; it is not
an assumption about a decoder, positivity, or an initial support
representation. The reduction constructs its instance without
having to execute this existential recovery argument.

\section{From CCP to Euclidean CVP}\label{sec:lattice}

\subsection{A feasible prime-field code instance}
\label{sec:code-instance}

Assemble $H\theta=b$ over $\F_p$ and use elimination to check
consistency. For a consistent system, find a particular solution
$t_0$ and a column basis of $C=\ker H$. Its solution set is
$t_0+C$, so
\[
 W(H,b)=\min_{H\theta=b}\wt(\theta)=\dist_H(t_0,C).
\]
This is the closest codeword instance; $H,b$ also directly give
the syndrome-decoding instance. Finding $t_0$ is easy and does
not require finding a light solution.

The preceding conversion requires a solution $t_0$ of $Ht_0=b$.
If elimination finds that the affine system is inconsistent, no
such target exists. In contrast, an ordinary closest codeword
instance always has a finite distance, since its code contains
zero. We therefore cannot use an inconsistent affine system
directly in this conversion. Completeness guarantees a feasible
table for every satisfiable formula, so inconsistency certifies
that the source formula is a NO instance.

In this case output the zero code $C=\{0\}\subseteq\F_p^N$
and a target $t_0$ with $m=\lceil R_0\rceil$ coordinates equal
to one and the rest zero. This is a valid NO instance because
$\dist_H(t_0,C)=m\ge R_0$. The construction guarantees
$m\le N$. Equivalently, replace the affine system by
$H=I_N$ and $b=t_0$. The replacement preserves the length and
NO threshold and supplies a target for the lattice construction.
It is used only when elimination detects inconsistency; consistent
systems retain the code and target obtained above.

For every NO source handled by the main construction, soundness
(or its same-length replacement above) gives
\[
 W(H,b)=\dist_H(t_0,C)\ge R_0.
\]
This is a Hamming-distance bound. Section~\ref{sec:lattice-from-code} proves that the
squared Euclidean distance of the associated lattice instance
is at least this quantity.

\paragraph{Bounded small instances.}
For fixed $p$ and exponent loss $\epsilon$, any remaining range
$n<n_0$, where $n_0$ is a fixed constant, can be handled by
enumerating the $2^s\le2^n$ assignments to the original variables.
This takes polynomial time in the source description length.
According to the result, return fixed YES or NO instances
satisfying the required CCP and CVP gap promises.
This also handles the lower bound on $n$ in
\eqref{eq:budgets}; it does not alter the asymptotic size formulas.

\subsection{Lattices from linear codes over \texorpdfstring{$\F_p$}{F p}}
\label{sec:lattice-from-code}

Let $C\subseteq\F_p^N$ be any linear code. Define
\[
 \Lambda(C)=\{v\in\Z^N:v\bmod p\in C\}.
\]
This additive subgroup of $\Z^N$ contains $p\Z^N$, so it is
a full-rank lattice. Put $k=\dim C$. After permuting coordinates,
choose a systematic column generator $G=\binom{I_k}{A}$,
with $A\in\F_p^{(N-k)\times k}$, and lift the entries of $A$
to integers in $\{0,\ldots,p-1\}$. An explicit column basis is
\begin{equation}\label{eq:lattice-basis}
 B=\begin{pmatrix}I_k&0\\A&pI_{N-k}\end{pmatrix}.
\end{equation}
Its columns have residues in $C$. Conversely, if $v\bmod p\in C$,
subtracting the first $k$ columns with coefficients equal to its
first $k$ integer coordinates leaves a vector whose upper part is
zero and whose lower coordinates are multiples of $p$.
Thus these columns generate precisely $\Lambda(C)$.
The determinant is $p^{N-k}$ and lattice rank is $N$.
Applying the inverse coordinate permutation to the basis gives
a basis in the original coordinate order.

Now take $C=\ker H$ and the target $t_0$ from
Section~\ref{sec:code-instance}. Lift $t_0$ to an integer vector
$t$ with coordinates in $\{0,\ldots,p-1\}$.
For any $v\in\Lambda(C)$, put $\theta=(t-v)\bmod p$.
Since $t\bmod p=t_0$ and $v\bmod p\in C=\ker H$,
\[
 H\theta=Ht_0-H(v\bmod p)=b.
\]
Thus $\theta$ is a feasible error vector, so
$\wt(\theta)\ge W(H,b)$. If $\theta_j\ne0$, then the
integer $t_j-v_j$ is nonzero and $(t_j-v_j)^2\ge1$.
Consequently,
\[
 \|t-v\|_2^2=\sum_{j=1}^N(t_j-v_j)^2
       \ge\wt(\theta)\ge W(H,b).
\]
Taking the minimum over all $v\in\Lambda(C)$ gives
\begin{equation}\label{eq:lattice-lower}
 \dist_2(t,\Lambda(C))^2\ge W(H,b).
\end{equation}
For a NO source, Section~\ref{sec:code-instance} gives
$W(H,b)\ge R_0$. Thus $\dist_2(t,\Lambda(C))^2\ge R_0$.

In the YES case, Section~\ref{sec:completeness} constructs a
particular feasible error vector with zero-one entries, even though
general feasible vectors have entries in $\F_p$. Indeed, a
satisfying assignment gives $z=(1,x^\top)^\top\in\{0,1\}^{n+1}$;
at each sample the honest table has one block $zz^\top$ and all
other blocks zero. Every table entry is therefore zero or one.
Flatten this table and view its entries as integers to obtain
$e\in\{0,1\}^N$ with $\wt(e)\le R_1$ by
\eqref{eq:yes-weight}. Its residue $\bar e=e\bmod p$ satisfies
$H\bar e=b$. Since $Ht_0=b$, the corresponding codeword is
\[
 w=t_0-\bar e\in C.
\]
Thus $e$ is the error vector, while $w$ is a codeword over
$\F_p$ and need not have zero-one entries. Because
$(t-e)\bmod p=w$, the integer vector $t-e$ belongs to
$\Lambda(C)$. Consequently,
\begin{equation}\label{eq:lattice-yes}
 \dist_2(t,\Lambda(C))^2\le\|e\|_2^2=\wt(e)\le R_1.
\end{equation}
Together with soundness, these inequalities transfer the same
squared-distance thresholds $R_1,R_0$ for every fixed $p$.

For $p=2$ one has the stronger exact identity
\begin{equation}\label{eq:parity-identity}
 \dist_2(t,\Lambda(C))^2=W(H,b),
\end{equation}
because every residue has a minimum representative zero or one.
The identity also holds for $p=3$ using representatives $0,1,-1$.
For $p\ge5$ it need not hold: $C=\{0\}\subset\F_5$, target
$2$, has Hamming distance one and squared distance four to
$5\Z$. The honest zero-one witness, not an incorrect general
metric identity, proves \eqref{eq:lattice-yes}.

\section{Efficiency and gap exponents}
\label{sec:gap}
\label{sec:degree-comparison}

For each fixed prime $p$, all parameter choices can be fixed as
functions of $n$ and the desired exponent loss alone. For a 3SAT instance with $s$ variables,
$m_2$ two-literal clauses, and $m_3$ three-literal clauses,
$n=s+m_3\le s+m$, where $m=m_2+m_3$, is the number of variables
$x_1,\ldots,x_n$ in the quadratic system before homogenization;
$z_0$ is an additional coordinate. The number of quadratic equations
changes the number of affine checks, but not the table length.
The degree and sampling observations below apply to every fixed
$p$. Section~\ref{sec:explicit-epsilon-size} gives explicit binary
lengths for a prescribed gap exponent, and
Section~\ref{sec:general-characteristic} explains the changes
for fixed $p>2$.

\paragraph{Degree and sample savings.}
At a fixed target cutoff $\kappa\ge n+1$, Song's binary
construction uses a radix $d_0$ with $d_0^r=\kappa$ and degree budget
$D_{\mathrm S}=2r(n+2)(d_0-1)(c-1)$~\cite[Definition 3.5]{song}.
For the binary comparison, our window has $L=2\kappa-1$ and
$J_0=3\kappa-2$. Choose the smallest power of two with
$d^r\ge3\kappa-2$ and degree budget $D=2r(d-1)(c-1)$.
At fixed $r$ and growing $n,\kappa$,
\[
 d=\Theta(\kappa^{1/r}),\qquad
 \frac{D}{D_{\mathrm S}}=\Theta(n^{-1})
\]
when both constructions use the same extension degree $c\ge2$;
rounding radices to powers of two changes constants only.
Thus the same rank threshold can be certified at a smaller degree
for every fixed $r\ge1$. The two ingredients are the shared
block-column recurrence and ordinary digits for each Hankel factor.
For any fixed $p$, the analogous capacity is
$d^r\ge(c_p+1)(\kappa-1)+1$, still giving
$d=\Theta_{p,r}(\kappa^{1/r})$.

\paragraph{Which constructions are compared.}
We feed the same sparse binary quadratic system from
Section~\ref{sec:quadratics} to both code constructions. Its forms
have at most five nonzero coefficients, as required by Song's
soundness proof. Song's original preprocessing uses occurrence
variables; the count $n=s+m_3$ here refers to our common input
system, not literally to that preprocessing. In the table below
each construction uses its own cutoff $\kappa$; $q_0,d_0,c_0$
denote Song's coefficient field size, radix, and extension degree.
The present column is binary, so $c_2=2$.
\begin{center}
\small
\begin{tabular}{>{\raggedright\arraybackslash}p{0.19\textwidth}>{\raggedright\arraybackslash}p{0.35\textwidth}>{\raggedright\arraybackslash}p{0.35\textwidth}}
\toprule
Quantity & Song's construction & Present construction\\
\midrule
Rank cutoff & $\kappa=d_0^r$ & $\kappa=1+\lfloor(d^r-1)/3\rfloor$\\
Row range $L$ & $(n+2)(\kappa-1)+1$ & $2\kappa-1$\\
Moment count & $2(n+2)(\kappa-1)+1$ & $3\kappa-2$\\
Digit-product cap & $2(n+2)(d_0-1)$ per axis & $2d-2$ per axis\\
Degree budget & $2r(n+2)(d_0-1)(c_0-1)$ & $D=2r(d-1)(c-1)$\\
Samples & All $q_0$ field elements & Any $4D$ field elements\\
Output length & $(n+1)^2q_0^{r+1}$ & $4D(n+1)^2q^r$\\
\bottomrule
\end{tabular}
\end{center}
Song's coefficient-matrix structure is retained. Our common-root
lemma extracts its implication for dense quadratics, while the
block-column recurrence supplies joint-support control earlier
in the proof.

\paragraph{Sampling and field size.}
The sample set is a second independent economy. For fixed $K,D$,
take $|S|=4D$ whenever possible instead of all $q$ field elements.
Then
\[
 \frac{R_0}{R_1}=\frac{3\kappa}{4(n+1)^2},\qquad
 N=4D(n+1)^2q^r.
\]
Compared with sampling all of $K$, $N$ and $R_1$ are reduced
by $4D/q$. This does not itself increase the gap: its factor
$1-D/|S|$ must still be included. Sampling all $K$ may give a
slightly better constant gap. The reduction allows either choice.
In the least-field schedule below, $4D=\Theta(q)$ for fixed
$p,r,A$, so this sample restriction saves a bounded factor rather
than a further power of $n$. Choosing the least admissible
coefficient field is a separate saving, and its effect is charged
in the full factor $q^r|S|$.

\subsection{Binary codes: smaller lengths at prescribed gaps}
\label{sec:explicit-epsilon-size}

Take $p=2$ for the comparison with Song's construction.

\paragraph{A binary parameter schedule.}

Fix $r$ and a rational $A>0$ with $Ar>2$ before $n$ grows.
Choose $d$ as the least power of two at least $n^A$. Choose
$q=2^a$ as the least power of two satisfying
\[
 a\ge1,\quad c=\max\{2,\lceil n/a\rceil\},\quad
 q\ge8r(d-1)(c-1).
\]
Set
\[
 \begin{aligned}
 D&=2r(d-1)(c-1),\\
 \kappa&=1+\left\lfloor\frac{d^r-1}{3}\right\rfloor,\qquad
 L=2\kappa-1,\qquad J_0=3\kappa-2,
 \end{aligned}
\]
and choose any $4D$ samples from $K$.
The hypotheses \eqref{eq:budgets} hold for large $n$.
The least acceptable field can be found by increasing $a$.
This is a finite parameter search, not a search over assignments.

For these choices,
\begin{equation}\label{eq:efficient-field-orders}
 q=\Theta(dn/\log n),\quad c=\Theta(n/\log n),\quad
 D=\Theta(dn/\log n),\quad |S|=4D=\Theta(q).
\end{equation}
To verify both bounds, a field of order a sufficiently large
constant times $dn$ first shows that the least acceptable
$a=O(\log n)$. Its defining inequality then forces
$q=\Omega(dn/\log n)$. Conversely, a sufficiently large
constant times $dn/\log n$, rounded to a power of two, has
$a=\Theta(\log n)$ and meets that inequality. Thus
$a=\Theta(\log n)$ and all four orders follow.
The shortened window gives $\kappa=\Theta(d^r)$, with no
factor of $n$ lost in converting radix capacity to rank cutoff.

Substituting the full ledger gives
\begin{align}
 N&=\Theta\!\left(
       \frac{n^{A(r+1)+r+3}}{(\log n)^{r+1}}\right),
                                      \label{eq:size-exponent}\\
 R_0/R_1&=\Theta(n^{Ar-2}),
                                      \label{eq:gap-exponent-input}\\
 \delta(r,A)&=\frac{Ar-2}{A(r+1)+r+3}.
                                      \label{eq:delta}
\end{align}
The binary construction has gap $N^{\delta(r,A)+o(1)}$; its Euclidean
exponent is $\delta(r,A)/2$. Every fixed smaller exponent is
obtained for all sufficiently large instances.

\paragraph{Explicit choices for a prescribed code exponent.}
Fix $0<\epsilon<1$ and write $1-\epsilon$ for the desired
closest codeword hardness exponent. Define integers depending
only on $\epsilon$ by
\begin{equation}\label{eq:epsilon-parameters}
\begin{gathered}
 r=\left\lceil\frac2\epsilon\right\rceil,
 \qquad \tau=\epsilon(r+1)-1>0,\\
 A=1+\left\lfloor
       \frac{(1-\epsilon)(r+3)+2}{\tau}
     \right\rfloor,\\
 B=\max\left\{3,\ 1+\left\lfloor
       \frac{(1-\epsilon)(2r+4)+2}{\tau}
     \right\rfloor\right\}.
\end{gathered}
\end{equation}
The integer $A$ specifies our radix growth; $B$ specifies Song's.
The strict surplus provided by these choices absorbs fixed
constants and rounding in both the code and lattice promises.

\begin{proposition}[Explicit binary lengths and lattice dimensions]
\label{prop:epsilon-size}
Take $p=2$ and the parameters in \eqref{eq:epsilon-parameters}.
For all sufficiently large $n$, depending only on $\epsilon$,
Song's construction and the present construction yield code gaps
larger than $N_{\mathrm S}^{1-\epsilon}$ and $N^{1-\epsilon}$,
respectively, with lengths
\begin{equation}\label{eq:epsilon-size}
\begin{aligned}
 N_{\mathrm S}
   &=\Theta_\epsilon(n^{H_{\mathrm S}}),
 &H_{\mathrm S}&=B(r+1)+2r+4,\\
 N&=\Theta_\epsilon\!\left(
          \frac{n^H}{(\log n)^{r+1}}\right),
 &H&=A(r+1)+r+3.
\end{aligned}
\end{equation}
The lattices from the previous section have
exactly the corresponding code lengths as ranks, and Euclidean
gaps larger than $N_{\mathrm S}^{(1-\epsilon)/2}$ and
$N^{(1-\epsilon)/2}$.
\end{proposition}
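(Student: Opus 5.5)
The proof will plug the explicit integers $r,A,B$ of \eqref{eq:epsilon-parameters} into the binary schedules and compare exponents. For the present construction we use the schedule of Section~\ref{sec:explicit-epsilon-size}. The orders \eqref{eq:efficient-field-orders} give $N=\Theta(n^{A(r+1)+r+3}/(\log n)^{r+1})$, which is \eqref{eq:size-exponent} with $H=A(r+1)+r+3$. For the gap, $R_0/R_1=3\kappa(1-D/|S|)\cdot|S|/(|S|(n+1)^2)\ge 3\kappa/(4(n+1)^2)$. With $\kappa=\Theta(d^r)$ and $d\ge n^A$, this gives $R_0/R_1\ge c_0 n^{Ar-2}$ for a constant $c_0>0$ depending only on $r$. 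It then suffices to show $n^{Ar-2}\gg N^{1-\epsilon}$. Since $N\le C n^H$, it is enough that $Ar-2>(1-\epsilon)H$ strictly, so that the polynomial surplus absorbs $c_0$, $C$, and the rounding of $d$ to a power of two. That rounding at most doubles $d$, changing $N$ by a constant factor $2^{r+1}$.

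The key inequality is therefore
\[
 Ar-2-(1-\epsilon)\bigl(A(r+1)+r+3\bigr)
 =A\bigl(r-(1-\epsilon)(r+1)\bigr)-2-(1-\epsilon)(r+3)
 =A\tau-\bigl((1-\epsilon)(r+3)+2\bigr),
\]
using $r-(1-\epsilon)(r+1)=\epsilon(r+1)-1=\tau$. The choice $r=\lceil2/\epsilon\rceil$ gives $\epsilon(r+1)\ge2+\epsilon>1$, so $\tau>0$. Also $A=1+\lfloor((1-\epsilon)(r+3)+2)/\tau\rfloor$ exceeds $((1-\epsilon)(r+3)+2)/\tau$, so the displayed quantity is a positive constant $\eta$ depending only on $\epsilon$. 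We also check $Ar>2$, so the schedule applies. Hence $R_0/R_1\ge c_0n^{Ar-2}\ge c_0C^{-(1-\epsilon)}n^{\eta}N^{1-\epsilon}>N^{1-\epsilon}$ for $n$ large. The hypotheses \eqref{eq:budgets} hold for large $n$, as noted after the schedule. For the lattice, Section~\ref{sec:lattice-from-code} gives squared-distance thresholds $R_1,R_0$ and rank exactly $N$. The Euclidean gap is therefore $\sqrt{R_0/R_1}>N^{(1-\epsilon)/2}$.

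For Song's construction we use the prescribed parameters from the comparison table. These are $d_0$, the least power of two at least $n^B$, cutoff $\kappa=d_0^r$, degree budget $D_{\mathrm S}=2r(n+2)(d_0-1)(c_0-1)$, and sampling over all of $K_0$, with $q_0=\Theta(D_{\mathrm S})$ a power of two large enough for Song's sample requirement. I would take $c_0$ bounded, e.g.\ $c_0=2$ with $q_0^{c_0}\ge2^n$ automatic. Then $q_0=\Theta(n\,d_0)$ and $N_{\mathrm S}=(n+1)^2q_0^{r+1}=\Theta(n^{2}\cdot n^{(B+1)(r+1)})=\Theta(n^{B(r+1)+r+3})$. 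That exponent does not match the claimed $H_{\mathrm S}=B(r+1)+2r+4$. The extra $r+1$ factors of $n$ must come from Song's field requirement: $q_0$ must dominate $D_{\mathrm S}$ with its $(n+2)$ multiplier and $c_0-1=\Theta(n/\log n)$, giving $q_0=\Theta(n^2d_0/\log n)$ up to Song's exact prescription. I would check this against \cite[Definitions 3.5--3.6]{song}, which I expect to be the main obstacle, and then use the prescribed form.

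Song's gap is $\Theta(\kappa/n^2)=\Theta(n^{Br-2})$. The analogous computation gives $Br-2-(1-\epsilon)H_{\mathrm S}=B\tau-((1-\epsilon)(2r+4)+2)>0$ by the definition of $B$; the $\max\{3,\cdot\}$ only secures Song's side conditions such as $\kappa\ge n+1$. The remainder of the proof is the same constant-absorption argument as for the present construction.
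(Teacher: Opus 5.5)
Your treatment of the present construction is correct and matches the paper's proof. The orders $N=\Theta(n^H/(\log n)^{r+1})$ and $R_0/R_1=\Theta(n^{Ar-2})$, together with the identity $(Ar-2)-(1-\epsilon)H=A\tau-\bigl((1-\epsilon)(r+3)+2\bigr)>0$, are exactly the argument used there.

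The genuine gap is in the half about Song's construction. You leave it unresolved, and your provisional choices do not work.
\begin{itemize}
\item The choice $c_0=2$ is not admissible. Your $q_0=\Theta(nd_0)$ is polynomial in $n$, so $q_0^2<2^n$ for large $n$. But injectivity of $\Enc$ needs $[E:\F_2]=c_0\log_2q_0\ge n$, which forces $c_0=\Theta(n/\log n)$.
\item Your fallback $q_0=\Theta(D_{\mathrm S})=\Theta(n^{B+2}/\log n)$ gives $N_{\mathrm S}=\Theta\bigl(n^{H_{\mathrm S}}/(\log n)^{r+1}\bigr)$. This contradicts the asserted $N_{\mathrm S}=\Theta_\epsilon(n^{H_{\mathrm S}})$. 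It is a least-admissible-field variant, not Song's prescribed schedule, and the proposition is about the latter.
\end{itemize}

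The logarithm-free exponent comes from Song's explicit choice in the proof of his Theorem 7.1. There $q_0$ is the least power of two with $q_0\ge12rd_0n^2$, and $c_0=\lceil n/\log_2q_0\rceil$, $S=\F_{q_0}$, $\kappa_0=d_0^r$. With these choices $q_0=\Theta(n^{B+2})$, $c_0\ge2$ for large $n$, and
\[
 \frac{D_{\mathrm S}}{q_0}=\frac{2r(n+2)(d_0-1)(c_0-1)}{q_0}
 <\frac{n+2}{6n\log_2q_0}=o(1).
\]
This bound verifies Song's degree condition. It also gives
\[
 \frac{R_{0,\mathrm S}}{R_{1,\mathrm S}}=\frac{\kappa_0(1-D_{\mathrm S}/q_0)}{(n+1)^2}=\Theta(n^{Br-2}),
 \qquad
 N_{\mathrm S}=(n+1)^2q_0^{r+1}=\Theta(n^{H_{\mathrm S}}).
\]
Only with these facts does your identity $(Br-2)-(1-\epsilon)H_{\mathrm S}=B\tau-\bigl((1-\epsilon)(2r+4)+2\bigr)>0$ finish the argument.

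Two smaller points should also be stated. First, the CVP YES radius must be an explicit number, and $\sqrt{R_1}$ need not be rational. The paper takes $\lceil\sqrt{R_1}\rceil\le2\sqrt{R_1}$ and lets the polynomial surplus absorb the factor two, so you should not quote $\sqrt{R_0/R_1}$ directly as the Euclidean gap. Second, on the inconsistent-system branch the same-length replacement of Section~\ref{sec:code-instance} is used. This is what makes the stated lengths independent of satisfiability.
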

\begin{proof}
For our construction use the preceding schedule with $p=2$ and
the integer $A$ in \eqref{eq:epsilon-parameters}. Thus $d$ is
the least power of two at least $n^A$, $q$ is the least power
of two meeting the displayed degree budget, and $|S|=4D$.
These are deterministic functions of $n,\epsilon$.
Equations~\eqref{eq:size-exponent}--\eqref{eq:delta} give
\[
 N=\Theta_\epsilon\!\left(
       n^H/(\log n)^{r+1}\right),\qquad
 R_0/R_1=\Theta_\epsilon(n^{Ar-2}).
\]
The choice of $A$ gives
\[
 (Ar-2)-(1-\epsilon)H
 =A\tau-\bigl((1-\epsilon)(r+3)+2\bigr)>0.
\]
In particular $Ar>2$, as required. It follows that
$(R_0/R_1)/N^{1-\epsilon}$ tends to infinity.

For Song's construction follow the parameter choices in
\cite[proof of Theorem 7.1]{song}: take $d_0$ to be the least
power of two at least $n^B$, and take $q_0$ to be the least power
of two satisfying
\[
 q_0\ge12r d_0n^2.
\]
Use $c_0=\lceil n/\log_2q_0\rceil$, $S=\F_{q_0}$, and
$\kappa_0=d_0^r$. For sufficiently large $n$, $c_0\ge2$ and
these choices satisfy Song's field and degree conditions.
Indeed $d_0=\Theta_\epsilon(n^B)$,
$q_0=\Theta_\epsilon(n^{B+2})$, and
\[
 \frac{2r(n+2)(d_0-1)(c_0-1)}{q_0}
 <\frac{n+2}{6n\log_2q_0}=o(1).
\]
Consequently
\[
 N_{\mathrm S}=(n+1)^2q_0^{r+1}
     =\Theta_\epsilon(n^{H_{\mathrm S}}),\qquad
 \frac{R_{0,\mathrm S}}{R_{1,\mathrm S}}
     =\Theta_\epsilon(n^{Br-2}).
\]
The choice of $B$ similarly gives
\[
 (Br-2)-(1-\epsilon)H_{\mathrm S}
 =B\tau-\bigl((1-\epsilon)(2r+4)+2\bigr)>0.
\]
This proves the code gap with strict exponent surplus for both
constructions. On exceptional branches, use the same-length
feasible outputs of Section~\ref{sec:lattice}; thus the lengths
in \eqref{eq:epsilon-size} apply independently of satisfiability.
The lattice rank equals the number of table
coordinates. For the Euclidean promise, take the integer YES
radius $\lceil\sqrt{R_1}\rceil\le2\sqrt{R_1}$. In either
construction the same surplus absorbs this factor of two and
gives the asserted Euclidean gap. The bounded smaller values of
$n$ use the fixed outputs of Section~\ref{sec:code-instance}.
\end{proof}

These formulas make the polynomial bounds on code length and
lattice rank explicit. They do not assert the same exponents
for construction time or dense matrix size: a dense lattice
basis already has $N^2$ entries, and elimination has its own
polynomial cost. As $\epsilon$ tends to zero,
\begin{equation}\label{eq:epsilon-leading-size}
\begin{gathered}
 r=2/\epsilon+O(1),\qquad
 A=2/\epsilon+O(1),\qquad B=4/\epsilon+O(1),\\
 H=4/\epsilon^2+O(1/\epsilon),\qquad
 H_{\mathrm S}=8/\epsilon^2+O(1/\epsilon).
\end{gathered}
\end{equation}
For example, the choices above give the following exponents in
\eqref{eq:epsilon-size}; the logarithmic denominator in our
length is retained there.
\begin{center}
\begin{tabular}{cccccc}
\toprule
$\epsilon$ & $r$ & $A$ & $B$ & $H$ (present) & $H_{\mathrm S}$ (Song)\\
\midrule
$1/2$ & $4$ & $4$ & $6$ & $27$ & $42$\\
$1/4$ & $8$ & $9$ & $14$ & $92$ & $146$\\
$1/10$ & $20$ & $21$ & $38$ & $464$ & $842$\\
\bottomrule
\end{tabular}
\end{center}
The leading estimates follow directly from
$\tau=1+O(\epsilon)$ in \eqref{eq:epsilon-parameters}.
The strict surpluses $(Ar-2)-(1-\epsilon)H$ for the three
rows are $1/2$, $1$, and $2/5$, respectively.
For bounds without $O$-notation in the exponents, $A\le r+1$,
$B\le2r+7$, and $r\le2/\epsilon+1$ give
\[
 H\le\frac4{\epsilon^2}+\frac{10}\epsilon+8,
 \qquad
 H_{\mathrm S}\le\frac8{\epsilon^2}+\frac{30}\epsilon+24.
\]
For the first bound, $\tau\ge1+\epsilon$ and $\epsilon r\ge2$
imply
\[
 (1-\epsilon)(r+3)+2<(r+1)(1+\epsilon)\le(r+1)\tau,
\]
so the integer choice in \eqref{eq:epsilon-parameters} has
$A\le r+1$ and $H\le r^2+3r+4$.
These are explicit feasible choices, rather than optimal choices
over all integers $r,A,B$. The comparison uses Song's prescribed
coefficient field for his construction and the least admissible
coefficient field in our parameter schedule.
For a Euclidean target written instead as $N^{1/2-\delta}$,
substitute $\epsilon=2\delta$. The leading dimension bounds are
then $n^{2/\delta^2+O(1/\delta)}$ for Song's construction and
$n^{1/\delta^2+O(1/\delta)}$ for the present one.

\paragraph{Limiting exponents and the Euclidean gap.}

For each fixed $r$,
\[
 \lim_{A\to\infty}\delta(r,A)=\frac r{r+1},\qquad
 1-\delta(r,A)=\frac{A+r+5}{A(r+1)+r+3}>0.
\]
For a target Hamming exponent $1-\epsilon$, with
$0<\epsilon<1$, choose $r>(1-\epsilon)/\epsilon$ and then
choose a fixed $A$ with
\[
 A>\frac{2+(1-\epsilon)(r+3)}{\epsilon(r+1)-1}.
\]
This gives $\delta(r,A)>1-\epsilon$, using the same loss
parameter $\epsilon$ as in Theorem~\ref{thm:song}.
Constants and integer rounding
are absorbed by that strict exponent surplus. Choosing all
constants before the input grows yields deterministic polynomial
reductions for every fixed loss from exponent one. Letting them
grow with $n$ would not justify the same polynomial-time bound.

The lattice YES radius can be the integer
$R=\lceil\sqrt{R_1}\rceil\le2\sqrt{R_1}$.
For a target Euclidean exponent $1/2-\epsilon/2$, choose Hamming
exponent strictly greater than $1-\epsilon$. Equations
\eqref{eq:lattice-lower}--\eqref{eq:lattice-yes} and the exponent
surplus absorb the factor two and give the desired NO distance.
Thus fixed exponents below one half are approached; the endpoint
is not attained.

\subsection{General prime characteristic: what changes when \texorpdfstring{$p>2$}{p > 2}}
\label{sec:general-characteristic}

Fix a prime $p>2$ independently of the input size. Use the same
$r,A$ as in Section~\ref{sec:explicit-epsilon-size}, but take $d$
to be the least power of $p$ at least $n^A$, and choose the least
$q=p^a$ satisfying
\[
 c=\max\{2,\lceil n/a\rceil\},\qquad
 q\ge8r(d-1)(c-1).
\]
Put $D=2r(d-1)(c-1)$, $|S|=4D$, and
\[
 c_p=\max\{2,p-1\},\qquad
 \kappa=1+\left\lfloor\frac{d^r-1}{c_p+1}\right\rfloor,\qquad
 L=c_p(\kappa-1)+1.
\]
Then $J_0=(c_p+1)(\kappa-1)+1\le d^r$ and
$\kappa=\Theta_p(d^r)$.
For sufficiently large $n$, the standing condition $n\ge p$
and all parameter hypotheses hold.

The field-size argument used in the binary case now gives
\[
 q=\Theta_{p,r,A}\!\left(\frac{dn}{\log_p n}\right),\qquad
 c=\Theta_{p,r,A}\!\left(\frac{n}{\log_p n}\right),\qquad
 D=\Theta_{p,r,A}\!\left(\frac{dn}{\log_p n}\right).
\]
Indeed, rounding a positive real bound up to a power of $p$
changes it by a factor less than $p$, and the least admissible
$a$ is $\Theta_{p,r,A}(\log_p n)$. The same lower and upper
bounds as in \eqref{eq:efficient-field-orders} therefore apply.
Since $p$ is fixed, changing the logarithm base only changes a
constant. Consequently
\[
 N=\Theta_{p,r,A}\!\left(
       \frac{n^{A(r+1)+r+3}}{(\log n)^{r+1}}\right),\qquad
 R_0/R_1=\Theta_{p,r,A}(n^{Ar-2}).
\]
Thus $\delta(r,A)$ in \eqref{eq:delta} is unchanged. In
particular, the same explicit $r,A$ from
\eqref{eq:epsilon-parameters} give the same exponent $H$ in
\eqref{eq:epsilon-size}, with constants depending also on $p$.
There is no improvement in the asymptotic gap exponent merely
from choosing a larger fixed characteristic.

The algebraic and metric differences are summarized below.
\begin{center}
\small
\begin{tabular}{>{\raggedright\arraybackslash}p{0.22\textwidth}>{\raggedright\arraybackslash}p{0.32\textwidth}>{\raggedright\arraybackslash}p{0.35\textwidth}}
\toprule
Quantity & Binary case $p=2$ & General fixed prime $p>2$\\
\midrule
Radix & $d=2^b$ & $d=p^b$\\
Window constant & $c_2=2$ & $c_p=\max\{2,p-1\}$\\
Row range & $L=2\kappa-1$ & $L=c_p(\kappa-1)+1$\\
Frobenius condition & $\eta_{ij,2k}=\eta_{ij,k}^2$
 & $\eta_{ij,pk}=\eta_{ij,k}^p$\\
Nonzero recovered weights & Always $1$ & Elements of $\F_p^\times$\\
Booleanity & Every prime-field element is Boolean
 & The equations $x_i^2-x_i=0$ force Boolean values\\
Honest table entries & $0$ or $1$ & Still $0$ or $1$\\
Squared Euclidean distance & Equals the minimum Hamming weight
 & Equality also holds for $p=3$; the threshold bounds suffice for $p\ge5$\\
\bottomrule
\end{tabular}
\end{center}
The same integer quadratic equations and the same seven families
of affine conditions apply. For $p>2$ the recovery proof must
retain the nonzero prime-field coefficients and the Booleanity
equations explicitly. Section~\ref{sec:lattice} proves that the
honest zero-one witness gives squared distance at most $R_1$
and every NO witness has squared distance at least $R_0$,
without requiring a general metric equality.

The length $N$ counts prime-field symbols, and remains the
lattice rank. Storing an arbitrary table uses
$N\lceil\log_2 p\rceil$ bits; this is a constant factor for
fixed $p$. Field-coordinate equations and lattice entries also
carry their stated $\log p$ bit cost. Allowing $p$ to grow with
$n$ is outside this fixed-characteristic analysis.

\begin{corollary}[The same fixed-loss exponents over each fixed finite field]
\label{cor:prime-field-gaps}
For every fixed finite field $F$ of characteristic $p$ and
$0<\epsilon<1$, the construction gives a deterministic polynomial-time
reduction from 3SAT to CCP over $F$ with gap $N^{1-\epsilon}$.
The lattice associated with
the underlying prime-field instance gives Euclidean gap
$N^{1/2-\epsilon/2}$. The exponents agree with those in the binary case; the constants
in the size bounds may depend on the fixed field $F$.
\end{corollary}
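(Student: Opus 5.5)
The plan is to run the whole pipeline over the prime subfield $\F_p$ and then carry the resulting code instance from $\F_p$ up to $F$ without changing its length or its weight profile. Let $F=\F_{p^e}$ with $e$ fixed. For the prime-field step we invoke Theorem~\ref{thm:table-reduction} with the characteristic-$p$ schedule of Section~\ref{sec:general-characteristic}. For $p=2$ we use the binary schedule of Proposition~\ref{prop:epsilon-size}. In both cases $r,A$ come from \eqref{eq:epsilon-parameters}, and this produces an affine system $H\theta=b$ over $\F_p$. The schedule gives $R_0/R_1=\Theta_{p,\epsilon}(n^{Ar-2})$ and $N=\Theta_{p,\epsilon}(n^H/(\log n)^{r+1})$. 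The strict surplus $A\tau-((1-\epsilon)(r+3)+2)>0$, checked in the proof of Proposition~\ref{prop:epsilon-size}, then gives $R_0/R_1>N^{1-\epsilon}$ for all large $n$. Inconsistent systems use the same-length replacement of Section~\ref{sec:code-instance}, and bounded $n$ is handled by fixed outputs. The Euclidean statement follows from Section~\ref{sec:lattice-from-code} applied to this prime-field code: \eqref{eq:lattice-lower} and \eqref{eq:lattice-yes} give squared distances at most $R_1$ for YES and at least $R_0$ for NO. The integer radius $\lceil\sqrt{R_1}\rceil\le2\sqrt{R_1}$ costs only a factor two, which the same exponent surplus absorbs.

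The step specific to this corollary is the alphabet extension. Given $C=\ker H\subseteq\F_p^N$ and target $t_0$, we output $C_F=C\otimes F=\ker_F H\subseteq F^N$ with the same target; equivalently, the syndrome instance $(H,b)$ is read over $F$. The length is still $N$. YES weights are preserved, since the zero-one witness remains a solution over $F$. The main point is that the minimum feasible weight does not drop. To see this, fix an $\F_p$-basis $1=\omega_1,\dots,\omega_e$ of $F$ and write a solution $\theta\in F^N$ of $H\theta=b$ as $\theta=\sum_k\omega_k\theta^{(k)}$ with $\theta^{(k)}\in\F_p^N$. Since $H$ and $b$ have entries in $\F_p$, comparing $\omega_k$-coordinates gives $H\theta^{(1)}=b$ and $H\theta^{(k)}=0$ for $k\ge2$. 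Moreover $\operatorname{supp}\theta^{(1)}\subseteq\operatorname{supp}\theta$. Hence $\wt(\theta)\ge\wt(\theta^{(1)})\ge W(H,b)$, so the NO threshold $R_0$ survives. A basis of $C$ over $\F_p$ is also a basis of $C_F$ over $F$, so constructing the instance costs only polynomial overhead for fixed $F$.

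The remaining work is bookkeeping. We record that all parameters depend only on $n,\epsilon,p$, so the construction is deterministic polynomial time by Theorem~\ref{thm:table-reduction}. We also record that $\delta(r,A)$ does not depend on $p$, as noted in Section~\ref{sec:general-characteristic}, so only the constants depend on $F$. I expect the main obstacle to be making sure the constants hidden in $\Theta_{p,\epsilon}$, together with the factor $1-D/|S|\ge3/4$, are genuinely dominated by the polynomial surplus $n^{(Ar-2)-(1-\epsilon)H}$ uniformly for large $n$. This amounts to checking that the least-field bounds for $q$ and $a$ in Section~\ref{sec:general-characteristic} hold with $p$-dependent but $n$-independent constants.
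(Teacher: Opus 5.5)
Your proposal is correct and follows the paper's proof essentially step for step: you run the reduction over $\F_p$ with the binary or general-$p$ schedule, get $R_0>N^{1-\epsilon}R_1$ from the exponent surplus, and handle the lattice by radius rounding. You then extend the alphabet via $\ker_F H$, and your coordinate $\theta^{(1)}$ is exactly the paper's $\F_p$-linear projection $\pi(\theta)$ with $\pi(1)=1$. The constant-absorption worry you flag is routine, since the surplus is a fixed positive power of $n$.
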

\begin{proof}
Choose fixed $r,A$ with $\delta(r,A)>1-\epsilon$ as in
Section~\ref{sec:explicit-epsilon-size}, and use the binary or
general prime-characteristic schedule as appropriate.
The estimates above give $R_0>N^{1-\epsilon}R_1$ over
$\F_p$ for all sufficiently large $n$; bounded smaller instances
use the fixed outputs of Section~\ref{sec:code-instance}.
To extend the output alphabet from $\F_p$ to $F$, keep
$H,b,t_0$ in the prime subfield and use $C=\ker_F H$.
Choose an $\F_p$-linear map $\pi:F\to\F_p$ with $\pi(1)=1$,
by extending $1$ to a basis. If $H\theta=b$ over $F$, then
$H\pi(\theta)=b$ and $\wt(\pi(\theta))\le\wt(\theta)$.
Conversely, every prime-field solution is a solution over $F$.
Thus the minimum feasible weight, and hence the CCP gap and
length, are unchanged.
The same choice gives the Euclidean gap by the radius-rounding argument.
Completeness uses a zero-one witness for every $p$, so no
odd-characteristic metric-equality assumption is needed.
The field and row counts in the next subsection prove deterministic
polynomial time. Feasible NO replacements preserve both promises.
\end{proof}

\subsection{All coordinates and equations are polynomially explicit}
\label{sec:running-time}

For fixed $p,r,A$, the quantities $d,q,L,J_0,|S|,N$ and
$|\mathcal P|\le(2d-1)^r$ are polynomial in $n$. A field element
in $E$ has $O(ac\log p)=O(n)$ bits. The external field-construction
input is the following fixed-characteristic consequence of
Shoup's theorem~\cite{shoup}: given a represented field
$K=\F_{p^a}$ and a positive integer $c$, a monic irreducible
polynomial of degree $c$ over $K$ can be constructed
deterministically in time polynomial in $a,c$ for fixed $p$.
Shoup reduces this construction to polynomial factorization over
the prime field, which has deterministic polynomial cost when
$p$ is fixed. First apply the same result over $\F_p$ to
construct $K$, and then over the represented $K$ to construct
$E$. The tower bases, arithmetic, and Frobenius matrices are
then explicit. The cost is polynomial in representation lengths;
the exponentially many elements of $E$ are never enumerated.

For transparency, put $C=(n+1)^2$ and $P=|\mathcal P|\le(2d-1)^r$.
Degree constraints use $CP(|S|-D-1)$ equations over $K$.
Comparing each digit vector with one chosen representative of its
index uses at most $CP$ equations over $E$; all pairwise comparisons
are unnecessary. Frobenius, encoding, and quadratic constraints
use respectively $O(CJ_0)$, $O(nJ_0)$, and $O(vJ_0)$ equations
over $E$. Symmetry and normalization add $O(N+|S|)$ prime-field
equations. Expanding multiplies these counts by at most $ac$.
All interpolation and evaluation coefficients are computable by
polynomial-time field arithmetic; coefficient descriptions, dense
affine matrices if used, and elimination cost are polynomial.
The moment matrix in the soundness proof is polynomial size too.
The lattice basis has dimension $N$ and entries of $O(\log p)$
bits, hence $O(N^2\log p)$ bits in dense form. The dimension
exponent $H$ is not asserted to be the running-time exponent.
This accounts for both output coordinates and all checks.

\clearpage
\section{Conclusions and open problems}\label{sec:conclusions}

The mathematics separates into three mechanisms. Frobenius and
an extension basis turn large powers into low-degree products.
Determinant multiplicities turn high polynomial rank into a weight
lower bound. In the other branch, a shared block-column recurrence,
CRT, Frobenius, and local interpolation recover weighted rank-one
assignment matrices. Their nonzero normalization entries identify
a common root set of size at most the rank. This permits recovery
from dense quadratic equations using the same moment window.

The shared recurrence bounds the union of the supports within
each column by the rank, before any coefficient matrices have
been recovered. This justifies the characteristic-dependent
window $L=c_p(\kappa-1)+1$ and capacity
$J_0=(c_p+1)(\kappa-1)+1$. The resulting cutoff is
$\Theta_p(d^r)$ and the input-size gap is $\Theta(n^{Ar-2})$.
Ordinary digits for each Hankel factor reduce the product cap
from $2(n+2)(d_0-1)$ in Song's construction to $2d-2$.
Sampling on any set of size at least $4D$ allows the sample count
to reflect the degree budget. The characteristic-$p$ proof
tracks the prime-field weights and the Boolean equations explicitly.
The final exponent remains the same; smaller fixed constructions
and better finite parameter tradeoffs are the gain.

\subsection{The final gap for CCP}

The hardness factor $N^{1-\epsilon}$ holds for every fixed
$0<\epsilon<1$. Can deterministic polynomial-time reductions from
3SAT establish a gap
\[
 \frac{N}{(\log_2 N)^A}
 \qquad\text{for some fixed }A>1,
\]
and ultimately for every fixed $A>1$?

For every fixed $u\ge1$, Alon--Panigrahy--Yekhanin~\cite[Theorem 1]{apy}
give an $N^{O(u)}$-time algorithm with approximation factor
$\lceil N/(u\log_2 N)\rceil$. Thus, for any constant $c>0$,
choosing a sufficiently large fixed $u$ gives a polynomial-time
algorithm deciding CCP instances with gap $cN/\log_2 N$;
bounded lengths can be handled exactly. Consequently, NP-hardness
at any such gap would imply $\mathrm P=\mathrm{NP}$.
The open question is how close hardness can approach this scale
from below, or whether better approximation algorithms lower it.

\subsection{The final gap for Euclidean CVP}

The corresponding question is whether the fixed power loss
in $N^{1/2-\epsilon/2}$ can be replaced by a logarithmic or
other smaller loss. A binary CCP gap
$N/(\log_2 N)^A$ would, by Section~\ref{sec:lattice}, give a
Euclidean gap of order
\[
 \frac{N^{1/2}}{(\log_2 N)^{A/2}}.
\]
The constant factor needed to round the YES radius does not
change this order. The present fixed-parameter analysis does
not prove such a bound: allowing $\epsilon$ to decrease with
the input size also changes the reduction's running-time
exponent, so it requires a new uniform polynomial-time analysis.

The lattice construction from binary codes has its own restriction.
By \eqref{eq:parity-identity}, applying the CCP algorithm and
lifting its binary error vector gives a polynomial-time
Euclidean approximation factor
$\sqrt{\lceil N/(u\log_2 N)\rceil}$ on these instances,
for every fixed $u\ge1$. Thus this route cannot establish
NP-hardness at any fixed positive multiple of
$\sqrt{N/\log N}$ unless $\mathrm P=\mathrm{NP}$.

For general Euclidean lattices, reaching a fixed multiple
$cN^{1/2}$ is a separate question. The
Aharonov--Regev theorem and Gao's refinement~\cite{ar,gao-ar}
imply that hardness for $c>2/(\pi\sqrt3)$ would force
$\mathrm{NP}=\mathrm{coNP}$. They do not settle hardness at
smaller positive constants, nor do they give a polynomial-time
approximation algorithm at the square-root scale. Determining
the final gap therefore includes both the loss below
$N^{1/2}$ and the role of the leading constant.

\clearpage
\appendix
\section{Computational and mathematical background}\label{app:mathematics-background}

This appendix collects the background for the reduction. It begins
with 3SAT, complexity and gap reductions, linear codes, and lattices,
then develops field theory, polynomial rank bounds, quotient
algebras, interpolation formulas, and recurrence lemmas. It also explains BCH codes and syndrome decoding,
which share the power-sum and recurrence structure of the recovery
proof. The BCH subsection is optional and independent of the
hardness proof; its decoding algorithms are not external inputs
to the reduction. The general results hold in arbitrary characteristic
unless a characteristic-$p$ hypothesis is stated. Binary field
examples illustrate the constructions explicitly.

\subsection{Computational problems: 3SAT, codes, and lattices}\label{sec:computational}

\subsubsection{3SAT, polynomial time, and gap reductions}

A computational problem comes with a finite input description.
Running time is measured in its bit length. For example, a code
can be given by a matrix over a represented finite field, and a
lattice by a rational basis matrix. An algorithm is
\emph{polynomial time} if its running time is bounded by a fixed
polynomial in that description length. Enumerating $2^n$ Boolean
assignments is exponential in $n$, even though writing down one
assignment takes only $n$ bits.

\paragraph{SAT and 3SAT.}
A Boolean variable takes a value in $\{0,1\}$, with zero meaning
false and one meaning true. A Boolean formula is an expression
built from variables using AND ($\wedge$), OR ($\vee$), and
NOT ($\neg$). An assignment \emph{satisfies} the formula if its
value is one. The \emph{Boolean satisfiability problem} (SAT)
asks whether a given formula has a satisfying assignment.

A \emph{literal} is a variable or its negation; a \emph{clause}
is a disjunction of literals. A formula is in \emph{conjunctive
normal form} (CNF) if it is a conjunction of clauses.
The problem \emph{3SAT} is SAT restricted here to CNF formulas
with two or three literals in every clause. Repeated literals
are allowed. This convention includes the NP-complete restriction
with exactly three literals per clause. We retain two-literal
clauses directly, since each gives one quadratic clause equation
without an auxiliary variable.

\paragraph{Complexity classes and reductions.}
The class $\mathrm P$ consists of decision problems solvable in
deterministic polynomial time. A decision problem belongs to
$\mathrm{NP}$ if its YES instances have certificates of polynomial
length that a deterministic algorithm can verify in polynomial
time. More precisely, for an input string $I$, there are a
polynomial $p$ and a polynomial-time verifier $V$ such that
\[
 I\text{ is a YES instance}
 \quad\Longleftrightarrow\quad
 \exists y\in\{0,1\}^{\le p(|I|)}\ \ V(I,y)=1.
\]
The verifier must reject every certificate on a NO instance.
For satisfiability, $y$ is a Boolean assignment and verification
means evaluating the formula.

A \emph{polynomial-time many-one reduction} from a decision
problem $A$ to a decision problem $B$ is a polynomial-time map
$R$ with
\[
 I\in A\quad\Longleftrightarrow\quad R(I)\in B.
\]
An algorithm for $B$ then solves $A$ by constructing $R(I)$ and
running that algorithm once. A problem is \emph{NP-hard} if every
problem in $\mathrm{NP}$ reduces to it, and \emph{NP-complete}
if it is both NP-hard and in $\mathrm{NP}$.
Consequently a polynomial-time algorithm for any NP-complete
problem would imply $\mathrm P=\mathrm{NP}$. These are statements
about algorithms on all inputs; individual instances and special
families can still admit efficient solutions.

We use the following classical theorem and its standard 3SAT
form~\cite{cook,levin}.

\begin{theorem}[Cook--Levin and its 3SAT form]\label{thm:cook-levin}
SAT is NP-complete, and its restriction 3SAT is also NP-complete.
More explicitly, for every decision problem $A\in\mathrm{NP}$,
there is a polynomial-time map taking an input $I$ to a 3SAT
formula $\Psi_I$ such that
\[
 I\in A\quad\Longleftrightarrow\quad
 \Psi_I\text{ is satisfiable}.
\]
\end{theorem}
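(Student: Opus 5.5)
The statement to prove is the Cook--Levin theorem and its 3SAT form. I would take it as the classical result it is and cite~\cite{cook,levin}, but indicate the two standard steps. The first step encodes an NP verifier computation as a CNF formula. The second converts wide clauses into clauses with at most three literals.

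\textbf{Step 1 (SAT is NP-hard).} Fix $A\in\mathrm{NP}$ with verifier $V$ running in time $t(|I|)=\poly(|I|)$ on inputs $(I,y)$ with $|y|\le p(|I|)$. Model $V$ as a single-tape deterministic Turing machine. Its computation on $(I,y)$ is a $t\times t$ tableau of cells, each holding a tape symbol and possibly a state marker. Introduce Boolean variables $X_{a,b,\gamma}$ meaning ``cell $(a,b)$ holds symbol $\gamma$''. Write clauses expressing:
\begin{itemize}[leftmargin=*]
\item each cell holds exactly one symbol;
\item row $0$ is the start configuration, with the $I$-portion fixed and the certificate portion free;
\item every $2\times3$ window is a legal local transition of $V$;
\item some cell holds the accepting state.
\end{itemize}
Each condition involves a bounded number of variables, so it is a CNF of constant size per window. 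The formula therefore has size $O(t^2)$ and is computable in polynomial time. Satisfying assignments correspond exactly to accepting tableaux, so the formula is satisfiable if and only if some certificate $y$ exists, that is, if and only if $I\in A$. The main obstacle is this step. The delicate point is the locality argument: a tableau all of whose windows are legal must be a valid computation. One proves this by induction on the rows, using that a single-head machine changes only the cells next to the head. Membership of SAT in NP is immediate, since the certificate is an assignment and verification is evaluation.

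\textbf{Step 2 (reduction to 3SAT).} Replace each clause $\ell_1\vee\cdots\vee\ell_k$ with $k\ge4$ by
\[
 (\ell_1\vee\ell_2\vee y_1)\wedge(\neg y_1\vee\ell_3\vee y_2)\wedge\cdots\wedge(\neg y_{k-3}\vee\ell_{k-1}\vee\ell_k),
\]
using fresh variables $y_1,\ldots,y_{k-3}$. For equisatisfiability, suppose the original clause is true and let $\ell_j$ be a true literal. Set $y_i=1$ exactly for the indices before the chunk containing $\ell_j$; then every new clause is satisfied. Conversely, if all the $\ell_j$ are false, the chain of clauses forces $y_1=1$, then $y_2=1$, and so on, and the last clause fails. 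Clauses of length one are padded by repeating the literal, which the paper's convention permits. The blow-up is linear, so composing with Step 1 gives the polynomial-time map $I\mapsto\Psi_I$. Since 3SAT is also in NP, it is NP-complete.
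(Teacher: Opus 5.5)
Your proposal is correct and consistent with the paper, which gives no proof of its own and simply invokes the classical result via \cite{cook,levin}; your sketch (a tableau encoding of the verifier, then the standard chain splitting of wide clauses, with unit clauses padded by repetition as the paper's convention allows) is the textbook argument behind those citations. One wording point: the acceptance condition is a single clause with $O(t^2)$ literals rather than a constant-size condition, but the total size is still polynomial and your Step 2 already handles wide clauses, so nothing breaks.
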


\paragraph{Algorithms and complexity of SAT.}
For a 3SAT formula with $s$ variables and $m$ clauses, exhaustive
search takes $2^s\poly(s+m)$ time. Faster exponential-time
algorithms are possible: Sch\"oning's randomized local search
with restarts solves 3SAT in $(4/3)^s\poly(s+m)$ time with
bounded one-sided error~\cite{schoening}. It never reports a
satisfying assignment on an unsatisfiable formula; on a
satisfiable formula it can fail to find one. NP-completeness
excludes a polynomial-time algorithm if $\mathrm P\ne\mathrm{NP}$,
but does not by itself exclude subexponential time. The
\emph{Exponential Time Hypothesis} (ETH) of Impagliazzo and
Paturi~\cite{ip} asserts a positive exponential-time lower bound
for 3SAT; in particular, it rules out deterministic
$2^{o(s)}\poly(s+m)$ algorithms. ETH is an additional, unproved
assumption. Song's polynomial-time reductions from 3SAT use
NP-completeness and require no ETH assumption.

\paragraph{A running 3SAT instance and gap reductions.}
Consider
\begin{equation}\label{eq:clause}
 \begin{aligned}
 \Phi&=C_1\wedge C_2\wedge C_3,\\
 C_1&=(x_1\vee\neg x_2\vee x_3),\\
 C_2&=(\neg x_1\vee x_4\vee x_5),\\
 C_3&=(x_2\vee x_3\vee\neg x_5).
 \end{aligned}
\end{equation}
This is a formula with three clauses and five variables, $x_1,\ldots,x_5$. We will use it as the running formula. For the assignment $(1,0,1,0,1)^\top$, the literal values in its clauses are $(1,1,1)$, $(0,0,1)$, and $(0,1,0)$, respectively. Thus all three clauses are true. By contrast, $(0,1,0,0,1)^\top$ makes $C_1$ false and $C_2,C_3$ true, so the entire conjunction is false. A satisfying assignment must make \emph{every} clause true.

A satisfying assignment is the certificate in the definition of $\mathrm{NP}$. Theorem~\ref{thm:cook-levin} makes 3SAT an NP-complete starting problem. The remaining task is to preserve its YES and NO cases through the algebraic and geometric constructions.

A gap problem has a near threshold $R$ and a farther threshold $\gamma R$. An input is promised to belong to either
\[
 \mathrm{YES}:\ d\le R,\qquad
 \mathrm{NO}:\ d>\gamma R.
\]
The algorithm is not constrained on inputs between the two thresholds. A reduction must send satisfiable formulas to YES inputs and unsatisfiable formulas to NO inputs. Establishing both implications is called completeness and soundness, respectively.

\subsubsection{Linear codes, affine systems, and Hamming weight}

Let $q$ be any prime power. A \emph{linear code over $\F_q$} is an $\F_q$-vector subspace $C\le\F_q^N$. Its length is $N$, and its dimension is $k=\dim_{\F_q}C$, the dimension of that subspace rather than the ambient space. It is called an $[N,k]_q$ code and has $q^k$ codewords: each of the $k$ coefficients in a basis expansion has $q$ choices. A binary linear code is the special case $q=2$ used in Song's reduction.

For $x\in\F_q^N$, its Hamming weight $\wt(x)$ is the number of nonzero coordinates, regardless of their nonzero values. Hamming distance is
\[
 d_H(x,y)=|\{i:x_i\ne y_i\}|=\wt(x-y).
\]
In characteristic two, subtraction equals addition, so this is also $\wt(x+y)$. Over a field of odd characteristic that replacement is generally incorrect. For a nonzero code, its minimum distance is
\[
 d_{\min}(C)=\min_{\substack{c,c'\in C\\c\ne c'}}d_H(c,c')
 =\min_{0\ne c\in C}\wt(c).
\]
The equality holds because differences of distinct codewords are precisely the nonzero codewords.

A generator matrix $G\in\F_q^{N\times k}$ has a basis of $C$ as its columns, so $C=\{Gu:u\in\F_q^k\}$. A full-row-rank parity-check matrix $H\in\F_q^{(N-k)\times N}$ describes the same code by
\[
 C=\ker H=\{c\in\F_q^N:Hc=0\},\qquad HG=0.
\]
Such an $H$ can be obtained by taking a basis of the linear equations vanishing on $C$; Gaussian elimination extends a basis of $C$ to a basis of the ambient space and produces $N-k$ independent equations. Conversely, $HG=0$ gives containment of the generated code in the kernel, and their equal dimensions give equality. Redundant parity-check rows are also allowed, as they will be in the constructed affine system.

For example, over $\F_3$,
\[
 G=\begin{pmatrix}1\\1\\1\end{pmatrix},\qquad
 H=\begin{pmatrix}1&-1&0\\0&1&-1\end{pmatrix}
\]
give the $[3,1]_3$ code $C=\{(0,0,0)^\top,(1,1,1)^\top,(2,2,2)^\top\}$. Here $-1=2$ in $\F_3$, and the two equations require all three coordinates to be equal. Both nonzero codewords have weight three, so $d_{\min}(C)=3$.

\paragraph{The minimum distance problem.}
Given a generator matrix for a nonzero linear code, the
\emph{minimum distance problem} asks for $d_{\min}(C)$; its
search version asks for a nonzero codeword attaining that weight.
For a positive integer threshold $w$, the decision version asks
whether there exists $0\ne c\in C$ with $\wt(c)\le w$.
Equivalently, using a parity-check matrix, it asks for
\[
 Hc=0,\qquad c\ne0,\qquad \wt(c)\le w.
\]
The nonzero condition is essential: without it, $c=0$ always
attains weight zero. Although Gaussian elimination finds a basis
of $\ker H$, it does not optimize the weight of its nonzero
linear combinations.

\paragraph{Decoding and the closest codeword problem.}
Given a received vector $t\in\F_q^N$, the \emph{closest codeword
problem} (CCP) asks to find $c\in C$ minimizing $d_H(t,c)$, or to
compute $\dist_H(t,C)$. Its decision version asks whether this
distance is at most a supplied integer radius $w$.
In the communication interpretation, $t=c+e$, where $e$ is an
error vector; recovering a closest codeword means finding a
minimum-weight error compatible with $t$. Indeed
\[
 H(t-c)=Ht\qquad(c\in C),
\]
so closest codeword decoding is exactly the syndrome problem
below with syndrome $b=Ht$. Minimum distance concerns the
lightest nonzero vector in the code; decoding concerns a given
target and its affine coset.

\paragraph{Exact algorithms and approximation hardness.}
Enumerating all $q^k$ codewords solves CCP for an $[N,k]_q$
code in $q^k\poly(N)$ field operations. For unrestricted binary
codes, however, the decision version of syndrome decoding is
NP-complete by Berlekamp, McEliece, and van Tilborg~\cite{bmt};
the equivalence above gives the same conclusion for binary CCP.
Arora, Babai, Stern, and Sweedyk~\cite{abss} proved NP-hardness
of approximation within every fixed constant factor.
Bhattiprolu, Guruswami, and Ren~\cite[Theorem 4]{bgr} gave a
deterministic proof avoiding the PCP theorem, over every fixed
finite field. They also obtain hardness within
$2^{(\log_2 N)^{1-\delta}}$ for every fixed $0<\delta<1$,
assuming that NP has no quasipolynomial-time algorithms
(algorithms with running time exponential in a fixed power of
the logarithm of the input length). These conditional factors
are $N^{o(1)}$. Chapter~7 of the OpenAI report~\cite[Corollary 15]{openai-ten}
subsequently obtained the fixed polynomial factor $N^{1/200}$
under deterministic polynomial-time reductions. Song strengthens
this to $N^{1-\epsilon}$ for every fixed $0<\epsilon<1$
as stated in Theorem~\ref{thm:song}; see also his historical
discussion~\cite[Section 1]{song}.

\paragraph{Polynomial-time approximation.}
Alon, Panigrahy, and Yekhanin~\cite[Theorem 1 and Section 1]{apy}
give a deterministic approximation algorithm for closest codeword
over any finite field $\F_q$. For every fixed constant $c\ge1$,
it uses $N^{O(c)}$ field operations and achieves approximation
factor $\lceil N/(c\log_2 N)\rceil$. More explicitly, given a
linear code $C\le\F_q^N$ and a target $t\in\F_q^N$, the
algorithm returns $\widehat c\in C$ satisfying
\[
 d_H(t,\widehat c)
   \le\left\lceil\frac{N}{c\log_2 N}\right\rceil
            \dist_H(t,C).
\]
Thus closest codeword has a deterministic polynomial-time
$O(N/\log N)$ approximation algorithm over every finite field,
using its represented field arithmetic. The theorem is stated
for binary codes in their paper, whose introduction explicitly
notes the generalization to other finite fields. Its proof uses
spans of coordinate unit vectors and linear-system solving,
which apply over $\F_q$ as well. The constant $c$ is
chosen independently of $N$; increasing it improves the
approximation factor while increasing the running-time exponent.

\paragraph{The guaranteed unique-decoding radius.}
If $d=d_{\min}(C)$ and an integer $e\ge0$ satisfies $2e<d$,
at most one codeword lies within Hamming distance $e$ of a
given target. Otherwise two such codewords $c,c'$ would satisfy
\[
 d_H(c,c')\le d_H(c,t)+d_H(t,c')\le2e<d,
\]
contradicting the definition of $d$. Thus
$\lfloor(d-1)/2\rfloor$ is a guaranteed unique-decoding radius.
Decoding under a promise of that error bound is often called
\emph{bounded-distance decoding}. General closest codeword
decoding places no such restriction on the target, and minimizers
can tie. Uniqueness is an information statement; the argument
above supplies no algorithm for finding the codeword.
BCH codes provide an explicit family with a polynomial-time
algorithm for correcting up to $\lfloor(\delta-1)/2\rfloor$
errors, where $\delta$ is their designed distance, as proved in
Appendix~\ref{sec:bch-background}. That is a
guarantee within a specified radius, rather than an algorithm for
unrestricted closest codeword decoding.
For the complexity of minimum distance and decoding, see
\cite{dms}; the reduction studied here constructs general
closest codeword instances.

For any matrix $H\in\F_q^{s\times N}$ and target syndrome $b\in\F_q^s$, syndrome decoding seeks
\[
 W(H,b)=\min\{\wt(\theta):\theta\in\F_q^N,\ H\theta=b\}.
\]
If the system is inconsistent, this minimum has no feasible vector (one may set its value to $+\infty$). If $Ht=b$ and $C=\ker H$, all solutions are the affine coset $t+C$: indeed $H\theta=Ht$ exactly when $\theta-t\in C$. Since $C=-C$, minimizing over that coset gives
\begin{equation}\label{eq:syndrome-code}
 W(H,b)=\min_{c\in C}\wt(t-c)=\dist_H(t,C).
\end{equation}
This identity holds over every finite field. Gaussian elimination finds one feasible $t$ or detects inconsistency. Finding a feasible vector is easy; finding a minimum-weight feasible vector is the difficult optimization task. The main reduction uses the prime field $\F_p$.

\subsubsection{Lattices, shortest vectors, and closest vectors}

Given independent vectors $b_1,\ldots,b_N\in\R^N$, their lattice is
\[
 \Lambda=\Z b_1+\cdots+\Z b_N
        =\{Bz:z\in\Z^N\},\qquad B=(b_1\ \cdots\ b_N).
\]
The columns of $B$ form a lattice basis. Its rank is $N$, which
is also the ambient dimension in this exposition. For
computational problems, the entries of $B$, the target, and
any supplied radius have finite rational descriptions; the later
reduction outputs an integer basis and an integer target.
Unlike a linear code, a lattice has infinitely many elements,
with integer coefficients instead of finite-field coefficients.

\paragraph{The shortest vector problem (SVP).}
The first minimum of a lattice in the Euclidean norm is
\[
 \lambda_1(\Lambda)=\min_{0\ne v\in\Lambda}\|v\|_2
       =\min_{0\ne z\in\Z^N}\|Bz\|_2,
 \qquad \|v\|_2^2=\sum_i v_i^2.
\]
Exact SVP asks to compute this minimum or find a nonzero vector
attaining it. The decision version, with radius $R>0$, asks
whether $\lambda_1(\Lambda)\le R$. The requirement $v\ne0$
plays the same role as excluding the zero codeword in the
minimum distance problem. A basis vector need not be shortest:
for $b_1=(2,1)^\top$ and $b_2=(3,1)^\top$, the lattice vector
$b_2-b_1=(1,0)^\top$ has norm one, smaller than the respective
basis-vector norms $\sqrt5$ and $\sqrt{10}$.

\paragraph{The closest vector problem (CVP).}
Given a target $t\in\R^N$, define
\[
 \dist_2(t,\Lambda)=\min_{v\in\Lambda}\|t-v\|_2
       =\min_{z\in\Z^N}\|t-Bz\|_2.
\]
Exact CVP asks to compute this distance or find a lattice point
attaining it. Its decision version asks whether the distance
is at most a supplied radius $R$. Zero is an allowed lattice
point here, and a target already in $\Lambda$ has distance zero.
Setting $t=0$ therefore gives a CVP instance with optimum zero;
the nonzero constraint in SVP defines a different optimization
problem. For computational background on these lattice problems,
see Micciancio and Goldwasser~\cite{mg-lattices}; for their
geometry and connections with codes, see Conway and
Sloane~\cite{cs-lattices}.

The minima above are attained. Since $B$ is invertible,
$\|z\|_2\le\|B^{-1}\|\,\|Bz\|_2$, where $\|B^{-1}\|$ is the
operator norm. A bounded ball therefore contains only finitely
many lattice points, because the possible integer columns $z$
are bounded. For SVP choose any nonzero basis vector as an
initial norm bound; for CVP use zero as an initial candidate.
Minimizing over the resulting finite nonempty sets proves the
assertion.

\paragraph{Approximation and gap versions.}
For $\gamma\ge1$, search $\mathrm{SVP}_\gamma$ asks for
$0\ne v\in\Lambda$ with $\|v\|_2\le\gamma\lambda_1(\Lambda)$,
and search $\mathrm{CVP}_\gamma$ asks for $v\in\Lambda$ with
$\|t-v\|_2\le\gamma\dist_2(t,\Lambda)$.
The promise problems $\mathrm{GapSVP}_\gamma$ and
$\mathrm{GapCVP}_\gamma$ distinguish
\[
 \begin{array}{c|cc}
 &\mathrm{YES}&\mathrm{NO}\\\hline
 \mathrm{GapSVP}_\gamma&\lambda_1(\Lambda)\le R&
                         \lambda_1(\Lambda)>\gamma R\\
 \mathrm{GapCVP}_\gamma&\dist_2(t,\Lambda)\le R&
                         \dist_2(t,\Lambda)>\gamma R
 \end{array}
\]
for a supplied $R>0$. As before, inputs in the intervening
region are outside the promise. The approximation factor may
depend on $N$. A search algorithm with the stated approximation
guarantee distinguishes its corresponding gap problem by
comparing the returned norm or distance with $\gamma R$.
Thus hardness for the gap problem also rules out such a search
algorithm under the same complexity assumption.

\paragraph{Recent deterministic hardness for Euclidean SVP.}
Wan~\cite[Theorem 1.8]{wan-svp} proves that Euclidean
$\mathrm{GapSVP}_\gamma$ is NP-hard for every fixed
$1\le\gamma<\sqrt2$ under deterministic polynomial-time
reductions, including exact SVP when $\gamma=1$.
His construction derandomizes a locally dense Reed--Solomon
lattice gadget using estimates for rational points on varieties
over finite fields. In subsequent work,
Wan~\cite[Theorem 1.2]{wan-svp-constant} extends deterministic
polynomial-time many-one hardness to every fixed constant
$\gamma>1$, using tensor amplification of a binary lattice
gadget through the coordinate $\ell_1$ minimum.
These results address the nonzero-vector constraint in SVP
through additional lattice constructions.

\paragraph{Algorithms for Euclidean CVP.}
Babai's nearest-plane algorithm, applied after polynomial-time
lattice basis reduction, gives a deterministic polynomial-time
$2^{N/2}$ approximation~\cite{babai}. Micciancio and
Voulgaris~\cite{mv} obtain a deterministic polynomial-time
approximation factor $2^{O(N\log\log N/\log N)}$.
Their Voronoi-cell method also solves exact Euclidean CVP
deterministically in $2^{O(N)}$ time and space, with polynomial
dependence on the input bit length. Thus exact algorithms
with exponential running time coexist with polynomial-time
algorithms having much larger approximation factors.

\paragraph{Hardness and the square-root scale.}
Arora, Babai, Stern, and Sweedyk~\cite{abss} proved that
Euclidean CVP is NP-hard to approximate within every fixed
constant, which also implies hardness of exact CVP.
Dinur, Kindler, Raz, and Safra~\cite{dkrs} obtained the factor
$N^{a/\log\log N}$ for an absolute constant $a>0$ under
polynomial-time reductions. Its exponent tends to zero.
The OpenAI report~\cite[Chapter 7, Theorem 1]{openai-ten}
obtained the fixed exponent $1/400$, before Song's
$N^{1/2-\epsilon/2}$ result for every fixed $0<\epsilon<1$.
These milestones, reviewed in~\cite[Section 1]{song}, concern
Euclidean CVP and deterministic reductions from ordinary 3SAT.
At the larger factor $cN^{1/2}$, Aharonov--Regev~\cite{ar},
with Gao's refinement~\cite{gao-ar}, place the gap decision
problem in $\mathrm{NP}\cap\mathrm{coNP}$ for every fixed
$c>2/(\pi\sqrt3)$. This supplies short certificates for both
promised cases; it does not supply a polynomial-time
approximation algorithm. NP-hardness at that scale would imply
$\mathrm{NP}=\mathrm{coNP}$.

There is also a lattice bounded-distance decoding problem:
recover a lattice point promised to lie within a radius
$r<\lambda_1(\Lambda)/2$ of $t$. It is unique, since two such
points would have a nonzero lattice difference of length at
most $2r<\lambda_1(\Lambda)$ by the triangle inequality.
The general CVP instances constructed in this paper carry no
promise of this kind.

\subsubsection{The four optimization problems and the reduction}

The code and lattice problems have the following parallel forms.
\begin{center}
\begin{tabular}{lll}
\toprule
Problem & Given data & Quantity to minimize\\
\midrule
Minimum distance & A nonzero linear code $C$ &
 $\wt(c)$ over $0\ne c\in C$\\
Closest codeword (CCP) & $C$ and target $t$ &
 $\wt(t-c)$ over $c\in C$\\
SVP & A lattice $\Lambda$ &
 $\|v\|_2$ over $0\ne v\in\Lambda$\\
CVP & $\Lambda$ and target $t$ &
 $\|t-v\|_2$ over $v\in\Lambda$\\
\bottomrule
\end{tabular}
\end{center}
Membership itself is elementary: $c\in C$ is tested by $Hc=0$;
for an invertible rational lattice basis, $v\in\Lambda$ is tested
by computing $B^{-1}v$ and checking that every coordinate is an
integer. The optimization tasks add the requirement of finding
a vector of minimum weight or distance among the feasible
vectors.

The construction in this exposition starts with an NP-complete
3SAT instance and builds an affine system $H\theta=b$.
Its lightest feasible vector gives a closest codeword problem
through~\eqref{eq:syndrome-code}. The lattice construction in Section~\ref{sec:lattice-from-code} then transfers the YES and NO bounds to squared CVP distance;
for binary codes it gives an exact metric identity. Minimum distance and
SVP provide the homogeneous counterparts for perspective;
their nonzero constraints are treated by separate hardness
reductions~\cite{dms,wan-svp,wan-svp-constant}.
The proofs in Sections~\ref{sec:quadratics}--\ref{sec:gap} establish
the stated closest codeword and CVP gaps.

\subsection{Field extensions, bases, and Frobenius}\label{sec:field-foundation}

\subsubsection{Extensions as vector spaces and algebras}

A \emph{field extension} $E/F$ is an inclusion of fields, with the
operations of $E$ restricting to those of $F$. It is also an $F$-vector
space: scalars come from $F$, while vectors are elements of $E$.
Its degree $[E:F]$ is the dimension of this vector space. A basis
$(b_1,\ldots,b_c)$ means that every $\xi\in E$ has a unique expression
$\xi=\sum_j\gamma_jb_j$, with $\gamma_j\in F$.
The basis elements are field elements; their coordinate columns are
vectors in $F^c$. Multiplication by a fixed $\xi\in E$ is $F$-linear,
since $\xi(\gamma u+\delta v)=\gamma\xi u+\delta\xi v$.
Multiplying two unknown field elements, however, is a bilinear operation.

\paragraph{The polynomial B\'ezout identity.}
Let $a(Z),b(Z)\in F[Z]$ be polynomials, not both zero, and let
$d(Z)=\gcd(a,b)$ denote their monic greatest common divisor.
The Euclidean algorithm gives polynomials $u(Z),v(Z)\in F[Z]$
such that
\[
 u(Z)a(Z)+v(Z)b(Z)=d(Z).
\]
This equation is called a \emph{B\'ezout identity} for $a,b$.
To justify it, repeatedly divide with remainder: the nonzero
remainders have strictly decreasing degrees, and replacing a pair
by the divisor and remainder preserves its common divisors.
The last nonzero remainder, normalized to be monic, is therefore
$d$. Each remainder is a polynomial combination of $a,b$, so
back substitution gives the displayed identity.
In particular, $a,b$ are \emph{relatively prime} when $d=1$,
in which case $ua+vb=1$. Conversely, such an identity forces
every common divisor to divide one, so $\gcd(a,b)=1$.

An element $\zeta$ is \emph{algebraic} over $F$ if a nonzero polynomial
over $F$ vanishes at it. Among monic such polynomials choose one of
least degree, its \emph{minimal polynomial} $m_\zeta$.
Every element of a degree-$c$ extension is algebraic, because its
$c+1$ powers $1,\zeta,\ldots,\zeta^c$ are linearly dependent over $F$.
It is irreducible: a proper factorization would make a factor of
smaller degree vanish at $\zeta$, since a field has no zero divisors.
Division with remainder shows that every polynomial vanishing at
$\zeta$ is a multiple of $m_\zeta$. Consequently evaluation induces
an isomorphism
\[
 F[Z]/(m_\zeta(Z))\longrightarrow F(\zeta),\qquad
 [h(Z)]\longmapsto h(\zeta).
\]
Indeed the quotient is a field by the B\'ezout argument below, so its
image $F[\zeta]$ is already the smallest field containing $F$ and
$\zeta$. If $\deg m_\zeta=c$, its power basis is
$(1,\zeta,\ldots,\zeta^{c-1})$.

For a tower $F\subset K\subset E$ of finite extensions, a basis
$(\beta_i)$ of $K/F$ and a basis $(b_j)$ of $E/K$ give the product
basis $(\beta_i b_j)$ of $E/F$. Spanning follows by expanding each
$K$-coefficient in the $\beta_i$; independence follows by grouping
a relation by the $b_j$ and then by the $\beta_i$. Thus
\[
 [E:F]=[E:K][K:F].
\]
The following finite-field examples instantiate these definitions;
the later construction uses precisely this product-basis convention.

\subsubsection{Prime fields and characteristic}

A field is a set with addition and multiplication in which every nonzero element has a multiplicative inverse. For a prime $p$, the residues modulo $p$ form the field $\F_p$. For example, $2\cdot3=1$ in $\F_5$, so $3$ is the inverse of $2$. Primality matters: modulo $6$, the nonzero residues $2$ and $3$ multiply to zero, which cannot happen in a field.

The characteristic of a field is the least positive integer $p$ for which $p\cdot1=0$, if such an integer exists. In a finite field such an integer exists because the list $0,1,1+1,\ldots$ must repeat. It is prime: a composite characteristic $uv$ would give nonzero elements $u\cdot1,v\cdot1$ whose product is zero. The smallest subfield is a copy of $\F_p$, and the finite field is a finite-dimensional vector space over it. If the dimension is $a$, its size is $p^a$, since a vector has $a$ independently chosen coordinates in $\F_p$.

For every prime $p$, intermediate binomial coefficients vanish modulo $p$,
so $(u+v)^p=u^p+v^p$ and the $p$th-power map is $\F_p$-linear.
Song's original binary case has characteristic two. In every such field,
\[
 1+1=0,\qquad -\alpha=\alpha,\qquad
 (\alpha+\beta)^2=\alpha^2+\beta^2.
\]
The last identity follows by expanding the square: the cross term is $2\alpha\beta=0$. Squaring is therefore linear over $\F_2$. It is generally \emph{not} linear over the larger field: $(\omega\alpha)^2=\omega^2\alpha^2$, which need not equal $\omega\alpha^2$.

\subsubsection{Building a field from an irreducible polynomial}

Let $f(Z)\in\F_p[Z]$ be monic and irreducible of degree $a$. In the quotient
\[
 K=\F_p[Z]/(f(Z)),\qquad \omega=Z\bmod f(Z),
\]
two polynomials represent the same element if their difference is divisible by $f$. Every element has a unique representative of degree less than $a$; hence $1,\omega,\ldots,\omega^{a-1}$ is an $\F_p$-basis. Irreducibility makes the quotient a field: for a nonzero representative $g$ of degree less than $a$, the Euclidean algorithm gives $ug+vf=1$, so the residue of $u$ is its inverse.

\begin{example}[All arithmetic in $\F_4$]\label{ex:f4}
The polynomial $Z^2+Z+1$ has no root in $\F_2$, so it is irreducible. Put
\[
 \F_4=\F_2(\omega),\qquad \omega^2=\omega+1.
\]
Its elements are $0,1,\omega,1+\omega$. For example,
\[
 \omega(1+\omega)=\omega+\omega^2=1,
 \quad (1+\omega)^2=\omega,
 \quad\omega^3=1.
\]
If $a,b\in\F_2$, then
\[
 (a+b\omega)^2=(a+b)+b\omega.
\]
In the ordered basis $(1,\omega)$, squaring has binary matrix
\[
 \begin{pmatrix}1&1\\0&1\end{pmatrix}.
\]
This is a concrete reason an equation involving a square can still be linear in binary coordinates.
\end{example}

\subsubsection{Field extensions and the Frobenius map}\label{sec:frobenius}

Let $q=p^a$ be any prime power, where $p$ is any prime and $a\ge1$. The following identity holds in every finite field $\F_q$, including fields of odd characteristic; it is not restricted to characteristic two:
\begin{equation}\label{eq:finite-field-identity}
 \boxed{\beta^q=\beta\qquad\text{for every }\beta\in\F_q.}
\end{equation}
For $\beta=0$ this is immediate. For $\beta\ne0$, multiplication by $\beta$ permutes the $q-1$ nonzero elements of $\F_q$. Multiplying those elements before and after the permutation gives
\[
 \prod_{\gamma\in\F_q^\times}\gamma
 =\prod_{\gamma\in\F_q^\times}(\beta\gamma)
 =\beta^{q-1}\prod_{\gamma\in\F_q^\times}\gamma.
\]
Canceling the nonzero product yields $\beta^{q-1}=1$, hence $\beta^q=\beta$. Thus the $q$th-power map is the identity on $\F_q$, although it need not be the identity on a larger extension field. This distinction is central to the Frobenius discussion below.

For the main construction take the following tower:
\[
 K=\F_q,\quad q=p^a,\qquad E=K(\zeta),\quad [E:K]=c.
\]
The bracket $[E:K]$ denotes vector-space dimension. If $\zeta$ has an irreducible polynomial of degree $c$ over $K$, every element $\xi\in E$ is uniquely
\begin{equation}\label{eq:coordinates}
 \xi=\gamma_0+\gamma_1\zeta+\cdots+\gamma_{c-1}\zeta^{c-1},
 \qquad \gamma_j\in K.
\end{equation}
If $(\beta_1,\ldots,\beta_a)$ is an $\F_p$-basis of $K$, then
\[
 \{\beta_i\zeta^j:1\le i\le a,\ 0\le j<c\}
\]
is an $\F_p$-basis of $E$ by the product-basis argument above.
It follows that $[E:\F_p]=ac$ and $|E|=p^{ac}$.

A large field does not require an enormous description of each element: one element of $E$ is stored using $ac$ prime-field coordinates. Arithmetic consists of polynomial operations and reduction modulo the defining irreducible polynomials. The construction will enumerate $K$, but never enumerate all of $E$ when $|E|$ is exponential in the input length.

\begin{example}[A sixteen-element field as a two-stage tower]\label{ex:tower}
Start with $K=\F_4$ from Example~\ref{ex:f4}, and put
\[
 E=K(\zeta),\qquad \zeta^2+\zeta+\omega=0.
\]
For $u=0,1,\omega,1+\omega$, the values of $u^2+u$ are $0,0,1,1$. None is $\omega$, so the defining quadratic has no root in $K$ and is irreducible. Thus $E$ has sixteen elements, with binary basis
\[
 (\alpha_1,\alpha_2,\alpha_3,\alpha_4)
 =(1,\omega,\zeta,\omega\zeta).
\]
For $\xi=1+\zeta$,
\[
 \xi^2=1+\omega+\zeta,\qquad
 \xi^3=1+(1+\omega)\zeta.
\]
For instance, expand $(1+\zeta)(1+\omega+\zeta)$ and replace $\zeta^2$ by $\zeta+\omega$ to obtain the second equality.
\end{example}

\paragraph{Absolute Frobenius: raising to the characteristic.}
Let $F=\F_{p^m}$, where $p$ is any prime. The \emph{absolute Frobenius map} is
\[
 \varphi:F\longrightarrow F,\qquad \varphi(\xi)=\xi^p.
\]
It preserves addition and multiplication:
\[
 (\xi+\upsilon)^p=\xi^p+\upsilon^p,\qquad
 (\xi\upsilon)^p=\xi^p\upsilon^p,\qquad \varphi(1)=1.
\]
For the first equality, each intermediate binomial coefficient $\binom pi$, $0<i<p$, is divisible by $p$, so all the cross terms vanish. If $\varphi(\xi)=\varphi(\upsilon)$, then $(\xi-\upsilon)^p=0$, which forces $\xi=\upsilon$ in a field. Thus $\varphi$ is injective; finiteness makes it surjective as well. A bijective map preserving the field operations is a \emph{field automorphism}.

Every element of the prime field $\F_p$ is fixed, since it is a sum of copies of $1$. Consequently $\varphi$ is $\F_p$-linear:
\[
 \varphi(a\xi+b\upsilon)=a\varphi(\xi)+b\varphi(\upsilon)
 \qquad(a,b\in\F_p).
\]
It is generally not linear over a larger field of scalars. Instead,
$\varphi(\gamma\xi)=\gamma^p\varphi(\xi)$; the scalar is also raised to the $p$th power. For example, on $\F_4$, squaring sends $\omega\cdot1$ to $\omega^2=1+\omega$, not to $\omega\cdot1^2$. Its matrix in a prime-field basis nevertheless describes an ordinary linear transformation over $\F_p$, as in Example~\ref{ex:f4}.

\paragraph{Iteration, inverse, and fixed elements.}
Repeated application gives $\varphi^j(\xi)=\xi^{p^j}$. Applying~\eqref{eq:finite-field-identity} with $q=p^m$ gives $\xi^{p^m}=\xi$ for every $\xi\in F$. Hence
\[
 \varphi^m=\operatorname{id},\qquad
 \varphi^{-1}(\xi)=\xi^{p^{m-1}}.
\]
The order is exactly $m$. If $0<j<m$ and $\varphi^j$ were the identity, the nonzero polynomial $Z^{p^j}-Z$ would have all $p^m$ elements as roots, exceeding its degree $p^j$. Here we use the elementary root bound: factoring out $Z-\xi$ for each distinct root shows that their number cannot exceed the degree.

The fixed elements of $\varphi$ are exactly $\F_p$. Indeed they are the roots of $Z^p-Z$, whose degree is $p$; all $p$ elements of $\F_p$ are already roots. In particular, in characteristic two,
\begin{equation}\label{eq:frobenius-fixed-bits}
 \xi^2=\xi\quad\Longleftrightarrow\quad\xi\in\{0,1\}.
\end{equation}
Every element also has a unique $p$th root, given by the inverse formula. Thus taking $p$th roots inside a finite field requires no further field extension.

\paragraph{Relative Frobenius: fixing the base field.}
For an extension $E=\F_{q^c}$ of $K=\F_q$, where more generally $q=p^a$, define
\[
 \sigma_q:E\longrightarrow E,\qquad \sigma_q(\xi)=\xi^q.
\]
This is the $a$th iterate of absolute Frobenius. It fixes every element of $K$ and is $K$-linear, since
\[
 (\gamma\xi+\delta\upsilon)^q
 =\gamma\xi^q+\delta\upsilon^q\qquad(\gamma,\delta\in K).
\]
Its order is exactly $c$, and its inverse sends $\xi$ to $\xi^{q^{c-1}}$, by the same argument just given. Its fixed elements in $E$ are precisely $K$: the polynomial $Z^q-Z$ has degree $q$ and already has all $q$ elements of $K$ as roots. In our characteristic-two tower, distinguish the two maps carefully:
\[
 \begin{array}{c|c|c}
 \text{map on }E & \text{formula} & \text{guaranteed scalar linearity}\\\hline
 \text{absolute Frobenius} & \xi\mapsto\xi^2 & \F_2\text{-linear}\\
 \text{relative Frobenius over }K & \xi\mapsto\xi^{2^a}=\xi^q & K\text{-linear}.
 \end{array}
\]

\paragraph{Conjugates and minimal polynomials.}
The distinct elements in the relative Frobenius orbit of $\xi$ are
$\xi,\xi^q,\ldots,\xi^{q^{\ell-1}}$, where $\ell$ is the least positive integer for which $\xi^{q^\ell}=\xi$. Since $\sigma_q^c$ is the identity, the orbit length $\ell$ divides $c$: divide $c$ by $\ell$ and use minimality to see that the remainder is zero. The polynomial
\[
 m_\xi(Z)=\prod_{j=0}^{\ell-1}(Z-\xi^{q^j})
\]
has coefficients in $K$, because applying $\sigma_q$ to its coefficients permutes its factors and therefore fixes those coefficients. It is the minimal polynomial of $\xi$ over $K$. To justify this last assertion, any polynomial over $K$ vanishing at $\xi$ also vanishes at every $\xi^{q^j}$, by applying Frobenius to its equation. Such a nonzero polynomial must have degree at least $\ell$, whereas $m_\xi$ has degree $\ell$. Moreover, a proper factorization over $K$ would give a smaller-degree factor vanishing at $\xi$, which is impossible. Thus the minimal polynomial is irreducible and its roots are these distinct conjugates.

\paragraph{Trace and norm.}
Two useful ways to combine the conjugates are
\[
 \operatorname{Tr}_{E/K}(\xi)=\sum_{j=0}^{c-1}\xi^{q^j},\qquad
 \operatorname{N}_{E/K}(\xi)=\prod_{j=0}^{c-1}\xi^{q^j}.
\]
Relative Frobenius cyclically permutes the terms or factors, so both values lie in its fixed field $K$. The trace is $K$-linear, by the displayed linearity of each Frobenius iterate; the norm is multiplicative, by taking the product of $(\xi\upsilon)^{q^j}=\xi^{q^j}\upsilon^{q^j}$.

\begin{example}[Both Frobenius maps in the sixteen-element tower]\label{ex:frobenius}
In Example~\ref{ex:tower}, $K=\F_4$, $E=\F_{16}$, and $\zeta^2=\zeta+\omega$. Absolute Frobenius is squaring, with the orbit
\[
 \zeta\ \longmapsto\ \zeta+\omega\ \longmapsto\ \zeta+1
 \ \longmapsto\ \zeta+\omega+1\ \longmapsto\ \zeta.
\]
Each arrow follows by squaring and using $\omega^2=\omega+1$. Relative Frobenius over $K$ takes every second step: $\zeta\mapsto\zeta^4=\zeta+1\mapsto\zeta$. For $\gamma,\delta\in K$ the two formulas are
\[
 \begin{aligned}
 (\gamma+\delta\zeta)^2
   &=\gamma^2+\delta^2\omega+\delta^2\zeta,\\
 (\gamma+\delta\zeta)^4
   &=(\gamma+\delta)+\delta\zeta.
 \end{aligned}
\]
The first squares the coefficients, while the second is $K$-linear. The two relative conjugates of $\zeta$ give
\[
 \operatorname{Tr}_{E/K}(\zeta)=1,\qquad
 \operatorname{N}_{E/K}(\zeta)=\zeta(\zeta+1)=\omega,
\]
and their product polynomial is $(Z-\zeta)(Z-(\zeta+1))=Z^2+Z+\omega$.
\end{example}

\paragraph{Why Frobenius matters in the reduction.}
Two uses recur in the proof. Powers $\xi^{p^j}$ are linear in
prime-field coordinates, so the Frobenius tests remain affine.
The identity $\gamma^p=\gamma$ forces coefficients into $\F_p$.
Only for $p=2$ does a nonzero coefficient then have to equal one.
For odd $p$, separate Boolean equations are required to recover
bits; coefficient normalization in the main proof handles
weights that need not equal one.

\subsection{Rank deficiency at a polynomial specialization}\label{sec:rank-deficiency}

The following determinant-multiplicity bound, used in
Song's rank argument~\cite[Lemma 4.2]{song}, controls rank loss
in the high-rank branch. We give a proof over an arbitrary field.

\begin{lemma}[Rank deficit is bounded by zero multiplicity]
\label{lem:multiplicity}
Let $C(X)\in F[X]^{\rho\times\rho}$ have nonzero determinant
$\Delta(X)$, and let $\sigma\in F$. Put
$\delta_\sigma=\rho-\rank C(\sigma)$. Then
\[
 (X-\sigma)^{\delta_\sigma}\mid\Delta(X),\qquad
 \delta_\sigma\le\operatorname{ord}_{X=\sigma}\Delta.
\]
The order denotes the multiplicity of the zero, with order zero
when $\Delta(\sigma)\ne0$.
\end{lemma}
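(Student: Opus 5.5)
The approach is to reduce to the case where the evaluated matrix has a visible block of zero columns. Then I pull out one factor of $(X-\sigma)$ from each such column. Write $\delta=\delta_\sigma$ and assume $\delta\ge1$, since the case $\delta=0$ is trivial.

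First, I use the constant matrix $C(\sigma)\in F^{\rho\times\rho}$. Its kernel has dimension $\delta$, so I choose an invertible matrix $P\in F^{\rho\times\rho}$ whose last $\delta$ columns form a basis of $\ker C(\sigma)$. Gaussian elimination over $F$ supplies $P$.

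Next, put $C'(X)=C(X)P\in F[X]^{\rho\times\rho}$. Then $\det C'(X)=\det P\cdot\Delta(X)$, and $\det P\in F^\times$. Evaluating at $\sigma$, the last $\delta$ columns of $C'(\sigma)=C(\sigma)P$ vanish. Evaluation is a ring homomorphism $F[X]\to F$ applied entrywise, and a polynomial vanishes at $\sigma$ exactly when $X-\sigma$ divides it. So every entry in those $\delta$ columns of $C'(X)$ is divisible by $X-\sigma$.

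Multilinearity of the determinant in columns then lets me factor $(X-\sigma)$ out of each of these columns:
\[
 \det C'(X)=(X-\sigma)^{\delta}\det C''(X),
\]
where $C''(X)\in F[X]^{\rho\times\rho}$ has polynomial entries. Hence $(X-\sigma)^\delta\mid\Delta(X)$. Because $\Delta\ne0$, the multiplicity $\operatorname{ord}_{X=\sigma}\Delta$ is finite. Unique factorization in $F[X]$ then gives $\delta\le\operatorname{ord}_{X=\sigma}\Delta$. When $\Delta(\sigma)\ne0$, the same divisibility forces $\delta=0$, which is consistent with order zero.

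No step is a genuine obstacle. The one point needing care is that the column operation uses a constant matrix $P$, so it preserves $\Delta$ up to a unit of $F$. A polynomial change of basis would not guarantee this. This is why the kernel basis is taken over $F$ rather than over $F[X]$.
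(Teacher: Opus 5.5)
Your proof is correct and is essentially the paper's argument. The paper uses a constant invertible $P$ acting on rows, so that the last $\delta_\sigma$ rows of $PC(\sigma)$ vanish, and then factors $X-\sigma$ out of those rows; you do the transposed version with columns, using a kernel basis of $C(\sigma)$, and the multilinearity and unit-determinant steps are identical.
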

\begin{proof}
Row reduction of the constant matrix $C(\sigma)$ gives an
invertible constant matrix $P\in F^{\rho\times\rho}$ whose
last $\delta_\sigma$ rows in $PC(\sigma)$ are zero.
Apply $P$ to $C(X)$. Every entry of each of these rows is
divisible by $X-\sigma$. Factoring this polynomial out of the
$\delta_\sigma$ rows proves that it appears to at least that
power in $\det(PC(X))$. Since $\det P\in F^\times$, the same
divisibility holds for $\Delta$.
\end{proof}

\begin{proposition}[Total rank over finitely many samples]
\label{prop:sample-rank-foundation}
Let $U(X)$ be a rectangular polynomial matrix whose entries have
degree at most $D$, and let $\rho$ be its rank over $F(X)$. For any finite set $S\subset F$,
\[
 \sum_{\sigma\in S}\rank U(\sigma)
       \ge\rho(|S|-D).
\]
\end{proposition}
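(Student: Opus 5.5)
We reduce to the square case handled by Lemma~\ref{lem:multiplicity}. If $\rho=0$ there is nothing to prove, so assume $\rho\ge1$. Since $U(X)$ has rank $\rho$ over $F(X)$, some $\rho\times\rho$ minor is a nonzero polynomial. Fix such a submatrix $C(X)$, obtained by selecting $\rho$ rows and $\rho$ columns, and let $\Delta(X)=\det C(X)\neq0$. Each entry of $C$ has degree at most $D$. Expanding the determinant as a signed sum of $\rho$-fold products of entries therefore gives $\deg\Delta\le\rho D$.

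For each sample $\sigma\in S$, the specialization $C(\sigma)$ is a submatrix of $U(\sigma)$, so $\rank U(\sigma)\ge\rank C(\sigma)=\rho-\delta_\sigma$, with $\delta_\sigma$ as in Lemma~\ref{lem:multiplicity}. That lemma gives $\delta_\sigma\le\operatorname{ord}_{X=\sigma}\Delta$.

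The samples are distinct elements of $F$, so the factors $(X-\sigma)^{\operatorname{ord}_\sigma\Delta}$ are pairwise coprime and all divide $\Delta$. Their product therefore divides $\Delta$, and
\[
 \sum_{\sigma\in S}\delta_\sigma\le\sum_{\sigma\in S}\operatorname{ord}_{X=\sigma}\Delta\le\deg\Delta\le\rho D.
\]
Summing the per-sample bounds gives
\[
 \sum_{\sigma\in S}\rank U(\sigma)\ge\rho|S|-\sum_{\sigma\in S}\delta_\sigma\ge\rho(|S|-D),
\]
as claimed.

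The only point requiring care is the choice of a single nonzero minor for all samples at once. Using one fixed determinant $\Delta$ is what lets the total multiplicity be bounded by one degree $\rho D$, rather than by a separate bound for each sample. The rest of the argument is routine.
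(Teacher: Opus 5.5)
Your proposal is correct and follows the paper's proof essentially step for step. Like the paper, you fix one nonzero $\rho\times\rho$ minor, bound its degree by $\rho D$, apply Lemma~\ref{lem:multiplicity} at each sample, use coprimality of the distinct linear factors to sum the deficits, and conclude from $\rank U(\sigma)\ge\rho-\delta_\sigma$.
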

\begin{proof}
If $\rho=0$, the inequality is immediate. Otherwise choose a
$\rho\times\rho$ submatrix $C$ with nonzero determinant $\Delta$.
Each determinant term has degree at most $\rho D$, so
$\deg\Delta\le\rho D$. The factors $X-\sigma$ for distinct
$\sigma$ are coprime; Lemma~\ref{lem:multiplicity} therefore gives
$\sum_{\sigma\in S}\delta_\sigma\le\deg\Delta\le\rho D$.
Since $\rank U(\sigma)\ge\rank C(\sigma)=\rho-\delta_\sigma$,
summing proves the claim.
\end{proof}

The zero multiplicity can exceed the rank deficit. For example,
$C=\operatorname{diag}(X^h,1)$ loses one rank at zero but its
determinant has multiplicity $h$. The lemma bounds how much
rank can be lost; it does not assert equality. In this part of
the proof multiplicity measures vanishing of a determinant in
$F[X]$. In the recurrence part multiplicity describes local
nilpotent coordinates of $\Omega[Z]/(\chi)$. These are distinct
uses of the same elementary polynomial notion.

\subsection{Polynomial quotient algebras, CRT, and interpolation}\label{sec:crt-foundation}

\subsubsection{The quotient $F[X]/(f)$ for a general polynomial}

Throughout this section $F$ is a field and $f\in F[X]$ is monic of
degree $N\ge1$. Set $A=F[X]/(f)$ and $x=[X]\in A$.
A bracket denotes a residue class. Division by $f$ proves that
\[
 1,x,\ldots,x^{N-1}
\]
is an $F$-basis, whether or not $f$ is irreducible. Thus $\dim_F A=N$.
An $F$-\emph{algebra} is an $F$-vector space with an associative
multiplication and identity compatible with scalar multiplication;
$A$ is a commutative example. A \emph{unit} is an invertible element,
an \emph{idempotent} satisfies $e^2=e$, and a \emph{nilpotent}
satisfies $u^h=0$ for some $h\ge1$.

\begin{proposition}[Units and the field case]\label{prop:quotient-units}
The class $[h]$ is a unit in $A$ if and only if $\gcd(h,f)=1$.
The algebra $A$ is a field if and only if $f$ is irreducible.
\end{proposition}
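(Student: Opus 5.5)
We prove the two assertions of Proposition~\ref{prop:quotient-units} in order, using only the B\'ezout identity and the basis $1,x,\ldots,x^{N-1}$ of $A$.

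\textbf{Units.} Suppose first that $\gcd(h,f)=1$. The Euclidean algorithm gives $u,v\in F[X]$ with $uh+vf=1$. Reducing modulo $f$ yields $[u][h]=[1]$ in $A$, so $[h]$ is a unit.

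Conversely, suppose $[h][g]=[1]$ for some $g$. Then $f\mid gh-1$, so $gh-1=wf$ for some $w\in F[X]$. Every common divisor of $h$ and $f$ then divides $gh-wf=1$, and hence $\gcd(h,f)=1$. The only point needing care is that $\gcd$ is monic and well defined even when $h\equiv0$. In that case $\gcd(0,f)=f$ has degree $N\ge1$, and $[0]$ is correctly classified as a nonunit.

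\textbf{Field case.} Suppose $f$ is irreducible. Take any nonzero class and its unique representative $h$ with $\deg h<N$, so $h\ne0$. A common monic divisor of $h$ and $f$ is either $1$ or an associate of $f$. The latter is impossible, because $\deg h<\deg f$ and $h\ne0$. Hence $\gcd(h,f)=1$, and the first part makes $[h]$ a unit. Since $A$ is commutative with $1\ne0$ (because $N\ge1$), it is a field.

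Conversely, suppose $f=gk$ with $1\le\deg g,\deg k<N$. Then $[g]$ and $[k]$ are nonzero, since their degrees are below $N$ and the residues are represented uniquely. Yet $[g][k]=[f]=0$, so $A$ has zero divisors. A zero divisor cannot be a unit, so $A$ is not a field. Equivalently, $\gcd(g,f)=g\ne1$, so $[g]$ is a nonunit.

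There is no real obstacle. The step that needs the most care is the converse direction, where the uniqueness of reduced representatives is what guarantees that the proper factors give nonzero classes.
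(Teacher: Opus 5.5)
Your proof is correct and follows essentially the same route as the paper: you use a B\'ezout identity for both directions of the unit criterion, coprimality of nonzero reduced representatives when $f$ is irreducible, and zero divisors $[g][k]=0$ from a proper factorization otherwise. Your extra remarks on $h=0$ and on $1\ne0$ in $A$ (since $N\ge1$) are small points that the paper leaves implicit.
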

\begin{proof}
If $uh+vf=1$, then $[u][h]=1$. Conversely, an inverse gives
$uh\equiv1\pmod f$, hence such a B\'ezout identity and coprimality.
If $f$ is irreducible, every nonzero representative of degree less
than $N$ is coprime to $f$ and is therefore invertible. If $f=ab$
with $0<\deg a,\deg b<N$, the nonzero classes $[a]$ and $[b]$
multiply to zero, so $A$ is not a field.
\end{proof}

\subsubsection{Chinese remainder decomposition and coordinate idempotents}

\begin{theorem}[Polynomial Chinese remainder theorem]\label{thm:polynomial-crt}
Suppose $f=f_1\cdots f_s$ with monic, nonconstant, pairwise coprime
polynomials $f_i\in F[X]$. The residue map
\[
 \Phi:F[X]/(f)\longrightarrow\prod_{i=1}^s F[X]/(f_i),\qquad
 [h]\longmapsto
 \begin{pmatrix}
 {[h]_{f_1}}\\
 \vdots\\
 {[h]_{f_s}}
 \end{pmatrix}
\]
is an isomorphism of $F$-algebras. Write $M_i=f/f_i$ and choose
$u_i,v_i$ with $u_iM_i+v_if_i=1$. The classes
\begin{equation}\label{eq:crt-idempotents}
 e_i=[u_iM_i]\in A
\end{equation}
map to the coordinate unit vectors: one in component $i$ and zero
in every other component. They satisfy
\[
 e_i^2=e_i,\qquad e_ie_j=0\ (i\ne j),\qquad \sum_i e_i=1.
\]
The inverse of $\Phi$ is given by
\[
 \Phi^{-1}\!\left(
 \begin{pmatrix}
 {[h_1]_{f_1}}\\
 \vdots\\
 {[h_s]_{f_s}}
 \end{pmatrix}\right)
 =\sum_{i=1}^s e_i[h_i].
\]
\end{theorem}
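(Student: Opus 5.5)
The plan is to construct the idempotents first, check that their images under $\Phi$ are the coordinate unit vectors, and read off every other assertion from that. Well-definedness and the homomorphism property of $\Phi$ are immediate. Each $f_i$ divides $f$, so reduction modulo $f$ followed by reduction modulo $f_i$ agrees with direct reduction modulo $f_i$. Componentwise, $\Phi$ is a residue map of rings, so it is $F$-linear, multiplicative, and sends $1$ to $1$.

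Next comes the existence of $u_i,v_i$. Because the $f_j$ are pairwise coprime, $f_i$ is coprime to $M_i=\prod_{j\ne i}f_j$. To justify this, take Bézout identities $a_jf_i+b_jf_j=1$ for each $j\ne i$ and multiply them together. Every term except $\prod_j b_jf_j$ is a multiple of $f_i$, which gives an identity of the form $(\cdot)f_i+(\cdot)M_i=1$. The Euclidean-algorithm Bézout identity of Appendix~\ref{sec:field-foundation} then supplies $u_i,v_i$ with $u_iM_i+v_if_i=1$ constructively.

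Now compute the components of $e_i=[u_iM_i]$. Modulo $f_i$, the identity gives $u_iM_i\equiv1$. Modulo $f_j$ with $j\ne i$, $M_i$ contains the factor $f_j$, so $u_iM_i\equiv0$. Hence $\Phi(e_i)$ is the $i$th unit vector. Once injectivity is established, the relations $e_i^2=e_i$, $e_ie_j=0$ and $\sum_ie_i=1$ follow, because they hold among unit vectors in the product and $\Phi$ is a homomorphism.

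For bijectivity I would argue directly rather than by dimension count. Given target classes $[h_i]_{f_i}$, the element $\sum_i e_i[h_i]$ maps to the vector whose $i$th component is $1\cdot[h_i]$ and whose other components carry zero contributions. This gives surjectivity and also exhibits the stated inverse formula. For injectivity, suppose $f_i\mid h$ for every $i$. A single pairwise-coprime divisibility step shows that if $a\mid h$, $b\mid h$ and $\gcd(a,b)=1$, then $ab\mid h$: write $h=ah'$ and $\alpha a+\beta b=1$, so $h'=\alpha ah'+\beta bh'$ is divisible by $b$. Inducting on $s$, using that $f_1\cdots f_{k}$ is coprime to $f_{k+1}$ as shown above, gives $f\mid h$. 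Alternatively, both sides have $F$-dimension $\sum\deg f_i=N$, so surjectivity already implies injectivity.

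The main obstacle is modest: producing the coprimality of $f_i$ with the product $M_i$ from pairwise coprimality, which both the idempotents and the injectivity step depend on. Everything else is unit-vector bookkeeping.
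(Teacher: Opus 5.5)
Your proposal is correct and follows essentially the same route as the paper: $\Phi$ is a homomorphism, the unit-vector images of the $e_i$ give surjectivity and the inverse formula, injectivity comes from the coprime-divisibility step, and the idempotent relations are carried back by injectivity. You also justify two things the paper leaves implicit: that $f_i$ is coprime to $M_i$ (so $u_i,v_i$ exist), and the coprimality needed at each stage of the iteration. Conversely, the paper explicitly checks that changing $h_i$ by a multiple of $f_i$ changes $e_i[h_i]$ by a multiple of $f$; in your version this well-definedness follows from injectivity.
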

\begin{proof}
The map preserves sums, products, scalars, and the identity. A
polynomial in its kernel is divisible by all $f_i$, hence by their
product: if $a,b$ are coprime and divide $h$, write $h=ak$;
B\'ezout for $a,b$ shows $b\mid k$, so $ab\mid h$, and iterate.
This proves injectivity. The displayed B\'ezout identities give the
stated residues for $e_i$, so the proposed inverse attains every
component column. Changing $h_i$ by a multiple of $f_i$ changes its contribution
by a multiple of $f$. The idempotent identities hold in each
component, and injectivity carries them back to $A$.
\end{proof}

\paragraph{Notation for the components.}
For any polynomial $h\in F[X]$, write
\[
 [h]_i=h\bmod f_i\in F[X]/(f_i),\qquad 1\le i\le s.
\]
We view $[h]_i$ as an element of the $i$th component of the CRT
product. Its unique polynomial representative of degree less than
$\deg f_i$ is the remainder on dividing $h$ by $f_i$.
Thus the subscript specifies the quotient in which the residue
class is taken, and the CRT map can be written as
\[
 \Phi([h])=\begin{pmatrix}{[h]_1}\\\vdots\\{[h]_s}\end{pmatrix}.
\]

The phrase \emph{coordinate unit basis} refers here to unit
\emph{vectors} of a product, not to invertible elements of its ring.
For $s>1$, $e_i$ has zero components and is not a unit. Indeed an
invertible idempotent must equal one, by multiplying $e_i^2=e_i$
by its inverse. Also the $e_i$ alone are an $F$-basis only when
each component has $F$-dimension one. In general $e_iA$ is the
whole $i$th component and has dimension $\deg f_i$.

Factor $f$ over $F$ as
\begin{equation}\label{eq:irreducible-crt}
 f=\prod_{i=1}^s p_i^{m_i},\qquad
 A\simeq\prod_i F[X]/(p_i^{m_i}),
\end{equation}
where the $p_i$ are distinct monic irreducibles. Different powers
are coprime, even if some $m_i>1$. Each factor has residue field
$F[X]/(p_i)$ and nilpotent ideal generated by $[p_i]$.
It has a nonzero nilpotent when $m_i>1$, since $[p_i]^{m_i}=0$
and $[p_i]\ne0$. Thus $A$ has no nonzero nilpotents exactly when
all $m_i=1$; in that case it is a product of fields.

\subsubsection{Simple roots and Lagrange interpolation}\label{sec:lagrange-interpolation}

Assume first that $f=\prod_{i=1}^N(X-a_i)$ with distinct
$a_i\in F$. Every factor quotient is $F$, by evaluation at $a_i$,
so CRT becomes
\begin{equation}\label{eq:simple-evaluation}
 A\longrightarrow F^N,\qquad [h]\longmapsto
 \begin{pmatrix}h(a_1)\\\vdots\\h(a_N)\end{pmatrix}.
\end{equation}
These evaluations are well defined because $f(a_i)=0$.
The coordinate idempotents have the degree-less-than-$N$
representatives
\begin{equation}\label{eq:lagrange-basis}
 \ell_i(X)=\prod_{j\ne i}\frac{X-a_j}{a_i-a_j}
          =\frac{f(X)}{(X-a_i)f'(a_i)}.
\end{equation}
Every denominator is nonzero. The second equality follows from
$f'(a_i)=\prod_{j\ne i}(a_i-a_j)$, obtained by differentiating the
product and evaluating at $a_i$. In particular,
$\ell_i(a_j)=\delta_{ij}$, so $e_i=[\ell_i]$.

\begin{proposition}[Interpolation as a basis expansion]\label{prop:lagrange}
For prescribed $b_1,\ldots,b_N\in F$, the unique polynomial of
degree less than $N$ taking value $b_i$ at $a_i$ is
\[
 h(X)=\sum_{i=1}^N b_i\ell_i(X).
\]
For any polynomial $h$, the same formula using its values gives
its remainder modulo $f$.
\end{proposition}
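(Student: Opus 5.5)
The plan is to prove the two assertions of Proposition~\ref{prop:lagrange} in order, relying on the Lagrange basis \eqref{eq:lagrange-basis} and the identity $\ell_i(a_j)=\delta_{ij}$ established just above.

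First I verify existence. The polynomial $h=\sum_i b_i\ell_i$ is an $F$-linear combination of the $\ell_i$. Each $\ell_i$ is a product of $N-1$ linear factors divided by a nonzero constant, so it has degree exactly $N-1$. Hence $\deg h<N$. Evaluating at $a_j$ and using $\ell_i(a_j)=\delta_{ij}$ gives $h(a_j)=b_j$.

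Next I prove uniqueness. Suppose $h$ and $\tilde h$ both have degree less than $N$ and take the prescribed values. Their difference has degree less than $N$ and vanishes at the $N$ distinct points $a_i$. By the elementary root bound recalled in Appendix~\ref{sec:frobenius}, this difference is zero. Equivalently, in CRT language, the evaluation map \eqref{eq:simple-evaluation} is an isomorphism by Theorem~\ref{thm:polynomial-crt}. Polynomials of degree less than $N$ are the canonical representatives of $A$, so each value vector has exactly one such representative. That representative is $\Phi^{-1}(b)=\sum_i b_ie_i=\sum_i b_i[\ell_i]$, which matches the displayed formula.

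For the second assertion, let $h$ be arbitrary and write $h=gf+\bar h$ with $\deg\bar h<N$. Since $f(a_i)=0$, we get $\bar h(a_i)=h(a_i)$ for every $i$. So $\bar h$ is a polynomial of degree less than $N$ with the values $h(a_i)$. By the uniqueness just proved, $\bar h=\sum_i h(a_i)\ell_i$.

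I expect no real obstacle here. The only points needing care are the degree bound $\deg\ell_i<N$ and the fact that the denominators $a_i-a_j$ are nonzero, and both follow from the distinctness of the $a_i$.
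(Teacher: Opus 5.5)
Your proof is correct and follows the paper's argument: existence comes from the degree count together with $\ell_i(a_j)=\delta_{ij}$, and uniqueness from the root bound. For the remainder assertion you divide $h$ by $f$ first and then invoke uniqueness, whereas the paper subtracts the interpolant from $h$ and uses divisibility by every $X-a_i$, hence by $f$; the two versions are interchangeable.
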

\begin{proof}
The formula has the required degree and evaluations. The difference
of two such interpolants has $N$ distinct roots and degree less
than $N$, so it is zero by the root bound. Subtracting the formula
from any $h$ gives a polynomial divisible by every $X-a_i$, hence
by $f$, which proves the remainder assertion.
\end{proof}

If $f$ has simple roots only in a splitting field $\Omega/F$, the
same evaluation isomorphism holds for $\Omega[X]/(f)$ over $\Omega$.
It need not hold as a product of copies of $F$ before extending
scalars: an irreducible polynomial of degree greater than one
defines a field over $F$, even though it splits into linear factors
over $\Omega$. A polynomial has only simple roots in a splitting
field if and only if $\gcd(f,f')=1$: at a root $a$, writing
$f=(X-a)^m h$ with $h(a)\ne0$ shows $f'(a)\ne0$ exactly when $m=1$.
Over an arbitrary field this condition is stronger than merely
having no repeated irreducible factor over $F$. Over a finite field
each irreducible polynomial is separable: if its derivative were
zero, all exponents would be multiples of the characteristic $p$;
surjectivity of Frobenius would express it as a $p$th power,
contradicting irreducibility. A nonzero derivative is coprime to an
irreducible polynomial by degree. This verifies the distinction
in the setting of the reduction.

\paragraph{Special case: interpolation on a finite field.}

Apply the simple-root construction to $f=X^q-X$ over
$K=\F_q$, the full-field special case of sampling in the reduction. Every element of $K$ is a root, and
$f'=-1\ne0$, so
\[
 K[X]/(X^q-X)\simeq\prod_{\sigma\in K}K.
\]
It represents polynomial functions on $K$. The later recurrence
algebra uses a different polynomial, which can have repeated roots.

Since $f'(X)=-1$, the Lagrange polynomial associated with a sample
$\sigma\in K$ has the particularly simple form
\begin{equation}\label{eq:finite-field-lagrange}
 \ell_\sigma(X)
 =\frac{f(X)}{(X-\sigma)f'(\sigma)}
 =-\frac{X^q-X}{X-\sigma}
 =1-(X-\sigma)^{q-1}.
\end{equation}
Here the quotient is a polynomial, since $X-\sigma$ divides
$X^q-X$. The last equality follows from
$X^q-X=(X-\sigma)^q-(X-\sigma)$, using $\sigma^q=\sigma$.
Thus $\ell_\sigma(\sigma)=1$, while for $\tau\ne\sigma$ in $K$,
$\ell_\sigma(\tau)=1-(\tau-\sigma)^{q-1}=0$.
In characteristic two, $-1=1$, so $f'(X)=1$ and the same
formula reads $\ell_\sigma(X)=(X^q-X)/(X-\sigma)
=1+(X-\sigma)^{q-1}$.

For prescribed values $b_\sigma\in K$ at all $q$ points
$\sigma\in K$, Proposition~\ref{prop:lagrange} gives the unique
polynomial $F\in K[X]$ of degree less than $q$ taking those values:
\[
 F(X)=\sum_{\sigma\in K}b_\sigma\ell_\sigma(X)
     =\sum_{\sigma\in K}b_\sigma\bigl(1-(X-\sigma)^{q-1}\bigr).
\]
This formula also shows that interpolation is linear in the
prescribed values.

\paragraph{Polynomial representatives and polynomial functions.}
A polynomial of degree $\ge q$ need not equal that canonical interpolant even if it has the same values on $K$. For example, $X^q-X$ vanishes on $K$. At an element $\zeta$ in a proper extension of $K$, it need not vanish. For a general sample set $S$, the analogous vanishing polynomial is
$g_S=\prod_{\sigma\in S}(X-\sigma)$. The main construction therefore
keeps product degrees below $|S|$ before evaluating at $\zeta$;
formula~\eqref{eq:sample-lagrange} is the corresponding general interpolation basis.

\begin{example}[Small interpolation]\label{ex:interpolation}
On $K=\F_4$, the values of
\[
 F(X)=X^2+\omega X+(1+\omega)
\]
at $0,1,\omega,1+\omega$ are, respectively,
$1+\omega,0,1+\omega,0$. Its degree is two, less than four, so these four values determine it uniquely.
\end{example}

\subsubsection{Repeated roots, Taylor coordinates, and a full basis}

Now assume that $f$ splits over $F$ as
\[
 f(X)=\prod_{i=1}^s(X-a_i)^{m_i},\qquad
 a_i\ne a_j\ (i\ne j),\qquad \sum_i m_i=N.
\]
The same discussion applies over a splitting field when necessary.
Use the component notation introduced after
Theorem~\ref{thm:polynomial-crt}, now with $f_i=(X-a_i)^{m_i}$.
Work directly with the local factors
\[
 R_i=F[X]/((X-a_i)^{m_i}),
 \qquad [h]_i=h(X)\bmod (X-a_i)^{m_i}.
\]
Every element of $R_i$ has a unique Taylor representative
\[
 \sum_{k=0}^{m_i-1}c_{i,k}(X-a_i)^k,
 \qquad c_{i,k}\in F.
\]
Indeed, expanding a polynomial in powers of $X-a_i$ and discarding
terms of degree at least $m_i$ gives existence. For uniqueness,
a difference of two such representatives has degree less than
$m_i$ and is divisible by $(X-a_i)^{m_i}$, so it is zero.
Thus $1,[X-a_i]_i,\ldots,[(X-a_i)^{m_i-1}]_i$ is an $F$-basis
of $R_i$. The element $[X-a_i]_i$ has $m_i$th power zero; it is
nonzero when $m_i>1$. When $m_i=1$, each class is determined
just by its constant value at $a_i$, identifying $R_i$ with $F$.

Writing elements of the product as columns, the CRT map is
\begin{equation}\label{eq:local-root-algebras}
 \Phi:\ A\xrightarrow{\ \sim\ }\prod_{i=1}^s R_i,
 \qquad [h(X)]\longmapsto
 \begin{pmatrix}
 {[h]_1}\\
 \vdots\\
 {[h]_s}
 \end{pmatrix}.
\end{equation}
The $i$th entry belongs to $R_i$; addition and multiplication of
these columns are componentwise. In particular, the same
polynomial $h(X)$ is reduced modulo a different power of $X-a_i$
in each entry. The image of the CRT idempotent $e_i\in A$ has
one in its $i$th entry and zero in every other entry.

For a given polynomial $h$, its Taylor coefficients at $a$ are
the \emph{Hasse derivatives}, defined by the polynomial identity
\[
 h(X)=\sum_{k\ge0}h^{[k]}(a)(X-a)^k.
\]
The sum is finite. For $h(X)=\sum_r d_rX^r$, the binomial theorem
gives $h^{[k]}(a)=\sum_{r\ge k}d_r\binom{r}{k}a^{r-k}$.
Consequently the local Taylor coordinates of $[h]$ are
$c_{i,k}=h^{[k]}(a_i)$, and the full CRT image is
\begin{equation}\label{eq:hasse-local}
 \Phi([h])=
 \begin{pmatrix}
 \displaystyle\left[\sum_{k=0}^{m_1-1}h^{[k]}(a_1)(X-a_1)^k\right]_1\\[8pt]
 \vdots\\[4pt]
 \displaystyle\left[\sum_{k=0}^{m_s-1}h^{[k]}(a_s)(X-a_s)^k\right]_s
 \end{pmatrix}.
\end{equation}
Divisibility by $(X-a_i)^{m_i}$ is equivalent to the vanishing of
these first $m_i$ Taylor coefficients. Ordinary derivatives give
$h^{(k)}(a)=k!h^{[k]}(a)$, but $k!$ may vanish in positive
characteristic, so ordinary derivatives alone can lose coordinates.

To define the CRT idempotents as in~\eqref{eq:crt-idempotents}, set
\[
 M_i(X)=\frac{f(X)}{(X-a_i)^{m_i}}
       =\prod_{j\ne i}(X-a_j)^{m_j}.
\]
Since $M_i(a_i)\ne0$, the polynomials $M_i(X)$ and
$(X-a_i)^{m_i}$ are relatively prime. The Euclidean algorithm
therefore gives polynomials $u_i(X),v_i(X)\in F[X]$ such that
\[
 u_i(X)M_i(X)+v_i(X)(X-a_i)^{m_i}=1.
\]
As in the general Chinese remainder theorem, define
\[
 E_i(X)=u_i(X)M_i(X),\qquad e_i=[E_i]\in A.
\]
The B\'ezout identity and the factors of $M_i$ give
\[
 E_i\equiv1\pmod{(X-a_i)^{m_i}},\qquad
 E_i\equiv0\pmod{(X-a_j)^{m_j}}\quad(j\ne i),
\]
so $\Phi(e_i)$ is the $i$th coordinate unit vector.
When all roots are simple, we may choose $u_i(X)=1/M_i(a_i)$;
then $E_i(X)=M_i(X)/M_i(a_i)=\ell_i(X)$ from Section~\ref{sec:lagrange-interpolation}.

We choose $u_i(X)$ with $\deg u_i<m_i$, by reducing it modulo
$(X-a_i)^{m_i}$ and adjusting $v_i(X)$ in the identity above.
Since $\deg M_i=\deg f-m_i$, this gives $\deg E_i<\deg f$.

\begin{proposition}[CRT basis and Taylor expansion]
\label{prop:crt-taylor-basis}
Let $f=\prod_{i=1}^s(X-a_i)^{m_i}$, where the $a_i\in F$ are
distinct and $m_i\ge1$. Put $N=\sum_i m_i$, $A=F[X]/(f)$,
and $x=[X]$, and let $e_i$ be the CRT idempotents defined above.
Define $N$ polynomials of degree less than $N$ by
\begin{equation}\label{eq:confluent-basis}
 b_{i,j}(X)=\operatorname{rem}_{f(X)}\bigl(E_i(X)(X-a_i)^j\bigr)
 \qquad(1\le i\le s,\ 0\le j<m_i).
\end{equation}
Their residue classes satisfy
\[
 [b_{i,j}]=e_i(x-a_i)^j.
\]
The classes $[b_{i,j}]$ form an $F$-basis of $A$, and the
polynomials $b_{i,j}$ form an $F$-basis of the polynomials in
$F[X]$ of degree less than $N$. The following expansion identities
hold in $A=F[X]/(f)$ and in $F[X]$, respectively:
\begin{equation}\label{eq:hermite-expansion}
 \begin{aligned}
 \text{(a)}\quad [h]&=\sum_{i=1}^s\sum_{j=0}^{m_i-1}
 h^{[j]}(a_i)[b_{i,j}]
 &&\bigl(h\in F[X]\bigr),\\[4pt]
 \text{(b)}\quad h&=\sum_{i=1}^s\sum_{j=0}^{m_i-1}
 h^{[j]}(a_i)b_{i,j}
 &&\bigl(h\in F[X],\ \deg h<N\bigr).
 \end{aligned}
\end{equation}
\end{proposition}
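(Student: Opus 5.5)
I will work through the CRT isomorphism $\Phi$ of \eqref{eq:local-root-algebras}, reading everything off componentwise, and then descend from $A$ to polynomials of degree less than $N$ by the uniqueness of remainders.

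\textbf{Step 1: the residue classes.} Since $b_{i,j}$ is the remainder of $E_i(X)(X-a_i)^j$ modulo $f$, its class in $A$ is $[E_i][X-a_i]^j=e_i(x-a_i)^j$. This proves the displayed identity. Applying $\Phi$ and using that $\Phi(e_i)$ is the $i$th coordinate unit vector, I get
\[
 \Phi([b_{i,j}])=\bigl(0,\ldots,0,[(X-a_i)^j]_i,0,\ldots,0\bigr)^\top.
\]

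\textbf{Step 2: expansion (a) and the basis of $A$.} Take any $h\in F[X]$. By \eqref{eq:hasse-local}, its $i$th component is $\sum_{j<m_i}h^{[j]}(a_i)[(X-a_i)^j]_i$. By Step 1, this is exactly the $i$th component of $\Phi$ applied to the right side of (a). Componentwise equality and injectivity of $\Phi$ then give (a).

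Hence the $N$ classes $[b_{i,j}]$ span $A$. Since $\dim_FA=N$, they form a basis. Alternatively, independence follows directly: their images are the Taylor bases of the $R_i$ placed in separate components.

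\textbf{Step 3: the polynomial statements.} Reduction modulo $f$ is an $F$-linear bijection from the polynomials of degree less than $N$ onto $A$, because each class has a unique representative of degree less than $N$. Each $b_{i,j}$ has degree less than $N$ and maps to $[b_{i,j}]$. The basis property therefore transfers from Step 2 to the $b_{i,j}$.

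For (b), take $\deg h<N$. Both sides of (b) have degree less than $N$, and by (a) they have the same class in $A$. Their difference is thus divisible by $f$ and has degree less than $N=\deg f$, so it is zero.

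\textbf{Main obstacle.} The only delicate point is in Step 2. I must make sure that the local component of $\Phi([h])$ really has Taylor coordinates $h^{[j]}(a_i)$ for $j<m_i$. This follows because truncating the finite Hasse expansion at degree $m_i$ changes $h$ only by a multiple of $(X-a_i)^{m_i}$. The identity is already recorded in \eqref{eq:hasse-local}, so the step is routine once I cite it. It is also valid in every characteristic, since only Hasse derivatives, not ordinary derivatives, are used.
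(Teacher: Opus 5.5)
Your proposal is correct and follows essentially the paper's proof. Both arguments read everything off componentwise through $\Phi$, obtain (a) from \eqref{eq:hasse-local} and injectivity, and transfer to polynomials of degree less than $N$ via the unique-remainder isomorphism and the divisibility-by-$f$ argument for (b). The only difference is the order of steps: you derive spanning from (a) and then invoke $\dim_F A=N$, while the paper gets the basis directly from the local Taylor bases placed in separate components, which is the alternative you also note.
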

\begin{proof}
Under the CRT isomorphism, $[b_{i,j}]$ has entry $[(X-a_i)^j]_i$
in component $i$ and zero elsewhere. For each $i$, these local
powers form an $F$-basis of $R_i$. Their coordinate vectors
therefore form an $F$-basis of the product $\prod_i R_i$, so
the classes $[b_{i,j}]$ form an $F$-basis of $A$.
The quotient map restricts to an $F$-linear isomorphism from the
polynomials of degree less than $N$ onto $A$, so the $b_{i,j}$
form a basis of that polynomial space.
By~\eqref{eq:hasse-local}, the Taylor coordinates of $[h]$ in
component $i$ are $h^{[j]}(a_i)$ for $0\le j<m_i$.
Thus both sides of~\eqref{eq:hermite-expansion}\textup{(a)} have
the same CRT image, and injectivity gives their equality.
If $\deg h<N$, both polynomial representatives in
\eqref{eq:hermite-expansion}\textup{(b)} have degree less than
$N$ and represent this same class, so they are equal in $F[X]$.
\end{proof}

\begin{corollary}[Hermite interpolation at nodes with multiplicities]
\label{prop:hermite-interpolation}
Let $a_1,\ldots,a_s\in F$ be distinct, let $m_i\ge1$, and put
$N=\sum_i m_i$. Given arbitrary scalars
\[
 c_{i,k}\in F\qquad(1\le i\le s,\ 0\le k<m_i),
\]
there is a unique polynomial $h\in F[X]$ of degree less than $N$
whose Taylor coefficients satisfy
\[
 h^{[k]}(a_i)=c_{i,k}\qquad(1\le i\le s,\ 0\le k<m_i).
\]
Equivalently, $h$ satisfies the simultaneous congruences
\[
 h(X)\equiv\sum_{k=0}^{m_i-1}c_{i,k}(X-a_i)^k
       \pmod{(X-a_i)^{m_i}}\qquad(1\le i\le s).
\]
The polynomial $h(X)$ is given by the expansion in
Proposition~\ref{prop:crt-taylor-basis},
equation~\eqref{eq:hermite-expansion}\textup{(b)}:
\[
 h(X)=\sum_{i=1}^s\sum_{k=0}^{m_i-1}c_{i,k}b_{i,k}(X).
\]
\end{corollary}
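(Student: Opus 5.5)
The plan is to reduce the corollary directly to Proposition~\ref{prop:crt-taylor-basis}, applied to the polynomial
\[
 f(X)=\prod_{i=1}^s(X-a_i)^{m_i},
\]
which is monic of degree $N$ and has the prescribed distinct nodes with their multiplicities. The argument has three steps: the congruence reformulation, existence via the basis $b_{i,k}$, and uniqueness via the CRT.

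First, the equivalence of the two formulations. By the definition of Hasse derivatives, $h(X)=\sum_{k\ge0}h^{[k]}(a_i)(X-a_i)^k$. Reducing modulo $(X-a_i)^{m_i}$ keeps exactly the terms with $k<m_i$. Taylor representatives of degree less than $m_i$ are unique in $R_i$, as shown before \eqref{eq:local-root-algebras}. Hence the congruence for $i$ holds precisely when $h^{[k]}(a_i)=c_{i,k}$ for every $0\le k<m_i$.

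Second, existence. Define $h=\sum_{i,k}c_{i,k}b_{i,k}$; it has degree less than $N$ because each $b_{i,k}$ does. In $A=F[X]/(f)$ we have $[b_{i,k}]=e_i(x-a_i)^k$. Under $\Phi$ this class is the column with entry $[(X-a_i)^k]_i$ in component $i$ and zero elsewhere. So $\Phi([h])$ has $i$th component $\bigl[\sum_k c_{i,k}(X-a_i)^k\bigr]_i$. The components of $\Phi([h])$ are the residues of $h$ modulo each $(X-a_i)^{m_i}$, so this is exactly the required system of congruences.

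Third, uniqueness. Suppose two solutions of degree less than $N$ are given. Their difference is divisible by every $(X-a_i)^{m_i}$. These factors are pairwise coprime, so the kernel argument in Theorem~\ref{thm:polynomial-crt} shows the difference is divisible by $f$. Its degree is below $N=\deg f$, so it vanishes.

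Alternatively, uniqueness and the formula both follow from \eqref{eq:hermite-expansion}(b): any solution $h$ with $\deg h<N$ equals $\sum_{i,k}h^{[k]}(a_i)b_{i,k}=\sum_{i,k}c_{i,k}b_{i,k}$. The only step needing care is the identification of the Taylor coordinates with the CRT components. That identification is already supplied by \eqref{eq:hasse-local}, so no real obstacle arises.
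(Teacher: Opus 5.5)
Your proof is correct and takes essentially the paper's approach: both rest on Proposition~\ref{prop:crt-taylor-basis} for $f=\prod_i(X-a_i)^{m_i}$, and your closing ``alternative'' paragraph is exactly the paper's two-line argument via expansion (b). Your main three steps unpack the CRT computation behind that proposition, prove uniqueness by the coprime-divisibility argument, and spell out the equivalence with the congruence formulation, which the paper leaves implicit.
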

\begin{proof}
Proposition~\ref{prop:crt-taylor-basis} identifies the coordinates
in the basis $b_{i,k}$ with the Taylor coefficients at the nodes.
Thus the displayed sum has the prescribed coefficients and is
unique among polynomials of degree less than $N$.
\end{proof}

\paragraph{Why Hermite, or confluent, interpolation?}
Lagrange interpolation prescribes one value at each distinct
node. Here the node $a_i$ carries $m_i$ conditions: the value
$h(a_i)=c_{i,0}$ and the next $m_i-1$ Taylor coefficients.
This is called \emph{Hermite interpolation}; the word
\emph{confluent} indicates that several interpolation conditions
are collected at the same node, which is counted with multiplicity
$m_i$. When the relevant factorials are invertible in $F$, these
conditions can equivalently be written in the familiar derivative
form $h^{(k)}(a_i)=k!c_{i,k}$. The Hasse-derivative formulation
states the problem in every characteristic, including when some
factorials vanish. If all $m_i=1$, only values are prescribed,
and the result is Lagrange interpolation.

For example, take distinct $a,b\in F$ with multiplicities two
and one. Prescribing
\[
 h(a)=\alpha,\qquad h^{[1]}(a)=\beta,\qquad h(b)=\gamma
\]
determines the degree-less-than-three polynomial
\[
 h(X)=\alpha+\beta(X-a)
 +\frac{\gamma-\alpha-\beta(b-a)}{(b-a)^2}(X-a)^2.
\]
Its first two Taylor coefficients at $a$ are $\alpha,\beta$;
substitution of $X=b$ gives $\gamma$. This is the simplest case
combining a repeated node with a second node.

In particular $E_i(a_j)=\delta_{ij}$ even with repeated roots,
but these values alone do not determine $e_i$ in $A$. The full
congruences above specify all its local Taylor coordinates.

A local element $u=[c_0+(X-a_i)v(X)]_i$ is invertible exactly when
$c_0\ne0$. For nonzero $c_0$, put
$w=c_0^{-1}[(X-a_i)v(X)]_i$; then $w^{m_i}=0$, and the finite
geometric identity gives
\[
 u^{-1}=c_0^{-1}\sum_{j=0}^{m_i-1}(-w)^j.
\]
When $c_0=0$, the element is nilpotent and cannot be invertible.
Thus $[h]\in A$ is a unit exactly when $h(a_i)\ne0$ for every
root, consistent with Proposition~\ref{prop:quotient-units}.

\begin{lemma}[Idempotents are sums of CRT coordinate idempotents]
\label{lem:crt-idempotent-classification}
Let $f=\prod_{i=1}^s(X-a_i)^{m_i}$ with distinct $a_i\in F$ and
$m_i\ge1$, and let $e_i\in A=F[X]/(f)$ be the CRT idempotent
whose image has one in component $i$ and zero elsewhere.
Every idempotent $e\in A$ has a unique expression
\[
 e=\sum_{i\in S}e_i\qquad\text{for a subset }S\subseteq\{1,\ldots,s\}.
\]
Conversely every such subset sum is idempotent. Thus $A$ has
exactly $2^s$ idempotents, and each prescribed set of zero--one
values at the distinct roots has a unique idempotent lift in $A$.
The empty subset gives zero, and the full subset gives one.
\end{lemma}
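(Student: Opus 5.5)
The plan is to transport the problem through the CRT isomorphism $\Phi$ of \eqref{eq:local-root-algebras} and classify idempotents one local factor at a time. Since $\Phi$ is an $F$-algebra isomorphism, $e\in A$ is idempotent exactly when each component $[e]_i\in R_i$ is idempotent, because multiplication in $\prod_i R_i$ is componentwise. The classification therefore reduces to showing that each local ring $R_i=F[X]/((X-a_i)^{m_i})$ has only the idempotents $0$ and $1$.

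For the local step, I would write an element of $R_i$ in Taylor coordinates as $u=c_0+w$, where $w=[(X-a_i)v(X)]_i$ is nilpotent. Suppose $u^2=u$. Evaluating at $a_i$, which is well defined on $R_i$ because $(X-a_i)^{m_i}$ vanishes there, gives $c_0^2=c_0$ in $F$, so $c_0\in\{0,1\}$. Two cases follow.
\begin{itemize}
\item If $c_0=0$, then $u$ is nilpotent, and $u=u^2=\cdots=u^{m_i}=0$.
\item If $c_0=1$, then $u$ is a unit by the geometric-series inverse displayed just before the lemma. Cancelling $u$ in $u^2=u$ gives $u=1$.
\end{itemize}
A uniform alternative is to apply the same argument to $1-u$, which is also idempotent. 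Either way the local idempotents are exactly $0$ and $1$.

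To assemble the result, an idempotent $e$ has $\Phi(e)$ equal to a zero--one column. Let $S$ be the set of indices $i$ with $[e]_i=1$. Then $\Phi(e)=\sum_{i\in S}\Phi(e_i)$, and injectivity of $\Phi$ gives $e=\sum_{i\in S}e_i$. Uniqueness holds because distinct subsets give distinct images under $\Phi$. For the converse, any subset sum is idempotent by expanding the square with $e_i^2=e_i$ and $e_ie_j=0$ for $i\ne j$ from Theorem~\ref{thm:polynomial-crt}. This yields exactly $2^s$ idempotents, with the empty subset giving $0$ and the full subset giving $\sum_i e_i=1$.

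For the final assertion, the value of $e$ at $a_i$ is the constant Taylor coordinate of $[e]_i$, which is $1$ if $i\in S$ and $0$ otherwise. A prescribed zero--one pattern of values at the distinct roots therefore determines $S$, and hence a unique idempotent lift. The only step needing care is the local one, specifically the fact that evaluation at $a_i$ is a ring homomorphism on $R_i$; everything else is bookkeeping through $\Phi$.
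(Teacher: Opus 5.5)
Your proof is correct and follows essentially the same route as the paper: you classify the local idempotents of $R_i$ through the constant Taylor coefficient, then assemble the result through the CRT isomorphism, and you read off the root values from the constant coordinates. The only cosmetic difference is in the case $c_0=0$, where you kill $u$ by nilpotency ($u=u^{m_i}=0$) while the paper inverts $u-1$; both arguments are valid.
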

\begin{proof}
First consider an idempotent $u\in R_i=F[X]/((X-a_i)^{m_i})$.
Its constant Taylor coefficient $c=u(a_i)$ satisfies $c^2=c$
in the field $F$, so $c$ is zero or one. If $c=0$, then $u-1$
has nonzero constant coefficient and is a unit by the criterion
above; multiplying $u(u-1)=0$ by its inverse gives $u=0$.
If $c=1$, then $u$ is a unit, and the same identity gives $u=1$.
Hence the local idempotents are precisely zero and one.

The CRT isomorphism preserves products, so an element of $A$
is idempotent exactly when each component is idempotent. Its
image therefore has one precisely on a subset $S$ of the
components, and this is the image of $\sum_{i\in S}e_i$.
Injectivity of CRT gives the asserted equality and uniqueness.
Conversely the image of any such sum has only zero--one
components, so its square equals itself. These distinct columns
also prove the count and the assertion about root values.
\end{proof}

\subsubsection{The nilradical and root values}

The \emph{nilradical} $\mathcal N$ of $A=F[X]/(f)$ is the ideal
of its nilpotent elements. For the factorization
$f=\prod_i p_i^{m_i}$ into distinct monic irreducibles, put
$f_{\rm rad}=\prod_i p_i$. Then
\begin{equation}\label{eq:nilradical-quotient}
 \mathcal N=(f_{\rm rad})=f_{\rm rad}A,\qquad
 A/\mathcal N\simeq F[X]/(f_{\rm rad}).
\end{equation}
Here $f_{\rm rad}$ also denotes its residue class in $A$, and
$(f_{\rm rad})$ is the principal ideal in $A$. Indeed, $[h]$ is
nilpotent if and only if $f\mid h^k$ for some $k$, equivalently
$p_i\mid h$ for every $i$. The reverse implication follows by
choosing $k\ge\max_i m_i$. This also proves that these classes
form the stated ideal.

When $f=\prod_i(X-a_i)^{m_i}$ splits over $F$, evaluation gives
\[
 A\longrightarrow F^s,\qquad [h]\longmapsto
 \begin{pmatrix}h(a_1)\\\vdots\\h(a_s)\end{pmatrix}.
\]
The images of the $e_i$ show that this map is onto, and its
kernel is $\mathcal N$. Thus the \emph{reduced algebra}
$A/\mathcal N\simeq F^s$ records precisely the root values.
The kernel has basis $[b_{i,j}]$ with $j\ge1$, so
$\dim_F\mathcal N=\deg f-s$.
The elements $a_ie_i$ take value $a_i$ at $a_i$ and zero at the
other roots. With $x=[X]$, write
\begin{equation}\label{eq:root-nilpotent-decomposition}
 x_{\rm red}=\sum_i a_ie_i,\qquad \nu=x-x_{\rm red}.
\end{equation}
In component $i$, $\nu=[X-a_i]_i$, so
$\nu^{\max_i m_i}=0$. Root values therefore discard this
nilpotent part of $x$.

\begin{example}[One double root and one simple root]\label{ex:repeated-crt}
In $A=\F_2[X]/(X^2(X+1))$, put $x=[X]$. The CRT idempotents are
$e_0=1+x^2$ and $e_1=x^2$, as reduction modulo $X^2$ and $X+1$
verifies. Here
\[
 \nu=x+x^2\ne0,\qquad \nu^2=0,\qquad
 \mathcal N=\F_2\nu,\qquad x_{\rm red}=e_1.
\]
Indeed $x^3=x^2$, hence $x^4=x^2$ and $\nu^2=0$.
The representative $X+X^2$ has degree less than three and is
nonzero, although it vanishes at both roots.
\end{example}

\subsection{Recurrences, Hankel matrices, and algebraic moments}
\label{sec:recurrence-foundation}

\subsubsection{Recurrences and the shift operator}

Let $F$ be a field and let
\[
 \mu=(\mu_0,\mu_1,\ldots,\mu_N)^\top\in F^{N+1},\qquad N\ge0,
\]
be a finite sequence. Here $N$ denotes the last index, so the
sequence has $N+1$ terms. For $1\le \ell\le N+1$, the sequence
satisfies a \emph{monic recurrence of order $\ell$} if there exist
coefficients $\gamma_0,\ldots,\gamma_{\ell-1}\in F$ such that
\begin{equation}\label{eq:general-recurrence}
 \mu_{k+\ell}=\sum_{b=0}^{\ell-1}\gamma_b\mu_{k+b}
 \qquad(0\le k\le N-\ell).
\end{equation}
The first $\ell$ values determine all later supplied values by
induction. When $\ell=N+1$, the range $0\le k\le N-\ell=-1$ is empty,
so~\eqref{eq:general-recurrence} holds vacuously.
The recurrence polynomial is
\[
 \chi(Z)=Z^\ell-\sum_{b=0}^{\ell-1}\gamma_bZ^b.
\]
Write $(S\mu)_k=\mu_{k+1}$ for the shift on the available
indices. Equation~\eqref{eq:general-recurrence} then says
$\chi(S)\mu=0$ for $0\le k\le N-\ell$. Multiplying $\chi$ by
another polynomial gives an annihilating relation on the indices
where all its terms are available. We assign order zero only
to the zero sequence.

For an infinite sequence $(\mu_k)_{k\ge0}$, the same definitions
apply, with the recurrence required for every $k\ge0$.

The order in the definition need not be the smallest possible.
The \emph{least recurrence order} is the smallest $\ell$ for which
such coefficients exist, with order zero reserved for the zero
sequence. For finite data, this minimum depends on all the
supplied terms, and can increase when more terms are appended.

\paragraph{Example: $00100001$.}
Work over $\F_2$ and interpret this string as
\[
 \mu=(\mu_0,\ldots,\mu_7)^\top=(0,0,1,0,0,0,0,1)^\top.
\]
No recurrence of order $1\le \ell\le4$ is possible: at $k=7-\ell$,
equation~\eqref{eq:general-recurrence} would express $\mu_7=1$
as a linear combination of the preceding $\ell$ terms, all of
which are zero. Order zero is also excluded, since the sequence
is nonzero.
On the other hand,
\[
 \mu_{k+5}=\mu_k\qquad(0\le k\le2),
\]
so the least order is $5$, with recurrence polynomial $Z^5-1$.
The seven-term prefix $0010000$ has least order $3$: it satisfies
$\mu_{k+3}=0$ for $0\le k\le3$, while an order $\ell=1$ or $2$
would express $\mu_2=1$ using preceding zeros at $k=2-\ell$.
Appending the final $1$ therefore raises the least order from
$3$ to $5$.

\paragraph{Example: $00100000$.}
Now take
\[
 \mu=(\mu_0,\ldots,\mu_7)^\top=(0,0,1,0,0,0,0,0)^\top.
\]
The same first-nonzero-term argument rules out orders below $3$,
and
\[
 \mu_{k+3}=0\qquad(0\le k\le4)
\]
gives order $3$, with recurrence polynomial $Z^3$.
Thus appending $0$ to the same seven-term prefix preserves its
least order $3$, whereas appending $1$ raises it to $5$.
Continuing with zeros indefinitely also preserves least order
$3$. These two examples illustrate why a recurrence fitted to
a finite prefix need not remain valid when another term is
supplied.

\paragraph{Example: $00000001$.}
Take
\[
 \mu=(\mu_0,\ldots,\mu_7)^\top=(0,0,0,0,0,0,0,1)^\top.
\]
For every $1\le \ell\le7$, the recurrence equation at $k=7-\ell$
would express $\mu_7=1$ as a linear combination of preceding
zeros, which is impossible. The sequence is nonzero, so order
zero is also excluded. Order $8$ is allowed: since $N=7$, there
is no index in the range $0\le k\le N-8$, and hence no recurrence
identity to check within these eight terms. Its least order is
therefore $8$. For example, choosing all recurrence coefficients
zero gives polynomial $Z^8$ and continues the sequence with zeros.
The seven-term prefix $0000000$ has order zero, so appending a
single $1$ raises the least order from $0$ to $8$.

\begin{example}[Powers in a finite field extension]\label{ex:extension-power-recurrence}
Let $E/F$ be a field extension of finite degree $\ell\ge1$, and let
$\eta\in E$. Then the sequence $(\eta^k)_{k\ge0}$ satisfies a
recurrence of order at most $\ell$ with coefficients in $F$.
More precisely, if the minimal polynomial of $\eta$ over $F$ has
degree $d\le \ell$, there exist $\gamma_0,\ldots,\gamma_{d-1}\in F$
such that
\[
 \eta^{k+d}=\sum_{b=0}^{d-1}\gamma_b\eta^{k+b}
 \qquad(k\ge0).
\]

Indeed, since $\dim_F E=\ell$, the $\ell+1$ elements
$1,\eta,\ldots,\eta^\ell$ are linearly dependent over $F$.
Thus the minimal polynomial $m_\eta(Z)\in F[Z]$ has degree
$d\le \ell$. Write
\[
 m_\eta(Z)=Z^d-\sum_{b=0}^{d-1}\gamma_bZ^b,
 \qquad\gamma_b\in F.
\]
Evaluating at $\eta$ and multiplying by $\eta^k$ gives the
recurrence of order $d\le \ell$ for every $k\ge0$.
\end{example}

\subsubsection{Hankel rank and the finite window}

For any integers $L,B\ge1$, put $N=L+B-2$. For any sequence
$(\eta_i)_{i=0}^N\in F^{N+1}$, define its rectangular Hankel matrix
\[
 H_{L,B}(\eta)=(\eta_{a+b})_{0\le a<L,\ 0\le b<B}.
\]
The entries are constant on each antidiagonal, and their indices
range from zero through $N=L+B-2$.

\begin{samepage}
\begin{lemma}[Recurrences and Hankel rank]
\label{lem:finite-hankel}
Let $N=L+B-2$ and let $H_{L,B}$ be formed from
$(\eta_i)_{i=0}^N$ as above.
\begin{enumerate}[label=(\alph*),leftmargin=*]
\item Suppose the sequence satisfies a recurrence of order $\ell$. Then
\[
 \rank H_{L,B}\le \ell.
\]
As before, order zero means that the sequence is zero.

\item Suppose $\rank H_{L,B}=\rho<B$. Let column $\ell$ be the
first column dependent on its predecessors. Then $0\le \ell\le\rho$.
If $\ell=0$, $\eta_a=0$ for $0\le a<L$. If $\ell\ge1$, then there
are unique $\gamma_0,\ldots,\gamma_{\ell-1}\in F$ satisfying
\[
 \eta_{a+\ell}=\sum_{b<\ell}\gamma_b\eta_{a+b}\qquad(0\le a<L).
\]
\end{enumerate}
\end{lemma}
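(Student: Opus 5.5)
For part (a), the plan is to show that every column of $H_{L,B}$ lies in the span of its first $\ell$ columns. If $\ell=0$ the sequence is zero, so the matrix is zero and the rank is zero. If $\ell\ge B$ the bound $\rank H_{L,B}\le B\le\ell$ is immediate. So assume $1\le\ell<B$, and write $c_b=(\eta_{a+b})_{0\le a<L}$ for column $b$. For $\ell\le b<B$ and $0\le a<L$, the index $k=a+b-\ell$ satisfies
\[
 0\le k\le L-1+B-1-\ell=N-\ell.
\]
Hence the recurrence \eqref{eq:general-recurrence} applies at $k$ and gives
\[
 c_b=\sum_{b'<\ell}\gamma_{b'}c_{b-\ell+b'}.
\]
Induction on $b$ then places every column in the span of $c_0,\ldots,c_{\ell-1}$, so the rank is at most $\ell$. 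The only point needing care is that no index beyond $N$ is used, and this is exactly the displayed range check.

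For part (b), the plan is to choose $\ell$ as the first column that depends on its predecessors. Such a column exists because $\rank H_{L,B}=\rho<B$: if no column depended on its predecessors, all $B$ columns would be independent. The columns $c_0,\ldots,c_{\ell-1}$ are independent by minimality, so $\ell\le\rho$.

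If $\ell=0$, dependence on the empty set means $c_0=0$. That is, $\eta_a=0$ for $0\le a<L$.

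If $\ell\ge1$, the dependence gives scalars with $c_\ell=\sum_{b<\ell}\gamma_bc_b$. Read entrywise, this is exactly the displayed identity for $0\le a<L$. Uniqueness of the $\gamma_b$ follows from independence of $c_0,\ldots,c_{\ell-1}$.

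Neither step is a real obstacle; the arguments are bookkeeping. The step most worth checking is the index range in (a), because the finite window must never call for $\eta_{N+1}$. It is also worth noting for the later use in Lemma~\ref{lem:block-column-atoms} that (b) only asserts the recurrence through index $L+\ell-1$, and makes no claim beyond that.
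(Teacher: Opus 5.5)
Your proposal is correct and follows the paper's proof essentially step for step. In (a) you use the same case split and the same range check $a+b-\ell\le N-\ell$ (the paper indexes the columns as $C_{b+\ell}$ instead). In (b), the first dependent column, the independence of its predecessors, and the entrywise reading of the dependence give $\ell\le\rho$, the $\ell=0$ case, and uniqueness, exactly as in the paper.
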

\end{samepage}
\begin{proof}
\textup{(a)} If $\ell=0$, the matrix is zero. If $\ell\ge B$, its rank
is at most $B\le \ell$. Otherwise let $C_b$ denote column $b$.
The recurrence applied in every row gives
\[
 C_{b+\ell}=\sum_{j=0}^{\ell-1}\gamma_j C_{b+j}
 \qquad(0\le b<B-\ell).
\]
In row $a$, the shift index is $a+b\le L+B-\ell-2=N-\ell$, so
every use of the recurrence lies within~\eqref{eq:general-recurrence}.
Induction on the column index shows that all columns lie in the
span of the first $\ell$ columns, proving the rank bound.

\textup{(b)} The first $\rho+1$ columns are dependent, so the first dependence
occurs by $\rho$. Its preceding columns are independent, which
gives uniqueness of the coefficients. Equality of the $L$ row
entries gives exactly the displayed recurrence window. If $\ell=0$,
the first column is zero, giving $\eta_a=0$ for $0\le a<L$.
\end{proof}

\begin{example}[Two simple Hankel matrices]
\label{ex:last-nonzero-hankel}
Take $L\ge1$, $B\ge2$, and $N=L+B-2$, with
\[
 (\eta_0,\ldots,\eta_N)^\top=(0,\ldots,0,1)^\top.
\]
Only the bottom-right entry of $H_{L,B}$ is one, so its rank
is one. Its first column is zero, however, and
Lemma~\ref{lem:finite-hankel}\textup{(b)} selects $\ell=0$.
Thus $\eta_a=0$ for $0\le a<L$, as asserted, even though
$\eta_N=1$. For $L=B=3$,
this is the sequence $(0,0,0,0,1)^\top$ with
\[
 H_{3,3}=\begin{pmatrix}0&0&0\\0&0&0\\0&0&1\end{pmatrix}.
\]
Deleting the last column gives $H_{L,B-1}=0$, formed from the
prefix $(\eta_0,\ldots,\eta_{N-1})$.

For another simple example, keep $L=B=3$ and take the sequence
$00100$, that is,
\[
 (\eta_0,\ldots,\eta_4)^\top=(0,0,1,0,0)^\top.
\]
Then
\[
 H_{3,3}=\begin{pmatrix}0&0&1\\0&1&0\\1&0&0\end{pmatrix}.
\]
Its determinant is $-1\ne0$, so its rank is $3$.
The least recurrence order of this sequence is also $3$:
$\eta_{k+3}=0$ holds for $k=0,1$, whereas any recurrence of
order $\ell=1$ or $2$, applied at $k=2-\ell$, would express
$\eta_2=1$ using preceding zeros. Order zero is excluded because
the sequence is nonzero.
\end{example}

A recurrence obtained from a finite Hankel matrix need not hold
beyond its stated window.
For an infinite recurrent sequence, however, its \emph{infinite
Hankel rank}, defined as $\sup_{L,B}\rank H_{L,B}$, equals its
least recurrence order. To prove the lower bound, the first $\ell$
infinite columns must be independent when $\ell$ is least; otherwise
their first dependence would give a shorter recurrence valid at
all rows. Independence of finitely many infinite columns is
witnessed by finitely many rows. Indeed the kernels of their
restrictions to the first $L$ rows form a decreasing sequence of
subspaces of $F^\ell$; after finitely many strict dimension drops
they stabilize, and their intersection is zero. Some finite
restriction therefore has rank $\ell$. The upper bound follows
by applying Lemma~\ref{lem:finite-hankel}\textup{(a)} to every
finite Hankel matrix.

\subsubsection{Recurrence sequences, root expansions, and Frobenius}
\label{sec:recurrence-root-expansions}

Our goal is to prove the general root expansion in
Lemma~\ref{lem:general-root-expansion} and its Frobenius refinement
in Lemma~\ref{lem:atoms}. Both apply to any chosen monic recurrence
polynomial $\chi$, including the minimal recurrence polynomial.

Let $(\eta_k)_{k\ge0}$ be any infinite sequence in $F$, and let
$\Omega/F$ be any extension field. Define an $\Omega$-linear map
$G:\Omega[Z]\longrightarrow\Omega$ by
\[
 G\left(\sum_{k=0}^d c_kZ^k\right)=\sum_{k=0}^d c_k\eta_k
 \qquad(c_k\in\Omega).
\]
This is the unique $\Omega$-linear map satisfying $G(Z^k)=\eta_k$
for every $k\ge0$, because the monomials form a basis of $\Omega[Z]$.
The scalars $\eta_k=G(Z^k)$ are called the \emph{algebraic moments}
of $G$ with respect to the monomials $1,Z,Z^2,\ldots$; in
particular, the zeroth moment is $\eta_0=G(1)$. No measure or
positivity is required for this terminology.

Now suppose that the sequence satisfies a monic recurrence of
order $\ell\ge0$, with recurrence polynomial
\[
 \chi(Z)=Z^\ell-\sum_{b=0}^{\ell-1}\gamma_bZ^b\in F[Z],
 \qquad
 \eta_{k+\ell}=\sum_{b=0}^{\ell-1}\gamma_b\eta_{k+b}\quad(k\ge0).
\]
For $\ell=0$, this means $\chi=1$ and the sequence is zero; the
sums are empty. No minimality assumption on $\chi$ is made.
For every $a\ge0$, the recurrence gives
\[
 G(Z^a\chi)=\eta_{a+\ell}-\sum_{b=0}^{\ell-1}\gamma_b\eta_{a+b}=0.
\]
By linearity, $G$ vanishes on the ideal $(\chi)$ and therefore descends to an
$\Omega$-linear functional
\begin{equation}\label{eq:moment-functional}
 g:\mathcal A=\Omega[Z]/(\chi)\longrightarrow\Omega,
 \qquad g(z^k)=\eta_k,\quad z=[Z].
\end{equation}
Thus the same scalars $\eta_k$ are the algebraic moments of $g$
with respect to the powers $1,z,z^2,\ldots$.
Conversely, any such linear functional defines a sequence in
$\Omega$ obeying
the recurrence, because $z^\ell=\sum_{b<\ell}\gamma_bz^b$ in $\mathcal A$.
Usually $g$ is not a ring homomorphism, and $g(1)$ need not equal one.

The Hankel entries are values of the bilinear pairing
$B_g(u,v)=g(uv)$: specifically $\eta_{a+b}=B_g(z^a,z^b)$.
This connects sequence shifts with multiplication in a
finite-dimensional algebra.

\begin{lemma}[Root expansion for an arbitrary recurrence polynomial]
\label{lem:general-root-expansion}
Let $(\eta_k)_{k\ge0}$ be a sequence in $F$ satisfying a monic
recurrence with polynomial $\chi\in F[Z]$ of degree $\ell\ge0$.
Assume $\Omega/F$ contains all roots of $\chi$, and write
\[
 \chi(Z)=\prod_{i=1}^s(Z-\beta_i)^{m_i},
 \qquad \beta_i\ne\beta_{i'}\text{ for }i\ne i'.
\]
Then there are coefficients $c_{i,j}\in\Omega$, for
$1\le i\le s$ and $0\le j<m_i$, such that
\begin{equation}\label{eq:generalized-moments}
 \eta_k=\sum_{i=1}^s\sum_{j=0}^{\min(k,m_i-1)}
      c_{i,j}\binom{k}{j}\beta_i^{k-j}\qquad(k\ge0).
\end{equation}
All integer binomial coefficients are interpreted in $\Omega$.
We use $0^0=1$, and the displayed range avoids negative exponents
when $\beta_i=0$. For $\chi=1$, the product and sum are empty.
In terms of the quotient functional $g$ in~\eqref{eq:moment-functional}
and the CRT idempotents $e_i$, the coefficients are
\[
 c_{i,j}=g\bigl(e_i(z-\beta_i)^j\bigr).
\]
\end{lemma}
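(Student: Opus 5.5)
The plan is to compute $\eta_k=g(z^k)$ inside the quotient algebra $\mathcal A=\Omega[Z]/(\chi)$ after splitting $z^k$ along the Chinese remainder decomposition. The case $\chi=1$ is immediate: then $\ell=0$, the sequence is zero, and the right side of \eqref{eq:generalized-moments} is an empty sum. So assume $\ell\ge1$.

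Since $\chi$ splits over $\Omega$ with distinct roots $\beta_i$, Theorem~\ref{thm:polynomial-crt} applies to the pairwise coprime factors $(Z-\beta_i)^{m_i}$. It supplies CRT idempotents $e_i\in\mathcal A$ with $\sum_ie_i=1$, $e_i^2=e_i$ and $e_ie_{i'}=0$, and $e_i$ maps to the unit vector of the local factor $R_i=\Omega[Z]/((Z-\beta_i)^{m_i})$. Multiplying $z^k$ by $1=\sum_ie_i$ gives $z^k=\sum_ie_iz^k$. In the $i$th component, write $z=\beta_i+(z-\beta_i)$. The local element $(z-\beta_i)$ satisfies $e_i(z-\beta_i)^{m_i}=0$, because its image is $[(Z-\beta_i)^{m_i}]_i=0$ in component $i$ and zero elsewhere. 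The binomial theorem, valid in the commutative algebra $\mathcal A$ with integer binomials mapped into $\Omega$, then gives
\[
 e_iz^k=\sum_{j=0}^{\min(k,m_i-1)}\binom kj\beta_i^{k-j}\,e_i(z-\beta_i)^j .
\]
Terms with $j\ge m_i$ vanish, and for $j>k$ the binomial is zero. Using the truncated range keeps every exponent $k-j$ nonnegative, so the convention $0^0=1$ handles $\beta_i=0$. Equivalently, this is the Taylor expansion of $Z^k$ at $\beta_i$ combined with Proposition~\ref{prop:crt-taylor-basis}(a), since $(Z^k)^{[j]}(\beta_i)=\binom kj\beta_i^{k-j}$ and $[b_{i,j}]=e_i(z-\beta_i)^j$. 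Either route gives the same identity.

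Next apply the $\Omega$-linear functional $g$ from \eqref{eq:moment-functional}. This yields $\eta_k=g(z^k)=\sum_i\sum_j\binom kj\beta_i^{k-j}g\bigl(e_i(z-\beta_i)^j\bigr)$, which is \eqref{eq:generalized-moments} with $c_{i,j}=g(e_i(z-\beta_i)^j)$. These coefficients do not depend on $k$, so the expansion holds for every $k\ge0$.

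The only delicate point is to work in $\Omega$ rather than $F$. The map $G$ is defined on $\Omega[Z]$, and it descends to $\mathcal A$ because $\chi\in F[Z]\subseteq\Omega[Z]$ and $G$ kills the ideal $(\chi)$ by the recurrence, as already shown before the statement. One also has to check that the binomial expansion is valid in positive characteristic. It is, because it is an identity with integer coefficients in a commutative ring. No separability or minimality of $\chi$ is needed, since repeated roots are absorbed by the nilpotent terms $e_i(z-\beta_i)^j$ with $j\ge1$.
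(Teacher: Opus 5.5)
Your proof is correct and is essentially the paper's argument. The paper likewise writes $z^k=\sum_i\sum_{j\le\min(k,m_i-1)}\binom kj\beta_i^{k-j}e_i(z-\beta_i)^j$ in $\mathcal A=\Omega[Z]/(\chi)$, citing Proposition~\ref{prop:crt-taylor-basis}(a) in place of your direct binomial expansion in each CRT component, which you correctly note is equivalent. It then applies $g$, with the same remarks as yours on $\chi=1$, $0^0=1$, and positive characteristic.
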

\begin{proof}
If $\ell=0$, the sequence is zero and the assertion is immediate.
For $\ell\ge1$, Proposition~\ref{prop:crt-taylor-basis}, with variable
$Z$, gives the Taylor expansion in $\mathcal A$:
\[
 z^k=\sum_{i=1}^s\sum_{j=0}^{\min(k,m_i-1)}
 \binom{k}{j}\beta_i^{k-j}e_i(z-\beta_i)^j.
\]
Applying $g$ and using $g(z^k)=\eta_k$ proves
\eqref{eq:generalized-moments}. This argument works in every
characteristic, since it uses polynomial Taylor coefficients
rather than division by factorials.
\end{proof}

When every $m_i=1$, put $c_i=g(e_i)$. The CRT decomposition gives
\[
 g([p])=\sum_{i=1}^s c_i p(\beta_i)\qquad(p\in\Omega[Z]),
 \qquad
 \eta_k=\sum_{i=1}^s c_i\beta_i^k.
\]
Thus the functional is a linear combination of evaluations at
the distinct roots, and the sequence is a linear combination
of the \emph{power sequences} $(\beta_i^k)_{k\ge0}$.
When $g$ annihilates the nilradical, these same formulas hold
even if the recurrence polynomial has repeated roots.

For a representation $\eta_k=\sum_{i=1}^s c_i\beta_i^k$ with
distinct $\beta_i$ and all $c_i\ne0$, the first $s\times s$
Hankel matrix factors as
\[
 H_{s,s}=V\operatorname{diag}(c_1,\ldots,c_s)V^\top,
 \qquad V_{a,i}=\beta_i^a\quad(0\le a<s).
\]
The matrix $V$ is invertible: a polynomial of degree less than $s$
vanishing at all $\beta_i$ is zero, so evaluation has trivial kernel.
Thus $H_{s,s}$ has rank $s$. The polynomial
$\prod_i(Z-\beta_i)$ annihilates each power sequence, so it gives a
recurrence of order $s$. The rank bound then proves that this
order is minimal.

The next lemma shows that finitely many Frobenius identities
eliminate the nilpotent contributions and force the remaining
coefficients to lie in the prime field. It applies in arbitrary
prime characteristic.

\begin{lemma}[Recurrence sequences as sums of powers]
\label{lem:atoms}
Let $F$ be a field of characteristic $p>0$, and let
$(\mu_k)_{k\ge0}$ be a sequence in $F$. Let $\chi\in F[Z]$ be
any monic recurrence polynomial for this sequence, of degree $\ell\ge0$.
The polynomial $\chi$ need not be minimal. Suppose
\begin{equation}\label{eq:foundation-frobenius-window}
 \mu_{pb}=\mu_b^p\qquad(0\le b<\ell).
\end{equation}
In any extension field $\Omega/F$ containing all roots of $\chi$ there
is a set $\mathcal E$ of distinct roots, with $|\mathcal E|\le \ell$, and nonzero
coefficients $c_\beta\in\F_p$, the prime subfield of $\Omega$, such that
\begin{equation}\label{eq:foundation-atoms}
 \mu_k=\sum_{\beta\in\mathcal E}c_\beta\beta^k\qquad(k\ge0).
\end{equation}
The set may be empty and may contain zero. Conversely, every
sequence of this form satisfies $\mu_{pk}=\mu_k^p$ for all $k$
and a recurrence with polynomial $\prod_{\beta\in\mathcal E}(Z-\beta)$
when $\mathcal E$ is nonempty. In characteristic two, every nonzero
coefficient $c_\beta$ is one.
\end{lemma}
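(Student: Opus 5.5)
We work in the quotient algebra $\mathcal A=\Omega[Z]/(\chi)$ with the moment functional $g$ of~\eqref{eq:moment-functional}, so that $\mu_k=g(z^k)$. The aim is to show that $g$ vanishes on the nilradical and that its values on the CRT idempotents lie in $\F_p$; Lemma~\ref{lem:general-root-expansion} then collapses to~\eqref{eq:foundation-atoms}. If $\ell=0$, the sequence is zero and $\mathcal E=\emptyset$, so assume $\ell\ge1$. Factor $\chi=\prod_i(Z-\beta_i)^{m_i}$ over $\Omega$, and use the decomposition $z=z_{\rm red}+\nu$ of~\eqref{eq:root-nilpotent-decomposition} with CRT idempotents $e_i$.

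The key step turns the finitely many Frobenius identities into a statement about $g$. First, Frobenius on $\mathcal A$: write $\chi^{(p)}$ for $\chi$ with coefficients raised to the $p$th power. The map $u\mapsto u^p$ is additive and sends $\Omega$-combinations $\sum_b c_bz^b$ to $\sum_b c_b^pz^{pb}$. Now take a general element $u=\sum_{b<\ell}c_bz^b$ with $c_b\in\F_p$ (or rather in $F$, where Frobenius acts on scalars by $p$th powers). Then
\[
 g(u^p)=\sum_b c_b^p\mu_{pb}=\sum_b c_b^p\mu_b^p=g(u)^p
\]
by~\eqref{eq:foundation-frobenius-window}. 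This is exactly why only $b<\ell$ is needed: $1,z,\ldots,z^{\ell-1}$ is a basis of $\mathcal A$. Over $\Omega$ we extend to the identity $g(u^p)=g(u)^p$ for all $u\in\mathcal A$. The Frobenius map on $\mathcal A=\Omega\otimes$ (the $F$-form) is semilinear, and both sides are $p$-semilinear in the coefficients, so the identity on the $F$-basis $z^b$ propagates; the additivity of $u\mapsto u^p$ in characteristic $p$ is what makes both sides additive.

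Iterating gives $g(u^{p^t})=g(u)^{p^t}$. For a nilpotent $u\in\mathcal N$, choose $t$ with $p^t\ge\max m_i$; then $u^{p^t}=0$, so $g(u)^{p^t}=0$ and $g(u)=0$. Thus $g$ kills $\mathcal N$, and in particular $c_{i,j}=g(e_i(z-\beta_i)^j)=0$ for $j\ge1$. For an idempotent $e_i$ we have $e_i^p=e_i$, so $c_i=g(e_i)$ satisfies $c_i^p=c_i$, and by the fixed-field description in the Frobenius discussion $c_i\in\F_p$. Lemma~\ref{lem:general-root-expansion} then gives $\mu_k=\sum_i c_i\beta_i^k$. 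Discarding the zero $c_i$ leaves $\mathcal E$, and $|\mathcal E|\le s\le\ell$. The expansion at $\beta_i=0$ uses $0^0=1$, consistent with that lemma. In characteristic two, $\F_2^\times=\{1\}$.

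For the converse, $(\sum c_\beta\beta^k)^p=\sum c_\beta^p\beta^{pk}=\sum c_\beta\beta^{pk}$ because $c_\beta\in\F_p$. The polynomial $\prod(Z-\beta)$ annihilates each power sequence, and hence their sum.

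The main obstacle is the semilinear extension of $g(u^p)=g(u)^p$ from the basis powers $z^b$ to all of $\mathcal A$. The delicate point is that $g$ is $\Omega$-linear while Frobenius is not. I would handle this by working with $u=\sum_b c_bz^b$ directly and expanding
\[
 u^p=\sum_b c_b^p z^{pb}.
\]
This expansion holds in the commutative characteristic-$p$ algebra, and it reduces the identity to the hypotheses for each $b<\ell$, with no further extension required.
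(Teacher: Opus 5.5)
Your proposal is correct and follows the paper's proof. You obtain $g(\psi^p)=g(\psi)^p$ for every $\psi=\sum_{b<\ell}\eta_bz^b$ with $\eta_b\in\Omega$ by expanding $\psi^p=\sum_b\eta_b^pz^{pb}$ and using $\Omega$-linearity of $g$; then you kill nilpotents and force the idempotent values into $\F_p$, exactly as the paper does. Two small points. The semilinear-extension worry in your middle paragraph is unnecessary, since your final paragraph already handles $\Omega$-coefficients directly. Reading off the expansion from Lemma~\ref{lem:general-root-expansion} with $c_{i,j}=0$ for $j\ge1$ is equivalent to the paper's reduction $z^k\equiv\sum_\beta\beta^ke_\beta$ modulo $\mathcal N$.
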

\begin{proof}
If $\ell=0$, the sequence is zero and the empty set gives the result.
Assume $\ell\ge1$ and construct $g$ by~\eqref{eq:moment-functional}
for the given sequence and the chosen polynomial $\chi$.
Every element of
$\mathcal A$ has a unique representative
$\psi=\sum_{b=0}^{\ell-1}\eta_b z^b$. In characteristic $p$,
\begin{equation}\label{eq:functional-frobenius}
 g(\psi^p)=\sum_{b<\ell}\eta_b^p\mu_{pb}
           =\sum_{b<\ell}\eta_b^p\mu_b^p=g(\psi)^p.
\end{equation}
This uses linearity of $g$, additivity of the $p$th-power map, and precisely
the $\ell$ hypotheses in~\eqref{eq:foundation-frobenius-window}.
It proves an identity for every element of the quotient algebra,
including powers that have already been reduced modulo $\chi$.

If $\psi$ is nilpotent, choose $h$ such that $\psi^{p^h}=0$.
Iterating~\eqref{eq:functional-frobenius} yields
$g(\psi)^{p^h}=0$. Since the codomain $\Omega$ is a field, this
forces $g(\psi)=0$. Thus $g$ annihilates the entire nilradical
$\mathcal N$ and descends to
\[
 \overline g:\mathcal A/\mathcal N
       \simeq\prod_{\beta:\,\chi(\beta)=0}\Omega\longrightarrow\Omega.
\]
Apply Lemma~\ref{lem:crt-idempotent-classification} to the split
recurrence polynomial $\chi$ over $\Omega$. For each distinct
root $\beta$, its coordinate unit vector in $\mathcal A/\mathcal N$
has a unique idempotent lift $e_\beta\in\mathcal A$, with value one
at $\beta$ and zero at all other roots. Because $e_\beta^2=e_\beta$,
we also have $e_\beta^p=e_\beta$, so
the Frobenius identity gives
\[
 c_\beta^p=c_\beta,\qquad c_\beta=g(e_\beta).
\]
The $p$ elements of $\F_p$ are distinct roots of $Y^p-Y$.
Since this polynomial has degree $p$, it has no other roots in
$\Omega$. Thus $c_\beta\in\F_p$. Modulo the nilradical,
$z^k=\sum_\beta \beta^k e_\beta$, including $k=0$ with $0^0=1$.
Applying $g$ gives $\mu_k=\sum_\beta c_\beta\beta^k$. Keep precisely
the roots with $c_\beta\ne0$. There are at most $\ell$ distinct roots.

For the converse, the $p$th-power map is additive and fixes each
$c_\beta\in\F_p$, so
\[
 \mu_k^p=\sum_{\beta\in\mathcal E}c_\beta^p\beta^{pk}
        =\sum_{\beta\in\mathcal E}c_\beta\beta^{pk}=\mu_{pk}.
\]
Each $\beta$ is a root of the displayed product polynomial, so
taking the linear combination of its shifted root identities
with coefficients $c_\beta$ gives the corresponding recurrence.
Finally, $\F_2^\times=\{1\}$ gives the characteristic-two assertion.
\end{proof}

The conclusion describes the functional and the sequence; it
allows repeated roots in the chosen annihilating polynomial.
The minimal recurrence of a nonzero resulting sequence is
square-free by the Hankel argument above. Also
$g(1)=\sum_{\beta\in\mathcal E}c_\beta$; in characteristic two this is the
parity of $|\mathcal E|$. A nonzero sequence may therefore have zeroth
moment zero. In particular one must not assume normalization for
each individual moment sequence in the later table construction.

\begin{example}[A repeated root visible in the moments]
In characteristic two take $\chi=(Z-1)^2=Z^2+1$.
In $\mathcal A$, write $z=1+\varepsilon$ with $\varepsilon^2=0$.
The functional $g(1)=g(\varepsilon)=1$ gives
\[
 \eta_k=g((1+\varepsilon)^k)=1+(k\bmod2),
\]
so its sequence is $1,0,1,0,\ldots$. Its leading $2\times2$
Hankel matrix is the identity, although its recurrence has only
one distinct root. The extra rank comes from the nilpotent
coordinate. It fails $\eta_2=\eta_1^2$, since these values are
one and zero. If instead $g(\varepsilon)=0$, the sequence is
constantly one, its minimal recurrence has order one, and the
original repeated-root polynomial still annihilates it.
Thus the same expansion lemma applies with either the chosen
polynomial $(Z-1)^2$ or the minimal polynomial $Z-1$ for the
constant sequence.
\end{example}

\begin{example}[A delayed Fibonacci sequence over different fields]
\label{ex:delayed-fibonacci}
Consider the infinite sequence
\[
 (\eta_k)_{k\ge0}=0,0,0,1,1,2,3,5,\ldots,
 \qquad \eta_{k+1}=\eta_k+\eta_{k-1}\quad(k>3).
\]
Over $\F_p$, the displayed integers are reduced modulo $p$.
Although its tail follows the Fibonacci recurrence, the whole
sequence satisfies
\[
 \eta_{k+4}=\eta_{k+3}+\eta_{k+2}\quad(k\ge0),
 \qquad \chi(Z)=Z^2(Z^2-Z-1).
\]
This polynomial is minimal over every field: a recurrence of
order $1\le \ell\le3$, applied at $k=3-\ell$, would express
$\eta_3=1$ using preceding zeros, and order zero is impossible.

We compute the coefficients in~\eqref{eq:generalized-moments}
by CRT, as prescribed in Lemma~\ref{lem:general-root-expansion}.
Let $q(Z)=Z^2-Z-1$, work over an extension $\Omega$ containing
all its roots, and put $\mathcal A=\Omega[Z]/(\chi)$, $z=[Z]$.
The initial values give
\[
 g(r_0+r_1z+r_2z^2+r_3z^3)=r_3\qquad(r_i\in\Omega).
\]
Thus $g$ reads the cubic coefficient of the representative of
degree less than four.

For the double root zero, a CRT idempotent representative is
\[
 E_0(Z)=(Z-1)q(Z)=1-2Z^2+Z^3,
\]
since $E_0\equiv1\pmod{Z^2}$ and $E_0\equiv0\pmod q$.
With $e_0=[E_0]$, reduction modulo $\chi$ gives
\[
 \operatorname{rem}_{\chi}(ZE_0)=Z+Z^2-Z^3.
\]
Consequently the coefficients at zero in
\eqref{eq:generalized-moments} are
\[
 c_{0,0}=g(e_0)=1,\qquad c_{0,1}=g(e_0z)=-1.
\]
Here coefficients are indexed by their root and Taylor order.
We continue to use $0^0=1$ as in~\eqref{eq:generalized-moments}.

If $q$ has distinct roots $\alpha,\beta$, its other CRT
idempotents have representatives
\[
 E_\alpha(Z)=\frac{Z^2(Z-\beta)}{\alpha^2(\alpha-\beta)},
 \qquad
 E_\beta(Z)=\frac{Z^2(Z-\alpha)}{\beta^2(\beta-\alpha)}.
\]
Each is one at its own simple root, zero at the other, and
divisible by $Z^2$. Reading their cubic coefficients gives
\[
 c_{\alpha,0}=g([E_\alpha])=\frac{1}{\alpha^2(\alpha-\beta)},
 \qquad
 c_{\beta,0}=g([E_\beta])=\frac{1}{\beta^2(\beta-\alpha)}.
\]
Since $\alpha\beta=-1$, substituting these CRT coefficients
into~\eqref{eq:generalized-moments} gives, for all $k\ge0$,
\[
 \eta_k=\sum_{j=0}^{\min(k,1)}(-1)^j\binom{k}{j}0^{k-j}
       +\frac{\beta^2\alpha^k-\alpha^2\beta^k}{\alpha-\beta}.
\]
The cases below specialize these coefficients and the root
multiplicities in~\eqref{eq:generalized-moments}.

\begin{enumerate}[label=(\alph*),leftmargin=*]
\item \textbf{Over $\R$.}
With $\alpha=(1+\sqrt5)/2$ and $\beta=(1-\sqrt5)/2$, the formula is
\[
 \eta_k=\sum_{j=0}^{\min(k,1)}(-1)^j\binom{k}{j}0^{k-j}
       +\frac{\beta^2\alpha^k-\alpha^2\beta^k}{\sqrt5}
 \qquad(k\ge0).
\]

\item \textbf{Over $\F_p$ when $Z^2-Z-1$ is irreducible.}
Take $\Omega=\F_p[Z]/(Z^2-Z-1)\simeq\F_{p^2}$ and let
$\alpha=[Z]$. Frobenius sends $\alpha$ to the other root
$\beta=\alpha^p=1-\alpha$: it permutes the roots, and
$\alpha^p\ne\alpha$ because $\alpha\notin\F_p$. Thus
\[
 \eta_k=\sum_{j=0}^{\min(k,1)}(-1)^j\binom{k}{j}0^{k-j}
       +\frac{\beta^2\alpha^k-\alpha^2\beta^k}{\alpha-\beta}
 \qquad(k\ge0).
\]
The fraction lies in $\F_p$: taking its $p$th power exchanges
the two roots and negates both numerator and denominator, so
the quotient is fixed by Frobenius. For example, $p=3$ satisfies
the irreducibility condition, since none of $0,1,2$ is a root.

\item \textbf{A reducible example over $\F_5$.}
Here $\chi(Z)=Z^2(Z-3)^2$. Applying
CRT to the factors $Z^2$ and $(Z-3)^2$, the idempotent for the
root $3$ is $e_3=1-e_0$, with representative
\[
 E_3(Z)=2Z^2-Z^3,\qquad
 \operatorname{rem}_{\chi}\bigl(E_3(Z)(Z-3)\bigr)=-2Z^2-Z^3.
\]
Reading cubic coefficients therefore gives
\[
 c_{3,0}=g(e_3)=-1,\qquad
 c_{3,1}=g\bigl(e_3(z-3)\bigr)=-1.
\]
Together with $c_{0,0}=1$ and $c_{0,1}=-1$ computed above,
formula~\eqref{eq:generalized-moments} now gives
\[
 \eta_k=\sum_{j=0}^{\min(k,1)}(-1)^j\binom{k}{j}0^{k-j}
       -\sum_{j=0}^{\min(k,1)}\binom{k}{j}3^{k-j}
 \quad\text{in }\F_5\quad(k\ge0).
\]
The integer binomial coefficients are interpreted modulo five.

\item \textbf{Two distinct roots over $\F_{11}$.}
Take $p=11$. Then
\[
 Z^2-Z-1=(Z-8)(Z-4)\quad\text{in }\F_{11}[Z],
\]
so the two roots $\alpha=8$ and $\beta=4$ are distinct and
already lie in the base field. The coefficients obtained from
CRT can be read from the degree-three representatives
\[
 E_8(Z)=4Z^2(Z-4),\qquad E_4(Z)=6Z^2(Z-8).
\]
Thus $c_{8,0}=g([E_8])=4$ and $c_{4,0}=g([E_4])=6$.
Substitution into~\eqref{eq:generalized-moments} yields
\[
 \eta_k=\sum_{j=0}^{\min(k,1)}(-1)^j\binom{k}{j}0^{k-j}
       +4\cdot8^k+6\cdot4^k
 \quad\text{in }\F_{11}\qquad(k\ge0).
\]

\end{enumerate}
\end{example}

\subsection{BCH codes and syndrome decoding}\label{sec:bch-background}

\paragraph{Optional background.}
This subsection may be omitted on a first reading of the
hardness proof. It gives decoding applications of the algebra
developed above; no BCH decoding algorithm is invoked in the
construction or soundness argument.

Bose--Chaudhuri--Hocquenghem (BCH) codes illustrate how
consecutive powers, Hankel matrices, and recurrences combine in
a decoding algorithm. We give the narrow-sense construction over
an arbitrary finite field and prove a decoding guarantee. Massey's
paper~\cite[Sections III--V]{massey} develops the connection
between BCH decoding and shortest linear recurrences.
For a textbook treatment of BCH codes and decoding, see
Huffman and Pless~\cite[Chapter 5]{huffman-pless}.

\subsubsection{Definition and consecutive-root parity checks}

In this subsection write $F=\F_q$, let
$\Omega=\F_{q^m}$ be a represented extension, and choose
$\alpha\in\Omega^\times$ of multiplicative order $B\ge2$.
Thus $B\mid q^m-1$, so the characteristic does not divide $B$.
The symbol $B$ denotes the code length here. Identify a vector
$c=(c_0,\ldots,c_{B-1})^\top\in F^B$ with
$c(X)=\sum_{i=0}^{B-1}c_iX^i$.

Fix an integer $2\le\delta\le B$, called the
\emph{designed distance}. For $\beta\in\Omega$, let
$m_\beta(X)\in F[X]$ be its monic minimal polynomial. Define
\begin{equation}\label{eq:bch-generator}
 g(X)=\mathop{\rm lcm}_{1\le k\le\delta-1}m_{\alpha^k}(X),
 \qquad
 C_{\rm BCH}=\{c\in F^B:g(X)\mid c(X)\}.
\end{equation}
Equivalently,
\begin{equation}\label{eq:bch-checks}
 c(\alpha^k)=\sum_{i=0}^{B-1}c_i\alpha^{ik}=0
       \qquad(1\le k\le\delta-1).
\end{equation}
Indeed a polynomial over $F$ vanishes at $\alpha^k$ exactly
when $m_{\alpha^k}$ divides it. Each equation over $\Omega$
expands into at most $m$ linear equations over $F$.

The polynomial $X^B-1$ has the $B$ distinct roots $\alpha^i$,
$0\le i<B$, so $g\mid X^B-1$. Multiplication by $X$ modulo
$X^B-1$ cyclically shifts a codeword and preserves divisibility
by $g$. Thus $C_{\rm BCH}$ is a cyclic linear code, of dimension
$B-\deg g$: in $F[X]/(X^B-1)$ its elements are
\[
 g(X)u(X)\pmod{X^B-1},\qquad \deg u<B-\deg g.
\]
In this degree range the product already has degree less than
$B$, and these representatives are unique. The parity checks automatically include
Frobenius conjugates, since
$c(\beta^q)=c(\beta)^q$ for $c\in F[X]$.
The least common multiple in \eqref{eq:bch-generator} accounts
for this redundancy. In particular, $\deg g\le m(\delta-1)$.

\begin{proposition}[The BCH distance bound]\label{prop:bch-bound}
Every nonzero codeword in $C_{\rm BCH}$ has Hamming weight at
least $\delta$.
\end{proposition}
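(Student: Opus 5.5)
The plan is the standard Vandermonde isolation argument, already used in Section~\ref{sec:synchronization}. Let $c\in C_{\rm BCH}$ be nonzero, and suppose for contradiction that $w=\wt(c)\le\delta-1$. Let $i_1<\cdots<i_w$ be the positions of its nonzero coordinates, and put $\beta_t=\alpha^{i_t}\in\Omega$. Since $\alpha$ has multiplicative order $B$ and $0\le i_t<B$, the elements $\beta_1,\ldots,\beta_w$ are distinct and nonzero.

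Next I restrict the parity checks \eqref{eq:bch-checks} to the support. The first $w$ of them, with $k=1,\ldots,w$ (available because $w\le\delta-1$), read
\[
 \sum_{t=1}^w c_{i_t}\beta_t^{\,k}=0\qquad(1\le k\le w).
\]
Writing $c'_t=c_{i_t}\beta_t$, this becomes $\sum_t c'_t\beta_t^{\,k-1}=0$ for $0\le k-1<w$. That is a square Vandermonde system $Vc'=0$ with $V_{a,t}=\beta_t^{\,a}$ and $0\le a<w$. By the invertibility argument for $V$ given after Lemma~\ref{lem:general-root-expansion}, $V$ is nonsingular: a polynomial of degree less than $w$ vanishing at $w$ distinct points is zero. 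Equivalently, the interpolation isolation rule of Section~\ref{sec:synchronization} applies. Hence $c'_t=0$ for every $t$.

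Since each $\beta_t\ne0$, this gives $c_{i_t}=0$, contradicting the choice of support. Therefore $\wt(c)\ge\delta$.

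The only point needing care is the shift by one power. The checks start at $k=1$ rather than $k=0$, so I must absorb a factor $\beta_t$ into the coefficients before invoking the $0$-based Vandermonde rule, and this absorption uses that $\alpha$, and hence every $\beta_t$, is invertible. Otherwise the argument is routine. I do not expect a real obstacle beyond stating distinctness of the $\beta_t$ from the order of $\alpha$.
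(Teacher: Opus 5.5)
Your proposal is correct and is essentially the paper's proof. Both arguments restrict the first $w$ parity checks to the support and use that a Vandermonde system at the distinct nonzero locations $\beta_j=\alpha^{i_j}$ is nonsingular. The only difference is bookkeeping: you absorb the factor $\beta_j$ into the unknowns, while the paper scales the $j$th column by $\beta_j$ and writes the determinant $\bigl(\prod_j\beta_j\bigr)\prod_{i<j}(\beta_j-\beta_i)\ne0$.
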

\begin{proof}
Suppose a nonzero codeword has support
$\{i_1,\ldots,i_w\}$ with $1\le w<\delta$. Put
$\beta_j=\alpha^{i_j}$ and $a_j=c_{i_j}\ne0$. These location
values are distinct and nonzero. The first $w$ parity checks give
\[
 \sum_{j=1}^w a_j\beta_j^k=0\qquad(1\le k\le w).
\]
Their coefficient matrix is a Vandermonde matrix with its $j$th
column multiplied by $\beta_j$. Its determinant is
\[
 \left(\prod_{j=1}^w\beta_j\right)
       \prod_{i<j}(\beta_j-\beta_i)\ne0.
\]
Hence every $a_j=0$, a contradiction.
\end{proof}

The bound implies that a received vector has at most one
codeword within distance $\lfloor(\delta-1)/2\rfloor$, by the
triangle inequality. The designed distance is a proved lower
bound; it need not equal the actual minimum distance.

\subsubsection{Syndromes as moments and their recurrence}
\label{sec:bch-syndromes}

Let $t=c+e\in F^B$ be received, where $c\in C_{\rm BCH}$.
For $1\le k\le\delta-1$, compute the \emph{syndromes}
\begin{equation}\label{eq:bch-syndromes}
 s_k=t(\alpha^k)=e(\alpha^k)
       =\sum_{j=1}^w a_j\beta_j^k,
 \qquad \beta_j=\alpha^{i_j},\quad a_j=e_{i_j}\in F^\times,
\end{equation}
where $\{i_1,\ldots,i_w\}$ is the error support. These values
are known from $t$ although the support and magnitudes are
unknown. The same power-sum formula defines $s_k=e(\alpha^k)$
for every $k\ge0$, but only the displayed range is immediately
available as syndromes of the received vector.

Define the monic polynomial with roots at the error locations,
and its reciprocal, by
\begin{equation}\label{eq:bch-locator}
 \begin{aligned}
 \chi(Z)&=\prod_{j=1}^w(Z-\beta_j)
          =Z^w+\lambda_1Z^{w-1}+\cdots+\lambda_w,\\
 \chi^*(Z)&=Z^w\chi(1/Z)
          =\prod_{j=1}^w(1-\beta_j Z)
          =1+\lambda_1Z+\cdots+\lambda_wZ^w.
 \end{aligned}
\end{equation}
The polynomial $\chi^*$ is the \emph{error-locator polynomial}
in this convention: its roots are the inverses $\beta_j^{-1}$.
Substitution gives the recurrence
\begin{equation}\label{eq:bch-recurrence}
 s_{k+w}+\lambda_1s_{k+w-1}+\cdots+\lambda_ws_k
    =\sum_j a_j\beta_j^k\chi(\beta_j)=0
       \qquad(k\ge0).
\end{equation}
Equivalently, $\chi(S)s=0$ for the shift operator of
Appendix~\ref{sec:recurrence-foundation}.

We can rewrite this recurrence as a polynomial identity. Define
\begin{equation}\label{eq:bch-syndrome-polynomial}
 S_{\rm syn}(X)=\sum_{k=1}^{\delta-1}s_kX^{k-1},\qquad
 M(X)=X^{\delta-1}.
\end{equation}
Thus the known coefficients start with $s_1$, not $s_0$.
The infinite series associated with the error satisfies
\[
 \sum_{k\ge1}s_kX^{k-1}
    =\sum_{j=1}^w\frac{a_j\beta_j}{1-\beta_jX}.
\]
Multiplication by $\chi^*(X)$ gives the \emph{key equation}
\begin{equation}\label{eq:bch-key-equation}
 \chi^*(X)S_{\rm syn}(X)
   =\Psi(X)+A(X)M(X),\qquad \deg\Psi<w,
\end{equation}
for some $A(X)\in\Omega[X]$. Equivalently,
$\chi^* S_{\rm syn}\equiv\Psi\pmod{X^{\delta-1}}$, where
\[
 \Psi(X)=\sum_{j=1}^w a_j\beta_j
                 \prod_{\ell\ne j}(1-\beta_\ell X).
\]
For $w\ge1$, $\Psi$ and $\chi^*$ are relatively prime:
at $X=\beta_j^{-1}$ only the $j$th summand survives, and it is
nonzero. The Euclidean solution of such syndrome key equations
is due to Sugiyama, Kasahara, Hirasawa, and
Namekawa~\cite{sugiyama}.

There is also a Frobenius identity:
\[
 s_k^q=\sum_j a_j^q\beta_j^{qk}=s_{qk}.
\]
Indices can be reduced modulo $B$ because $\beta_j^B=1$.
For a prime-field code $F=\F_p$, this becomes $s_k^p=s_{pk}$,
exactly the form of Condition 3 in Section~\ref{sec:table}.
For a larger alphabet $F=\F_{p^a}$, the $p$th-power version
requires the additional restriction $a_j\in\F_p$; the
$q$th-power version holds for all magnitudes in $F$.

\begin{lemma}[Finite syndrome recovery]\label{lem:bch-finite-recovery}
Suppose $s_k=\sum_{j=1}^w a_j\beta_j^k$ with distinct nonzero
$\beta_j$ and nonzero $a_j$. If $w\ge1$, the first $2w$
syndromes $s_1,\ldots,s_{2w}$ have least recurrence order $w$,
and uniquely determine the monic polynomial $\chi$ of
\eqref{eq:bch-locator}.
\end{lemma}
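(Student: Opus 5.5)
The plan is to split the statement into two parts. The first is that the first $2w$ syndromes have least recurrence order exactly $w$. The second is that any recurrence of that order on this window has polynomial equal to $\chi$. Both parts rest on the Hankel factorization already recorded after Lemma~\ref{lem:general-root-expansion}. The available data are $s_1,\ldots,s_{2w}$, so I reindex by $\mu_k=s_{k+1}$ for $0\le k<2w$. Then $\mu_k=\sum_j(a_j\beta_j)\beta_j^k$, and the new weights $a_j\beta_j$ are still nonzero because every $\beta_j\ne0$.

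For the upper bound, the recurrence~\eqref{eq:bch-recurrence} holds for all $k\ge0$. In particular it holds on the finite window, so order $w$ is attained with polynomial $\chi$.

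For the lower bound, take $L=w+1$ and $B=w$, so that $L+B-2=2w-1$ and $H_{w+1,w}(\mu)$ uses exactly the $2w$ available terms. Its top $w\times w$ block is $V\operatorname{diag}(a_j\beta_j)V^\top$ with $V$ the Vandermonde matrix at the distinct $\beta_j$, hence invertible, so $\rank H_{w+1,w}=w$. Suppose a recurrence of order $\ell<w$ held on the window. Lemma~\ref{lem:finite-hankel}(a) requires $N=L+B-2=2w-1$, and this is satisfied, so it gives $\rank H_{w+1,w}\le\ell<w$, a contradiction. Order zero is excluded because $\mu\ne0$, which follows from the same invertible block. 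This proves the least order is $w$.

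For uniqueness, suppose $\chi'(Z)=Z^w-\sum_{b<w}\gamma'_bZ^b$ is any monic order-$w$ recurrence valid on the window, meaning $\mu_{k+w}=\sum_b\gamma'_b\mu_{k+b}$ for $0\le k\le w-1$. Subtract the corresponding identity for $\chi$. The difference $\delta_b=\gamma_b-\gamma'_b$, with the sign convention adjusted to the paper's $\lambda$'s, then satisfies $H_{w,w}\,\delta=0$, where $H_{w,w}$ is the top square block. That block is invertible, so $\delta=0$ and $\chi'=\chi$.

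The only delicate step is the index bookkeeping, since the syndromes start at $s_1$ rather than $s_0$ and both the square block and the rectangular matrix must fit inside exactly $2w$ terms; the reindexing above handles this. An alternative route through the key equation~\eqref{eq:bch-key-equation} exists, but the Hankel argument suffices and uses only results already proved.
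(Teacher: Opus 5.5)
Your proposal is correct and follows the paper's proof. Both use the Vandermonde factorization of the square Hankel block, Lemma~\ref{lem:finite-hankel}(a) for the lower bound, equation~\eqref{eq:bch-recurrence} for attaining order $w$, and invertibility of that block for uniqueness. The only difference is cosmetic: you apply the rank bound to $H_{w+1,w}$, which uses all $2w$ syndromes, while the paper applies it to the square block $H_w$, implicitly restricting to the prefix $s_1,\ldots,s_{2w-1}$.
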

\begin{proof}
Let $V=(\beta_j^i)_{0\le i<w,\,1\le j\le w}$. The Hankel
matrix of these data factors as
\[
 H_w=(s_{i+j+1})_{0\le i,j<w}
     =V\operatorname{diag}(a_1\beta_1,\ldots,a_w\beta_w)V^\top.
\]
Both factors $V$ and the diagonal matrix are invertible, so
$\rank H_w=w$. Lemma~\ref{lem:finite-hankel}\textup{(a)}
therefore excludes every recurrence of order less than $w$.
Equation~\eqref{eq:bch-recurrence} supplies order $w$.
Write $\chi(Z)=Z^w+\sum_{a=0}^{w-1}\gamma_aZ^a$.
Its coefficients satisfy
\[
 H_w(\gamma_0,\ldots,\gamma_{w-1})^\top
       =-(s_{w+1},\ldots,s_{2w})^\top.
\]
Invertibility gives uniqueness. Substitution of the true root
polynomial verifies that this unique solution is the stated
$\chi$.
\end{proof}

This distinguishes two uses of consecutive powers. If locations
are fixed, $w$ syndromes determine their $w$ magnitudes by
Vandermonde inversion. With the locations unknown, $2w$
syndromes suffice to determine the recurrence and hence the
locations as well.

\subsubsection{BCH decoding by the extended Euclidean algorithm}
\label{sec:bch-decoding}
\label{sec:bch-euclidean}

Put $w_{\max}=\lfloor(\delta-1)/2\rfloor$. The following
procedure uses only the received vector, the represented fields,
and the fixed element $\alpha$.
\begin{enumerate}[leftmargin=*]
\item Compute $s_1,\ldots,s_{\delta-1}$ by
\eqref{eq:bch-syndromes}. If they are all zero, return $t$,
which already belongs to the code by \eqref{eq:bch-checks}.
\item Apply the extended Euclidean algorithm of Sugiyama,
Kasahara, Hirasawa, and Namekawa~\cite{sugiyama} to
$M=X^{\delta-1}$ and $S_{\rm syn}$, keeping the coefficients
of each remainder. Stop at the first remainder $r$ with
$\deg r<(\delta-1)/2$, and write
\[
 r=uM+vS_{\rm syn}.
\]
If $v(0)=0$, report failure. Otherwise set
\[
 \chi^*(X)=\frac{v(X)}{v(0)},\qquad \ell=\deg\chi^*.
\]
Require $1\le\ell\le w_{\max}$; otherwise report failure.
Proposition~\ref{prop:bch-euclidean} below proves that, within
the designed decoding radius, this gives the error locator.
\item Evaluate $\chi^*(\alpha^{-i})$ for $0\le i<B$.
Keep the zero positions $i_1,\ldots,i_\ell$; this enumeration
of possible locations is often called a \emph{Chien search}.
Require $\deg\chi^*=\ell$ and exactly $\ell$ such positions;
otherwise report failure.
\item Solve for the magnitudes using
\[
 \sum_{j=1}^\ell a_j\alpha^{i_jk}=s_k
       \qquad(1\le k\le\ell).
\]
The coefficient matrix is invertible by the same Vandermonde
argument as in the BCH bound. Require $a_j\in F^\times$,
and form $\widehat e\in F^B$ with these entries at the recovered
positions and zero elsewhere.
\item Verify $\wt(\widehat e)\le w_{\max}$ and
$t-\widehat e\in C_{\rm BCH}$ by the parity checks. If they
hold, return $t-\widehat e$; otherwise report failure.
\end{enumerate}

We record the elementary Euclidean facts needed for the stopping
rules. All computations below take place in $\Omega[X]$.
\begin{lemma}[Coefficients in a partial Euclidean algorithm]
\label{lem:partial-euclidean-degrees}
Let $M,T\ne0$ with $\deg T<\deg M$. Starting with
$r_0=M$, $r_1=T$, write the successive Euclidean remainders as
$r_j=u_jM+v_jT$, with coefficient rows $(u_0,v_0)=(1,0)$ and
$(u_1,v_1)=(0,1)$. Consecutive coefficient rows have determinant
$\pm1$. For $j\ge1$, including the final zero remainder,
\[
 \deg v_j=\deg M-\deg r_{j-1}.
\]
Consequently, at the first remainder with $\deg r_j<H<\deg M$,
\[
 \deg v_j\le\deg M-\lceil H\rceil.
\]
\end{lemma}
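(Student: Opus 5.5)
The plan is to prove the determinant statement and the degree formula together by induction on the Euclidean step, using the standard matrix form of the algorithm. Let $r_{j+1}=r_{j-1}-a_jr_j$, where $a_j$ is the quotient of $r_{j-1}$ by $r_j$. The coefficient rows then obey the same recurrence:
\[
 (u_{j+1},v_{j+1})=(u_{j-1},v_{j-1})-a_j(u_j,v_j).
\]
Writing
\[
 \begin{pmatrix}u_{j+1}&v_{j+1}\\u_j&v_j\end{pmatrix}
 =\begin{pmatrix}-a_j&1\\1&0\end{pmatrix}
  \begin{pmatrix}u_j&v_j\\u_{j-1}&v_{j-1}\end{pmatrix},
\]
the determinant changes sign at each step. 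Since the initial matrix has determinant $\pm1$, every consecutive pair of coefficient rows has determinant $\pm1$. The identity $r_j=u_jM+v_jT$ is preserved by the same recurrence, because it holds for $j=0,1$ and the update is linear.

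For the degree formula, I would induct on $j\ge1$. The base case $j=1$ reads $\deg v_1=0=\deg M-\deg r_0$. For the inductive step, note that $\deg a_j=\deg r_{j-1}-\deg r_j\ge1$ for every $j\ge1$. This holds at $j=1$ because $\deg T<\deg M$, and for later $j$ because remainders strictly decrease in degree. Assuming $\deg v_j=\deg M-\deg r_{j-1}$ and $\deg v_{j-1}<\deg v_j$, the term $a_jv_j$ has degree $\deg M-\deg r_j$, which exceeds $\deg v_{j-1}$. Hence
\[
 \deg v_{j+1}=\deg M-\deg r_j,
\]
and the strict increase $\deg v_j<\deg v_{j+1}$ propagates. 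The auxiliary claim therefore needs to be carried in the induction: $\deg v_j$ is strictly increasing for $j\ge0$, with $\deg v_0=-\infty$. The same step is valid when $r_{j+1}=0$, since the quotient $a_j$ is still defined and only $r_j\ne0$ is used. This covers the final zero remainder.

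For the consequence, let $j$ be the first index with $\deg r_j<H$. Then $j\ge1$, since $\deg r_0=\deg M>H$. Minimality of $j$ gives $\deg r_{j-1}\ge H$, and integrality of degrees upgrades this to $\deg r_{j-1}\ge\lceil H\rceil$. Substituting into the degree formula gives $\deg v_j\le\deg M-\lceil H\rceil$. The only point needing care is the bookkeeping at $j=1$ and at the zero remainder, including the $-\infty$ convention for $v_0$.
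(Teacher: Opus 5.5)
Your proposal is correct and follows essentially the same route as the paper. Both arguments use the shared recurrence for the coefficient rows to get the sign-alternating determinant, induction with $\deg(a_jv_j)=\deg M-\deg r_j>\deg v_{j-1}$ to rule out cancellation of the leading term, and minimality of the stopping index to get $\deg r_{j-1}\ge H$. You simply make explicit the bookkeeping the paper leaves implicit: $\deg a_j\ge1$, the $-\infty$ convention for $v_0$, the zero remainder, and integrality for the ceiling.
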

\begin{proof}
Each division gives
$r_{j+1}=r_{j-1}-q_jr_j$ and the same recurrence for $u_j,v_j$.
It changes the determinant of consecutive rows by a sign.
The degree identity starts with $v_1=1$.
Since $\deg q_j=\deg r_{j-1}-\deg r_j$, induction gives
$\deg(q_jv_j)=\deg M-\deg r_j>\deg v_{j-1}$, so the leading
term cannot cancel in $v_{j+1}$. This proves the identity.
At the first stopping remainder the preceding degree is at least
$H$, proving the last assertion.
\end{proof}

\begin{proposition}[Euclidean recovery of the locator]
\label{prop:bch-euclidean}
Suppose $1\le w\le w_{\max}$. Apply the extended Euclidean
algorithm to $M=X^{\delta-1}$ and $S_{\rm syn}$, stopping at the
first remainder $r$ satisfying
\begin{equation}\label{eq:syndrome-euclidean-stop}
 \deg r<\frac{\delta-1}{2}.
\end{equation}
If $r=uM+vS_{\rm syn}$, then $v(0)\ne0$ and
\[
 \chi^*(X)=\frac{v(X)}{v(0)},\qquad
 \Psi(X)=\frac{r(X)}{v(0)}.
\]
Thus Step 2 of the decoder recovers the true error locator.
\end{proposition}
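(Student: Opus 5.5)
We plan to use the key equation \eqref{eq:bch-key-equation} together with the degree bookkeeping of Lemma~\ref{lem:partial-euclidean-degrees}. Put $M=X^{\delta-1}$, $T=S_{\rm syn}$ and $H=(\delta-1)/2$. We may assume $S_{\rm syn}\ne0$; indeed, for $1\le w\le w_{\max}$, Lemma~\ref{lem:bch-finite-recovery} shows that $s_1,\ldots,s_{2w}$ are not all zero, since they have least recurrence order $w\ge1$. Also $\deg S_{\rm syn}<\deg M$, so the lemma applies. Let $r=uM+vS_{\rm syn}$ be the first remainder with $\deg r<H$. The lemma then gives $\deg v\le\deg M-\lceil H\rceil\le(\delta-1)/2$.

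The central step compares the two congruences
\[
 vS_{\rm syn}\equiv r\pmod{M},\qquad
 \chi^*S_{\rm syn}\equiv\Psi\pmod{M}.
\]
Multiply the first by $\chi^*$ and the second by $v$ and subtract. This gives
\[
 \chi^*r\equiv v\Psi\pmod{X^{\delta-1}}.
\]
Both sides have degree less than $\delta-1$. Indeed, $\deg(\chi^*r)<w+H\le\delta-1$, since $w\le w_{\max}\le H$. Also $\deg(v\Psi)<H+w\le\delta-1$. Hence the congruence is an equality of polynomials, $\chi^*r=v\Psi$.

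Next we use the coprimality of $\Psi$ and $\chi^*$ noted after \eqref{eq:bch-key-equation}. From $\chi^*r=v\Psi$ it follows that $\chi^*\mid v$. Write $v=\lambda\chi^*$; then $r=\lambda\Psi$. Substituting into $r=uM+vS_{\rm syn}$ and using $\chi^*S_{\rm syn}=\Psi+AM$ gives
\[
 \lambda\Psi=uM+\lambda\Psi+\lambda AM,
\]
so $u=-\lambda A$. Thus $\lambda$ divides both $u$ and $v$. Lemma~\ref{lem:partial-euclidean-degrees} says consecutive coefficient rows have determinant $\pm1$, so $\gcd(u,v)=1$. Therefore $\lambda$ is a nonzero constant. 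Since $\chi^*(0)=1$, we get $v(0)=\lambda\ne0$, and then $\chi^*=v/v(0)$ and $\Psi=r/v(0)$, as claimed.

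The step we expect to be the main obstacle is verifying the degree inequalities carefully. The strict bound $\deg r<H$ must combine with $\deg\chi^*=w\le\lfloor(\delta-1)/2\rfloor$ to give a total degree below $\delta-1$, and $\deg v\le\lfloor(\delta-1)/2\rfloor$ must combine with $\deg\Psi<w$. The parity of $\delta-1$ needs checking in both cases, using that $\lceil H\rceil\ge H$.
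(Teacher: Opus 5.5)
Your proof is correct. The first half is the paper's argument: both derive $\chi^*r=v\Psi$ from the key equation together with $\deg v\le w_{\max}$ and $\deg r<(\delta-1)/2$. Your parity worry is already settled by the bounds you wrote, since $\deg v\le \delta-1-\lceil(\delta-1)/2\rceil=w_{\max}$ and $w\le w_{\max}$.

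The second half takes a different route from the paper.
\begin{itemize}
\item The paper brings in the previous remainder $r'$ and its multiplier $v'$. It uses $r'v-v'r=\pm M$ to write $(\Psi,\chi^*)$ as a polynomial combination of $(r',v')$ and $(r,v)$. The coefficient on the first pair is $(\Psi v-\chi^*r)/(\pm M)=0$, so $(\Psi,\chi^*)=b\,(r,v)$. Only then does it invoke $\gcd(\Psi,\chi^*)=1$ to force $b$ to be constant.
\item You use $\gcd(\Psi,\chi^*)=1$ first, to get $\chi^*\mid v$ and hence $(r,v)=\lambda(\Psi,\chi^*)$. You then read off $u=-\lambda A$ from the exact identity $\chi^*S_{\rm syn}=\Psi+AM$. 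Finally, $\gcd(u,v)=1$, which follows from the determinant-$\pm1$ rows, makes $\lambda$ a unit.
\end{itemize}
So the two coprimality facts swap roles.

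Your version is the standard textbook argument. It needs only the current coefficient row and the fact that it is unimodular with its predecessor, and it avoids the module-generation claim the paper states tersely. It also yields $r=\lambda\Psi\ne0$ for free. You check $S_{\rm syn}\ne0$ explicitly, whereas the paper defers that check to the proof of Theorem~\ref{thm:bch-decoding} and argues separately that the stopping remainder is nonzero via $\gcd(M,S_{\rm syn})\mid\Psi$. The paper's version keeps the constancy step phrased entirely in terms of remainders and their multipliers.
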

\begin{proof}
By the lemma, $\deg v\le w_{\max}$. Both
$\Psi v-\chi^* r$ and $0$ are congruent modulo $M$ by the
key equation. Its degree is less than $\delta-1$, because
\[
 \deg(\chi^* r)<w+\frac{\delta-1}{2}\le\delta-1,
 \qquad
 \deg(\Psi v)\le w-1+w_{\max}\le\delta-2.
\]
Therefore $\Psi v-\chi^* r=0$ as a polynomial.
To see that the proportionality factor is a constant, use the
previous Euclidean remainder $r'$ with coefficient $v'$.
The determinant identity in the lemma gives
$r'v-v'r=\pm M$. Since $\Psi\equiv\chi^* S_{\rm syn}\pmod M$,
the pair $(\Psi,\chi^*)$ is a polynomial linear combination
of $(r',v')$ and $(r,v)$: the corresponding coefficient rows
form an invertible polynomial matrix. Its coefficient on the
first pair is $(\Psi v-\chi^* r)/(\pm M)=0$.
Hence $(\Psi,\chi^*)=b(X)(r,v)$ for a polynomial $b$.
Relative primality of $\Psi,\chi^*$ makes $b$ a nonzero
constant. Since $\chi^*(0)=1$, normalization gives the claim.
Finally the stopping remainder exists and is nonzero:
$\gcd(M,S_{\rm syn})$ divides $\Psi$, whose degree is less than
$w\le(\delta-1)/2$.
\end{proof}

The stopping inequality is strict. If $\delta-1=2w_{\max}$,
it means $\deg r\le w_{\max}-1$. Allowing equality with
$(\delta-1)/2$ need not give the locator. With no error promise,
retain the root, magnitude, and codeword checks in the decoder,
and report failure if the necessary normalization is impossible.
The all-zero syndrome case is handled before running this step.

\begin{theorem}[BCH decoding within the designed radius]
\label{thm:bch-decoding}
If $t=c+e$, where $c\in C_{\rm BCH}$ and
$\wt(e)\le w_{\max}$, the procedure returns $c$.
For any received vector, it returns the unique codeword within
that radius if one exists, and reports failure if none exists.
Its running time is polynomial in $B$, the extension degree,
and the bit length of the represented field data.
\end{theorem}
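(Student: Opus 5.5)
The proof checks the procedure step by step under the promise $t=c+e$ with $w=\wt(e)\le w_{\max}$, then handles an arbitrary received vector, and finally bounds the running time.

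\textbf{The zero-error case.} If $w=0$, all syndromes $s_k=e(\alpha^k)$ vanish and Step 1 returns $t=c$. Conversely, if all syndromes vanish, then $t\in C_{\rm BCH}$ by \eqref{eq:bch-checks}. By uniqueness within radius $w_{\max}$ (from Proposition~\ref{prop:bch-bound} and the triangle inequality), $t$ is the only codeword in the ball.

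\textbf{The case $1\le w\le w_{\max}$, Steps 2 and 3.} Proposition~\ref{prop:bch-euclidean} shows that Step 2 stops at a remainder with $v(0)\ne0$ and returns $\chi^*=\prod_j(1-\beta_jX)$, so $\ell=\deg\chi^*=w$ lies in the allowed range. The degree equals $w$ because the $\beta_j$ are nonzero. In Step 3, $\chi^*(\alpha^{-i})=0$ exactly when $\alpha^{-i}=\beta_j^{-1}$ for some $j$, that is, when $i$ is one of the error positions $i_j$. Here we use that $\alpha$ has order $B$, so $i\mapsto\alpha^{-i}$ is injective on $0\le i<B$. Thus exactly $w$ positions are found.

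\textbf{The case $1\le w\le w_{\max}$, Steps 4 and 5.} With the true locations known, the system in Step 4 is the scaled Vandermonde system from the proof of Proposition~\ref{prop:bch-bound}, which is invertible. The true magnitudes $a_j=e_{i_j}\in F^\times$ satisfy it, so they are its unique solution, and $\widehat e=e$. Step 5 then verifies $\wt(e)\le w_{\max}$ and $t-e=c\in C_{\rm BCH}$, and returns $c$.

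\textbf{Arbitrary received vectors.} Any vector the procedure returns has passed Step 5, so it is a codeword within distance $w_{\max}$ of $t$. By uniqueness it is the unique such codeword. If some codeword $c$ lies within the radius, write $t=c+(t-c)$; the promise case shows the procedure returns $c$. Hence it reports failure exactly when no codeword lies within the radius.

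\textbf{Running time.} Computing the $\delta-1\le B$ syndromes takes $O(B^2)$ operations in $\Omega$. The Euclidean algorithm on polynomials of degree less than $\delta$ is polynomial, and so are the Chien search over $B$ points, one $\ell\times\ell$ linear solve, and the final parity checks. Each operation in $\Omega=\F_{q^m}$ costs time polynomial in $m$ and the bit length of the represented field.

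The only delicate step is identifying the Chien-search zeros with the error positions. It needs the injectivity of $i\mapsto\alpha^{-i}$ and $\beta_j\ne0$ to get $\deg\chi^*=w$. Both follow from the order-$B$ assumption on $\alpha$.
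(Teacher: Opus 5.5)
Your proposal is correct and follows essentially the paper's proof: the promise case through Proposition~\ref{prop:bch-euclidean}, the Chien search and Vandermonde magnitude solve, then uniqueness from the BCH bound and the final checks for arbitrary inputs, and a step-by-step polynomial operation count. The differences are minor. You justify the Step~1 exit by uniqueness (if all syndromes vanish, $t$ is the unique codeword in the ball) rather than by the invertible Hankel matrix of Lemma~\ref{lem:bch-finite-recovery}. Also, your remark that every returned vector has passed Step~5 should exempt the Step~1 return of $t$, which your zero-error paragraph already covers.
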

\begin{proof}
Write $w=\wt(e)$. If $w=0$, all syndromes vanish and the
first step returns $t=c$. If $w\ge1$, then $2w\le\delta-1$,
and the invertible Hankel matrix in
Lemma~\ref{lem:bch-finite-recovery} ensures that the syndromes
are not all zero.
Proposition~\ref{prop:bch-euclidean} shows that Step 2 finds
exactly $\chi^*=\prod_j(1-\beta_j X)$, with degree $w$.
In particular, its normalization is defined and its degree passes
the required check.
The next steps recover precisely its distinct inverse locations
and the unique magnitudes. They therefore give $\widehat e=e$
and return $c$.

The BCH bound and the triangle inequality give uniqueness within
radius $w_{\max}$. The final check ensures that a returned vector
is such a codeword even without an error promise. If one exists,
the preceding argument guarantees success; if none exists, no
output can pass that check. Evaluation over at most $B$ possible
locations, the extended Euclidean algorithm, the magnitude
linear system, and the final checks require polynomially many field
operations on the represented fields. These operations have
polynomial bit cost in their representations.
\end{proof}

The theorem is a decoding guarantee for this structured family
within a prescribed radius. It does not solve unrestricted CCP
for arbitrary received vectors and codes.

\subsubsection{Gao's interpolation formulation with modulus \texorpdfstring{$X^B-1$}{X to the B minus 1}}
\label{sec:gao-interpolation-decoding}

Gao's method~\cite{gao-rs} uses interpolation and a partial
Euclidean algorithm to recover the message polynomial directly.
For the present BCH code it applies through a Reed--Solomon
code over $\Omega$. Put $k_{\rm RS}=B-\delta+1$ and define
\[
 \operatorname{RS}=\{(f(\alpha^i))_{0\le i<B}:
                         f\in\Omega[X],\ \deg f<k_{\rm RS}\}.
\]
Then
\begin{equation}\label{eq:bch-rs-subfield}
 C_{\rm BCH}=\operatorname{RS}\cap F^B.
\end{equation}
The message polynomial below means the unique ambient
Reed--Solomon interpolant of the decoded vector. Its coefficients
lie in $\Omega$ and must satisfy the subfield evaluation condition;
they are not arbitrary BCH message coordinates in $F^{k_{\rm RS}}$.
To verify this equality and identify the received polynomial,
interpolate $R(\alpha^i)=t_i$ with $\deg R<B$. Explicitly,
\begin{equation}\label{eq:gao-received-interpolant}
 R(X)=\frac1B\sum_{j=0}^{B-1}t(\alpha^{-j})X^j.
\end{equation}
Here $B\ne0$ in $\Omega$. The geometric-sum identity
\[
 \sum_{j=0}^{B-1}\alpha^{j(i-h)}
   =\begin{cases}B,&i=h\pmod B,\\0,&i\ne h\pmod B\end{cases}
\]
shows that evaluating \eqref{eq:gao-received-interpolant} at
$\alpha^i$ gives $t_i$. For $t=c\in F^B$, the BCH checks
$c(\alpha^k)=0$, $1\le k\le\delta-1$, are exactly the vanishing
of the coefficients with $B-\delta+1\le j<B$ in this interpolant.
This proves \eqref{eq:bch-rs-subfield}.

If $R(X)=0$, return the zero message polynomial. Henceforth
assume $R(X)\ne0$. Run the extended Euclidean algorithm on
\begin{equation}\label{eq:gao-euclidean-inputs}
 R(X)\quad\text{and}\quad
 G(X)=\prod_{i=0}^{B-1}(X-\alpha^i)=X^B-1,
\end{equation}
starting the divisions with the higher-degree polynomial $G$.
Write its current remainder as $v=uG+rR$, so $r(X)$ is the
multiplier of $R(X)$. Stop at the first remainder with
\begin{equation}\label{eq:gao-euclidean-stop}
 \deg v<\frac{B+k_{\rm RS}}2
         =B-\frac{\delta-1}{2}.
\end{equation}
Lemma~\ref{lem:partial-euclidean-degrees} guarantees $r\ne0$. Divide
$v(X)$ by $r(X)$, say
\[
 v(X)=f_1(X)r(X)+r_0(X),\qquad \deg r_0<\deg r.
\]
If $r_0(X)\ne0$ or $\deg f_1\ge k_{\rm RS}$, return failure.
Otherwise verify that the evaluation vector of $f_1$ lies in
$F^B$ and differs from $t$ in at most $w_{\max}$ positions.
If a check fails, return failure; otherwise return $f_1(X)$
as the message polynomial. Its evaluation vector is the decoded
BCH codeword. The degree test includes equality because the
Reed--Solomon message space requires $\deg f_1<k_{\rm RS}$.

\begin{proposition}[Interpolation decoding for the BCH code]
\label{prop:gao-bch-decoding}
With the stopping convention \eqref{eq:gao-euclidean-stop},
the interpolation procedure returns the message polynomial of
the unique BCH codeword within distance $w_{\max}$ whenever
that codeword exists.
\end{proposition}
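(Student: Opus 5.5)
The plan is to prove the standard Gao-decoding argument, transported to the modulus $G=X^B-1$. Let $c$ be the unique BCH codeword with $d_H(t,c)\le w_{\max}$, so $t=c+e$ with $w=\wt(e)\le w_{\max}$. By \eqref{eq:bch-rs-subfield}, $c$ is the evaluation vector of a unique $f\in\Omega[X]$ with $\deg f<k_{\rm RS}$.

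I will introduce the true error locator in evaluation form,
\[
 \Lambda(X)=\prod_{i\in\operatorname{supp}e}(X-\alpha^i),\qquad \deg\Lambda=w.
\]
Then $R-f$ vanishes at every $\alpha^i$ outside the support. Hence $\Lambda(R-f)$ vanishes at all $B$ roots of $G$, and
\[
 \Lambda R\equiv \Lambda f\pmod G,\qquad \deg(\Lambda f)<w+k_{\rm RS}\le \tfrac{B+k_{\rm RS}}2,
\]
using $2w\le\delta-1=B-k_{\rm RS}$. So $(\Lambda f,\Lambda)$ is a solution of the key congruence $v\equiv rR\pmod G$ with $\deg v<(B+k_{\rm RS})/2$ and $\deg r\le w\le (B-k_{\rm RS})/2$. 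The degenerate case $R=0$ gives $t=0=c$ and is handled directly. If $f=0$ with $R\ne0$, the same argument still applies; the identity below then forces $v=0$, which contradicts the stopping rule, so this case needs a separate short check. I expect to treat it by noting that $\Lambda R\equiv0\pmod G$ with $\deg R<B$ still fits the argument.

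Next I will compare this solution with the Euclidean pair $(v,r)$ at the stopping step \eqref{eq:gao-euclidean-stop}. By Lemma~\ref{lem:partial-euclidean-degrees} with $M=G$, $T=R$ and $H=(B+k_{\rm RS})/2$, we have $\deg r\le B-\lceil H\rceil\le (B-k_{\rm RS})/2$. Both $v\Lambda$ and $r\Lambda f$ are congruent to $rR\Lambda$ modulo $G$. Their degrees are below
\[
 \tfrac{B+k_{\rm RS}}2+\tfrac{B-k_{\rm RS}}2=B,
\]
so $v\Lambda=r\Lambda f$ exactly, and therefore $v=rf$.

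This gives $r\mid v$ with quotient $f$, so the division step yields $r_0=0$ and $f_1=f$. The degree test $\deg f_1<k_{\rm RS}$ passes, and the subfield and distance checks pass because $f$ evaluates to $c$. The procedure therefore returns $f$.

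The main obstacle is the boundary bookkeeping. When $B+k_{\rm RS}$ is odd I have to verify that the strict stopping inequality and the ceiling in the lemma still give a total degree strictly below $B$. I also have to confirm that the stopping remainder exists, which follows because degrees eventually reach $-\infty$. Unlike Proposition~\ref{prop:bch-euclidean}, this argument needs no coprimality or determinant argument: cancelling $\Lambda\ne0$ in the integral domain $\Omega[X]$ suffices, since only $v/r=f$ is needed.
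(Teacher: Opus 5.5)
Your proof is correct, but it is organized differently from the paper's. The paper never introduces an error locator. From $v\equiv rR\pmod G$ it reads off $v(\alpha^i)=r(\alpha^i)f_c(\alpha^i)$ at the $B-w$ error-free positions. It then observes $\deg(v-rf_c)<(B+k_{\rm RS})/2\le B-w$ and concludes $v=rf_c$ by the root bound.

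Your cross-multiplication is the same computation multiplied through by $\Lambda$. Since $G$ is squarefree, $G\mid\Lambda(v-rf)$ says exactly that $v-rf$ vanishes off the error support, and $\deg\Lambda(v-rf)<B$ says exactly that $\deg(v-rf)<B-w$. The paper's form needs one degree estimate and no locator. Yours displays $(\Lambda f,\Lambda)$ as an explicit solution of the key congruence, parallel to the syndrome proof of Proposition~\ref{prop:bch-euclidean}. The cost is the extra divisibility $G\mid\Lambda(R-f)$ and two separate degree checks.

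You can drop two of your stated worries.
\begin{itemize}
\item \textbf{Parity.} No split by the parity of $B+k_{\rm RS}$ is needed. The bounds $\deg v<(B+k_{\rm RS})/2$ and $w\le(B-k_{\rm RS})/2$ give $\deg(v\Lambda)<B$ directly. The bound $\deg r\le(B-k_{\rm RS})/2$ gives $\deg(r\Lambda f)\le B-1$.
\item \textbf{The case $f=0\ne R$.} Here $v=0$ does not contradict the stopping rule. The zero remainder has degree $-\infty$ and is an admissible stopping remainder. Lemma~\ref{lem:partial-euclidean-degrees} explicitly includes it and still gives $r\ne0$, which you use when dividing and should state. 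In this case it is in fact the stopping remainder. Every nonzero remainder is a multiple of $\gcd(G,R)=\prod_{t_i=0}(X-\alpha^i)$, whose degree is $B-w\ge(B+k_{\rm RS})/2$. Division then gives $r_0=0$ and $f_1=0$.
\end{itemize}
So your general argument covers the $f=0$ case verbatim, and no separate check is needed. Your own later remark, that the stopping remainder exists because degrees eventually reach $-\infty$, already assumes the zero remainder is admissible.
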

\begin{proof}
Suppose $t$ differs in $w\le w_{\max}$ positions from
$c=(f_c(\alpha^i))_i\in C_{\rm BCH}$, with
$\deg f_c<k_{\rm RS}$. The case $R=0$ is immediate.
Lemma~\ref{lem:partial-euclidean-degrees} gives $r\ne0$ and
$\deg r\le\lfloor(B-k_{\rm RS})/2\rfloor=w_{\max}$.
At each of the $B-w$ correct positions, the congruence
$v\equiv rR\pmod G$ implies $v(\alpha^i)=r(\alpha^i)f_c(\alpha^i)$.
But
\[
 \deg(v-rf_c)<\frac{B+k_{\rm RS}}2,
 \qquad
 B-w\ge B-w_{\max}\ge\frac{B+k_{\rm RS}}2.
\]
The root bound forces $v=rf_c$. Division therefore has
$r_0=0$ and returns $f_1=f_c$, which passes all the checks.
Uniqueness follows from the BCH distance bound. Without the
promise, the checks guarantee that the returned message
encodes a BCH codeword within the stated radius.
\end{proof}

The two Euclidean formulations use different data and stopping
bounds:
\[
\begin{array}{c|c|c|c}
 \text{formulation}&\text{modulus}&\text{second polynomial}
                  &\text{stop when}\\\hline
 \text{syndromes}&X^{\delta-1}&S_{\rm syn}(X)
                  &\deg r<(\delta-1)/2\\
 \text{interpolation}&X^B-1&R(X)
                  &\deg v<B-(\delta-1)/2
\end{array}
\]
The syndrome polynomial cannot simply replace $R(X)$ while
retaining the modulus $X^B-1$. Formula
\eqref{eq:gao-received-interpolant} explains their relation:
its top coefficients satisfy $[X^{B-k}]R=s_k/B$ for
$1\le k\le\delta-1$. Thus the two approaches encode the same
parity-check information in different polynomial forms.

\subsubsection{Example: two errors in a binary BCH code}

Take $F=\F_2$ and
$\Omega=\F_2[U]/(U^4+U+1)$, with $\theta=[U]$.
The defining polynomial has no linear factor and is not the
square of the only irreducible quadratic $X^2+X+1$ over
$\F_2$, so it is irreducible. The element $\theta$ has order
$15$: its order divides $15$, while orders $1$ and $3$ contradict
its degree four, and order $5$ is excluded by
$\theta^5=\theta^2+\theta\ne1$.

Use $B=15$, $\alpha=\theta$, and $\delta=5$.
The Frobenius orbits of the exponents $1,2,3,4$ are
$\{1,2,4,8\}$ and $\{3,6,12,9\}$ modulo $15$. Their minimal
polynomials give
\[
 \begin{aligned}
 g(X)&=(X^4+X+1)(X^4+X^3+X^2+X+1)\\
     &=X^8+X^7+X^6+X^4+1.
 \end{aligned}
\]
For the second factor, $\theta^3$ has order five, so it is a
root of $1+X+X^2+X^3+X^4$; its orbit has size four.
The code has dimension $15-8=7$ and minimum distance at least
five, so the theorem guarantees correction of two errors.

Suppose a codeword is changed in positions $0$ and $1$, so
$e(X)=1+X$. The four syndromes are
\[
 s_1=1+\theta,\qquad s_2=1+\theta^2,\qquad
 s_3=1+\theta^3,\qquad s_4=1+\theta^4=\theta.
\]
The order-two recurrence coefficients solve
\[
 \begin{pmatrix}s_2&s_1\\s_3&s_2\end{pmatrix}
 \begin{pmatrix}\lambda_1\\\lambda_2\end{pmatrix}
 =\begin{pmatrix}s_3\\s_4\end{pmatrix}.
\]
Its determinant is
$s_2^2+s_1s_3=\theta+\theta^3\ne0$.
Substitution gives $\lambda_1=1+\theta$, $\lambda_2=\theta$,
and hence
\[
 \chi^*(Z)=1+(1+\theta)Z+\theta Z^2
            =(1+Z)(1+\theta Z).
\]
The inverse-location search finds $1$ and $\theta^{-1}$, so the
error positions are $0$ and $1$. Solving the first two syndrome
equations yields both magnitudes equal to one. Subtracting this
error vector recovers the transmitted codeword.

\subsubsection{The connection to the moment-table reduction}

There are three direct correspondences. The Vandermonde proof
of the BCH bound is the coefficient-isolation rule in
Section~\ref{sec:synchronization}. The Hankel factorization in
Lemma~\ref{lem:bch-finite-recovery} is the power-sum counterpart
of the low-rank recurrence argument in Section~\ref{sec:atoms}.
The prime-field syndrome identity $s_{pk}=s_k^p$ is the
Frobenius compatibility imposed by the table construction.

In BCH decoding the possible locations belong to the known set
$\{\alpha^i:0\le i<B\}$. The reduction first obtains recurrence
roots in a splitting field, and then uses its encoding equations
to prove that they represent assignments in $E$. Its moment
window must also synchronize the moment sequences associated
with different pairs $(i,j)$. Lemma~\ref{lem:block-column-atoms}
obtains one recurrence polynomial for all $n+1$ sequences in a
fixed block column. Their union of root supports consequently
has size at most $\kappa-1$ before synchronization. Symmetry
uses at most two such supports, while Frobenius uses indices
up to $p(\ell_j-1)$ inside the proved recurrence prefix.
These bounds give $L=c_p(\kappa-1)+1$, with
$c_p=\max\{2,p-1\}$, as specified in
Section~\ref{sec:cutoff-digits}. The joint recurrence is the
additional matrix information beyond the scalar BCH analogy.

\section{Notation and the main proof dependencies}\label{app:notation}

\begin{center}
\small
\begin{tabular}{p{0.18\textwidth}p{0.70\textwidth}}
\toprule
Symbol & Meaning\\\midrule
$s,m_2,m_3$ & Original variables, two-literal clauses, and three-literal clauses.\\
$m=m_2+m_3$ & Total number of clauses.\\
$n=s+m_3,v$ & Variables after clause auxiliaries, and number of quadratic forms.\\
$k,\nu,\ell$ & Power or moment index, quadratic-equation index, and recurrence order.\\
$t$ & Target vector for CCP or CVP.\\
$p,q=p^a$ & Fixed prime alphabet, and coefficient-field size.\\
$E=K(\zeta),c$ & Assignment field and degree over $K$.\\
$\alpha_i,\xi$ & Prime-field independent encoding directions, and $\Enc(x)$.\\
$d=p^b,r$ & Radix and number of label axes.\\
$c_p=\max\{2,p-1\}$ & Window constant, distinct from the extension degree $c$.\\
$L,\kappa$ & Row range $c_p(\kappa-1)+1$, and rank cutoff.\\
$J_0=L+\kappa-1$ & Moment count $(c_p+1)(\kappa-1)+1$; indices $0\le k<J_0$.\\
$B_j,\chi_j,\ell_j$ & Block column, its common recurrence polynomial, and its degree.\\
$\mathcal E_j$ & Joint root support within column $j$, of size at most $\ell_j\le\rho$.\\
$d^r\ge J_0$ & Ordinary-digit capacity for the required window.\\
$D=2r(d-1)(c-1)$ & Uniform interpolation-degree budget.\\
$S$ & Sample subset of $K$, with cardinality $|S|$.\\
$N$ & Number of prime-field table symbols and lattice rank.\\
$R_1,R_0$ & Honest weight upper bound and soundness threshold.\\
$\beta,\mathcal E$ & A recovered recurrence root and the set of active roots.\\
$A_\beta,a_\beta$ & Prime-field coefficient matrix and its nonzero $(0,0)$ entry.\\
\bottomrule
\end{tabular}
\end{center}
The weight bound is proved in Lemmas~\ref{lem:moment-evaluations}
and~\ref{lem:rank-weight}. Its determinant-multiplicity argument
is developed over an arbitrary field in
Lemma~\ref{lem:multiplicity} and
Proposition~\ref{prop:sample-rank-foundation}. The low-rank proof
uses the shared recurrence in Lemma~\ref{lem:block-column-atoms},
the finite-window reasoning of Lemma~\ref{lem:finite-hankel},
Lemma~\ref{lem:atoms}, and Proposition~\ref{prop:lagrange}.
Polynomial CRT, repeated-root
bases, and nilpotents explain the proof of Lemma~\ref{lem:atoms},
rather than being additional assumptions on its recurrence.
The mathematical prerequisites are proved in
Appendix~\ref{app:mathematics-background}.

\end{document}